\newtheorem{thm}{Theorem}
\newtheorem{lm}[thm]{Lemma}
\newtheorem{proposition}[thm]{Proposition}
\newcommand{\supp}{\mathrm{supp}}
\newcommand{\Tr}{\mathrm{Tr}}
\newcommand{\A}{\slashed{A}}
\newcommand{\be}{\begin{equation}}
\newcommand{\ee}{\end{equation}}
\def\Aslash{\slashed{A}} 
\def\@tvsp{\mathchoice{{}\mkern-4.5mu}{{}\mkern-4.5mu}{{}\mkern-2.5mu}{}}
\def\ltrivert{\left|\@tvsp\left|\@tvsp\left|}
\def\rtrivert{\right|\@tvsp\right|\@tvsp\right|}
\newcommand\lnorm[1]{\ltrivert#1\rtrivert}
\def\lquvert{\left|\@tvsp\left|\@tvsp\left|\@tvsp\left|}
\def\rquvert{\right|\@tvsp\right|\@tvsp\right|\@tvsp\right|}
\newcommand{\ie}{{\it i.e. }}
\newcommand\reallywidehat[1]{%
\savestack{\tmpbox}{\stretchto{%
  \scaleto{%
    \scalerel*[\widthof{\ensuremath{#1}}]{\kern.1pt\mathchar"0362\kern.1pt}%
    {\rule{0ex}{\textheight}}
  }{\textheight}%
}{2.4ex}}%
\stackon[-6.9pt]{#1}{\tmpbox}%
}
\title[Fermions in external electromagnetic fields]{Thermodynamical aspects of fermions\\ in external electromagnetic fields}
\author{Romeo Brunetti}
\address{Romeo Brunetti
\newline 
\indent
Dipartimento di Matematica, Universit\`a di Trento - Via Sommarive 14, I-38123 Povo (TN), Italy.}
\email{romeo.brunetti@unitn.it}
\author{Klaus Fredenhagen}
\address{Klaus Fredenhagen \newline
\indent II. Institut f\"ur Theoretische Physik, Universit\"at Hamburg, Luruper Chaussee 149, D-22761 Hamburg, Germany.}
\email{klaus.fredenhagen@desy.de}
\author{Nicola Pinamonti}
\address{Nicola Pinamonti \newline 
\indent Dipartimento di Matematica, Universit\`a di Genova - Via Dodecaneso 35, I-16146 Genova, Italy 
\newline
\indent
Istituto Nazionale di Fisica Nucleare - Sezione di Genova, Via Dodecaneso 33, I-16146 Genova, Italy.}
\email{nicola.pinamonti@unige.it}
\begin{document}
\begin{abstract}
The thermodynamics of Dirac fields under the influence of external electromagnetic fields is studied. For perturbations which act only for finite time, the influence of the perturbation can be described by an automorphism which can be unitarily implemented in the GNS representations of KMS states, a result long known for the Fock representation. For time-independent perturbations, however, the time evolution cannot be implemented in typical cases, so the standard methods of quantum statistical mechanics do not apply. Instead we show that a smooth switching on of the external potential allows a comparison of the free and the perturbed time evolution, and approach to equilibrium, a possible existence of non-equilibrium stationary states (NESS) and Araki's relative entropy can be investigated. 
As a byproduct, we find an explicit formula for the relative entropy of
gauge invariant quasi-free states.
\end{abstract}

\maketitle

\dedicatory{\emph{Dedicated to the memory of Huzihiro Araki}}

\section{Introduction}

The thermodynamical aspects of relativistic quantum field theory are still poorly understood. There exists a characterization of equilibrium states by the KMS condition \cite{HHW}, which can be strengthened to the relativistic KMS condition proposed by Bros and Buchholz \cite{BrosBuchholz-RelativisticKMS}. 
The KMS condition reduces to the Gibbs condition if the time evolution in the vacuum representation is implemented by a Hamiltonian with sufficiently fast increasing eigenvalues and no continuous spectrum. 
The KMS condition is a requirement for correlation functions of observables as functions of time, where the time dependence is induced by a 1-parameter group of automorphisms. Controlling the time evolution and thus giving equilibrium states for free field propagating on stationary spacetime is relatively straightforward, its use to describing equilibrium states of interacting theories requires, however, control of the interacting time evolution and its comparison to the free one. Thermofield theory (see e.g.\  \cite{LeBellac}) starts from this observation to construct interacting equilibrium states.    
Within perturbation theory, thermofield theory  offers a framework which
permits us to treat some of the thermodynamical aspects of free and interacting quantum fields. Thermofield theory is useful for deriving the form of some of the coefficients appearing in nonlinear transport equations that describe the evolutions of certain mean values, such as, e.g., the Boltzmann equation \cite{Calzetta-Hu}.
Another interesting application is the attempt to understand baryogenesis in the early universe \cite{ProkopecA, ProkopecB}.
We refer to \cite{Landsman, LeBellac, Kapusta} for reviews and for further details. 
Thermofield theory suffers, however, from divergences in higher loop orders \cite{Altherr}. 
A solution to the latter problem was obtained within perturbative algebraic quantum field theory (pAQFT) \cite{FL}. There it could be shown that methods from quantum statistical mechanics, originally developed for C*-algebras, could be applied also to the *-algebras occurring in perturbation theory. This allowed for a 
relativistic treatment of Bose-Einstein condensation \cite{BFP} as well as a proof of the existence of non-equilibrium steady states (NESS) \cite{DFP, DFP2}. Moreover, first applications to physical phenomena as \emph{e.g.} plasma oscillations could be obtained \cite{Galanda1}.

In quantum field theory, going beyond perturbation theory is notoriously difficult. In 4 dimensional spacetime, a model which is exactly solvable, and, at the same time, of some physical interest is the Dirac field in an external electromagnetic field. A better understanding might be useful {\it e.g.} for the study of
Tokamaks \cite{Tokamaks} or Stellarators \cite{Stellarators}.

But even in this rather simple model, the occurring interactions are so singular that the standard methods of quantum statistical mechanics cannot be directly applied. The interaction term of the Hamiltonian can usually be defined as a quadratic form but not as an operator, and the form sum with the free Hamiltonian does not define a selfadjoint operator. The commutator of the Hamiltonian with the Dirac field, however, induces a derivation of the algebra of canonical anticommutation relations (CAR-algebra) which can be integrated to an automorphism group. In the case of a purely electric potential, these automorphisms can be implemented in the vacuum representation by a strongly continuous 1-parameter group of unitaries. Its generator can be considered as the Hamiltonian for the interacting theory. But due to an infinite renormalization of the vacuum energy its domain has only trivial intersection with the domain of the free Hamiltonian \cite{FredenhagenQED1}, so also in this indirect way there is no possibility to define the interaction term as an operator.
In the case of non-vanishing spatial components of the vector potential the automorphisms cannot be implemented \cite{Ruijsenaars77, Fredenhagen, Laz}.

The non implementability manifests itself in infinities present in the perturbative expansion of the interacting time evolution. These divergences forbid to take the sum of the perturbative series.
In the paper \cite{DeckertMerkel} the authors 
circumvent the problem by using time dependent, and in general mutually inequivalent Fock space representations. But then typical thermodynamical quantities like relative entropy become ill defined.

Within pAQFT the relation between the free and the interacting time evolution is described by a suitable cocycle which can be given in terms of the relative S-matrix of the theory. 
In cases where this relative S-matrix is at disposal, one can thus find a way to compare the free and interacting time evolutions and consequently the free and interacting equilibrium states. 

In the case of a time dependent external potential, which vanishes outside of some compact region of spacetime, the situation regarding implementability is better. In this case, the S-matrix induced by the interaction exists under mild conditions on the external potential \cite{Ruijsenaars77}. We can thus follow the same strategy as in pAQFT where the perturbatively constructed S-matrices of compactly supported interactions could be used to define the interacting theory also for interactions with no restrictions on the support, with the difference that in the situation treated in this paper these S-matrices are unitary operators in a physically interesting representation and not just formal power series.

The physical interpretation of this procedure for the case of a time independent external field is the following. The conventional approach identifies the free and the interacting field at time zero. At positive times, the  time evolution describes
the instantaneous switching on of the interaction. This might heavily perturb equilibrium states of the free theory and results in the non-implementability results mentioned above.
Instead, we apply the regularized interaction picture introduced in \cite{FL} where one
uses a smooth switching on of the interaction in some time interval $[-\epsilon,0]$ and identifies the fields at time $t=-\epsilon$. An equilibrium state $\omega_0$ of the free theory at early times is then transformed to a non-equilibrium state $\omega'$ of the interacting theory at time zero whose time evolution can be investigated in terms of a strongly continuous unitary group.

In this paper we are interested in comparing equilibrium states with respect to time evolutions under the influence of an external electromagnetic potential with states arising from equilibrium states of vanishing external potential by smoothly switching on the external field.
This is done comparing and controlling the corresponding automorphisms and the cocycle which intertwines them given in terms of relative S-matrices of the theory. 
The analysis of the relative S-matrix is thus a mean for the investigation of thermodynamic properties of the Dirac system in an external potential.

The construction of the solution of the Dirac equation in an external time varying potential is well understood in the literature. The construction of states for this system is also well known and studied in the literature. See e.g. the work of \cite{FinsterMurroRoken}
where also a construction of a preferred Hadamard state for the time dependent problem is obtained making use of spectral properties of the Fermionic projectors. However, the states obtained in that way are far from being equilibrium states and construction of equilibrium states adapting those methods does not seem to be a straightforward procedure. 
We remind the reader that already in the case of a free Dirac field in Minkowski spacetime the equilibrium states cannot be constructed by means of trace class density operators on the vacuum Fock spaces. The equilibrium states characterized by the KMS condition with inverse temperature $\beta$ give in fact the origin to mutually inequivalent representations of the corresponding observable algebra.  

Returning to the analysis of the state $\omega'$ evolved with the time evolution perturbed by the switched-on electromagnetic potential, we study the question whether, at a late time when the external potential is constant in time, it converges to an equilibrium state $\omega$ of the interacting theory . 

We shall actually see that if the external potential has spatially compact support and if it is sufficiently small in a suitable sense, convergence to equilibrium can be established.

To prove return to equilibrium in the non-perturbative way, we analyze the validity of the Cook criterion, see e.g. \cite{Kato}, and in order to prove that criterion we shall study the decay in time of the time dependent perturbation sourced by space compact external potential on static solutions of the Dirac equation.
Similar methods have been already used in the literature since long \cite{Prosser}. A more recent application is 
the investigation of the decay in time of solutions of the Dirac equations on curved background 
which can be found e.g. in  \cite{Nicolas} where the scattering of linear Dirac fields by spherically symmetric black holes is studied.

We furthermore 
determine Araki's relative entropy $\mathscr{S}(\omega',\omega)$ \cite{Araki, Araki2}\footnote{In the case of CAR algebras, the relative entropy computation of excited states with respect to equilibrium states can be found in \cite{Galanda}, (see also \cite{Longo, Casini, CiolliLongoRuzzi, Hollands, Frob1}).}, which is the relative entropy of the state $\omega$ w.r.t. $\omega'$. We show that for sufficiently small external fields convergence to equilibrium holds, but not in general if the interaction leads to bound states.  There non-equilibrium stationary states (NESS) \cite{Ruelle} can occur.

We can furthermore use the obtained formula for the relative entropy to estimate the entropy production (see e.g. \cite{JP01, JP02, JP03}) in the case of evolutions which do not show return to equilibrium.

The relative entropy for states which are described by suitable traceclass density operators and which are also quasi-free 
can be given in terms of the one particle density operator. Work in this direction has already been done in the literature, see e.g. \cite{Finster,FinsterLottinerSobolev,PfeifferSpitzer,Helling,Muller}. 
However, those methods cannot be used directly for states which are not described by suitable trace class density operators on the vacuum Fock space.  

On the other hand, the formula for Araki's relative entropy we use in this paper usually holds for normal states of suitable von Neumann algebras describing the observables of the theory and it is particularly useful when it is used for states which are not quasi-equivalent to the vacuum state.

In this paper we derive a formula for Araki's relative entropy for the perturbed equilibrium states with respect to the free equilibrium state after switching on the field. We compare it with the formula in \cite{Finster}. In both cases the formulas are given in terms of the one particle density operator. But applying the latter formula to the case considered in our paper one finds infrared divergences. 
In fact, the formula derived in our paper reduces to the results of \cite{Finster} when states described by trace class density operators are considered. 

The paper is organized as follows. In the next section, after recalling some facts about the 
(non) implementability of the M\"oller operators, 
which intertwine the free Dirac field and the Dirac field with an external potential, 
 we show how to implement the interacting time evolution on the Fock space of the free vacuum state in case the external potential is switched on adiabatically. We find that the interaction Hamiltonian can be described by a suitable one-particle Hilbert space operator, and we discuss some of its properties.

In Section \ref{se:equilibrium-states} we discuss the construction of equilibrium states for the interacting time evolution. 
In Section \ref{se:NESS}
we discuss the validity of the Cook criterion which permits to obtain return to equilibrium and stability properties of the equilibrium states. 
Furthermore, we show how one obtains NESS when the interacting Dirac operator has bound states. 
Section \ref{eq:relative-entropy} contains a formula for the relative entropy among equilibrium states of different interacting hamiltonians, its use in combination with the NESS and the entropy production.

We finally derive an explicit formula for the relative entropy between gauge invariant quasi-free states which reminds on the Kullback-Leibler divergences \cite{KL}.
The formula is a generalization of a similar result for quasi free states in a simpler situation studied in \cite{Finster}.

\section{Implementability in the Fock space}
\subsection{The regularized interaction}
We review the construction of a well-defined interaction in the framework of dynamical algebras \cite{BF19}. There, the starting point is a Lagrangian $L$ and potential interactions $F$ viewed as spacetime integrals of compactly supported polynomial densities over an appropriate configuration space (in the following, polynomials may contain kinetic terms that locally perturb the metric structure implicit in $L$)\footnote{We shall identify the interactions with the densities making use of the Stora's identity a.k.a. Action Ward Identity \cite{Stora,DF}.}. Then one introduces S-matrices $S(F)$ interpreted operationally as unitary operations that perform local changes of the dynamics defined by $L$ via the local interaction $F$. These S-matrices can be implicitly defined in terms of the following algebraic relations:
\begin{description}
    \item[Causal Factorization] $S(F-G+H)=S(F)S(G)^{-1}S(H)$ if $\supp(F-G)$ is later than $\supp(H-G)$ with respect to the causal order induced by the interaction $G$.
    \item[Dynamics] $S(F)=S(F^{u}+\delta_u L)$ where $L$ is the underlying Lagrangian, $u$ denotes a compactly supported field configuration, $F^u(\bullet)\doteq F(\bullet + u)$ the shift in the interaction and $\delta_u L$ the difference between the shifted and the original Lagrangian.
    \item[Normalization] $S(c)=e^{ic}\mathbf{1}$ for the constant interaction $c\in\mathbb R$.
\end{description}
These elements generate a group, and the associated group algebra admits a maximal C*-norm and can be completed to a C*-algebra satisfying the Haag-Kastler axioms.
For the free Lagrangian and the restriction to linear interactions ({\emph{sources}) one obtains the Weyl algebra. In perturbation theory, one obtains a representation of the underlying group algebra in terms of formal power series, and in a few cases including the sine Gordon model in 2 dimensions \cite{BaRe,BaFreRe} as well as free theories with quadratic interactions \cite{BF21}  one finds representations of the dynamical algebra in a separable Hilbert space. 

For theories with fermions one needs to enrich the framework by auxiliary Grassmann parameters \cite{BDFR-Fermi,AB2}. For gauge theories, the construction is more complicated \cite{BDFR-Gauge}.

We start from a time-independent Lagrangian $L$ and consider the case of a time-independent interaction density $V$ with compact spatial support (and without a kinetic term). We study the process of switching on the interaction before $t=0$ by multiplying $V$ with a smooth function $h(t)$ with $h(t)=1$ for $t>0$ and $h(t)=0$ for $t<-\epsilon$. 

Following \cite{BF19},
the interacting observables $S_{hV}(F)$ are, according to Bogoliubov's formula, described by 
\be
S_{hV}(F)\doteq S((h-h_T)V)^{-1}S((h-h_T)V+F)
\ee
with $h_T(t)=h(t-T)$ and $T$ sufficiently large such that $\supp h_T$ is later than $\supp F$. The right-hand side does not depend on the choice of $T$ due to Causal Factorization.

The unperturbed time involution acts by an automorphism $\alpha_t$, 
\be
\alpha_t(S(F))=S(F_t)
\ee
where $F_t$ is the interaction $F$ shifted by $t$. For $t>0$ and interactions $F$ with the entire support for positive times, the perturbed time evolution $\alpha_t^V$ is
\be
\alpha_t^V(S_{hV}(F))=S_{hV}(F_t)\ .
\ee
The time evolution $\beta_t^{hV}$ in the interaction picture for $t>0$ 
\be
\alpha_t^V=\beta_t^{hV}\circ\alpha_{t}
\ee
is unitarily implemented,
\be
\begin{split}
&\alpha_t^V(S_{hV}(F))=S((h-h_T)V)^{-1}\ \ \cdot 1\cdot S((h-h_T)V+F_t)\\
&=S((h-h_T)V)^{-1}\overbrace{S((h_t-h_T)V)S((h_t-h_T)V)^{-1}}\ S((h_t-h_T)V+F_t+(h-h_t)V)\\
&=S((h-h_T)V)^{-1}S((h_t-h_T)V)S((h_t-h_T)V)^{-1}\overbrace{S((h_t-h_T)V+F_t)S((h_t-h_T)V)^{-1}S((h-h_T)V)}\\
&=\mathrm Ad(U_t)\alpha_t (S_{hV}(F))
\end{split}
\ee
with $U_t\doteq S((h-h_T)V)^{-1}S((h_t-h_T)V)$, $T>t+2\epsilon$.

$U_t$ satisfies for $s,t>0$ the cocycle relation
\be
U_{t+s}=U_t\alpha_t(U_s)\ .
\ee
The cocycle can be extended to the full real axis by defining $U_{-t}\doteq\alpha_{-t}(U_t^{-1})$.
In a representation where $\alpha_t$ is unitarily implemented by a strongly continuous 1-parameter group $W$ and where $t\mapsto U_t$ is strongly continuous, $t\mapsto U_t W(t)$ is a strongly continuous unitary 1-parameter group and its self-adjoint generator is the interacting Hamiltonian. 

Finally, we observe that for $t>0$

\be\label{eq:S(K)}
U_t=S_{(h-h_T)V}((h_t-h)V)=S_{(h-h_T)V}(\int_0^t \frac{d}{ds}h_s V ds)
=S_{(h-h_T)V}(-\int_0^t \dot{h}_s V ds)
\ee
where $\dot{h}_s$ is the time derivative of the smooth cutoff function $h$ which describes the process of switching on the interaction. It is of compact time support by construction. The generator of the cocycle $U_t$, that is, the time derivative of $U_t$ evaluated at $t=0$ and multiplied by $i$, if it exists in the representation, is denoted by $K$ and represents the interaction Hamiltonian.

We will apply this general procedure to the example of a Dirac field in an external electromagnetic field.

\subsection{Implementability of S-matrices}
The Dirac field in an external electromagnetic potential $A=(A^0,\vec A)$ can be described by the CAR algebra over the Hilbert space $\mathcal H=L^2(\mathbb{R}^3,\mathbb{C}^4)$. It is the C*-algebra generated by elements $\psi(f),f\in\mathcal H$, which depend anti-linearly on $f$, with the relation
\be
(\psi(f)+\psi(f)^*)^2=\langle f,f\rangle\ .
\ee
$\psi(f)$ is interpreted as the time-0-Dirac field smeared with a test function $f$. For the relation between time-0-Dirac fields and Dirac fields smeared with test functions supported in compact spacetime regions we refer to the paper of Dimock \cite{Dimock}.

A treatment of the CAR algebra for Fermi fields in terms of the self-dual approach due to Araki can be found in \cite{ArakiSelfDual}, see also the book \cite{ArakiBook}.

The Dirac field $\psi_{t,A}$ at time $t$ under the influence of the external potential $A$ is obtained as a solution to the equation
\be\label{eq:evolution-dirac}
i\frac{d}{dt}\psi_{t,A}(f)=\psi(i\frac{d}{dt}f_t)
\ee
where 
\be
i\frac{d}{dt}f_t=(D+\A_t)f_t
\ee
with the time dependent Dirac Hamiltonian
\be
D+\A_t=-i\vec\alpha\cdot\vec\partial+\beta m+\vec\alpha\cdot\vec A_t +A^0_t
\ee
and the electromagnetic potential $A_t$ at time $t$.
We restrict ourselves to bounded and continuous potentials $A$, then the Dirac Hamiltonian is for every $t$ described by a self-adjoint operator \cite{Kato51,Kato}. 

We want to compare the CAR algebra generated by the Dirac field which solves  \eqref{eq:evolution-dirac} with $D+A_t$, and the CAR algebra generated by the field which solves the evolution equation with the Dirac Hamiltonian $D$, namely with vanishing external potential. We call them \emph{interacting} and \emph{free} Dirac fields and denote them  by
$\psi_{t,A}(f)$ and $\psi_{t}(f)$, respectively. For a potential $A$ which vanishes at early times we require $\psi_{t,A}=\psi_t$ for $t\to-\infty$. The fields at other times are then related by
\be\label{eq:psi interacting}
\psi_{t,A}(f)=\psi_{t}(V_t(A)f)
\ee
where the unitary $V_t(A)$ is obtained by
the (operator norm converging) Dyson formula
 \be\label{eq:V_t}
V_t(A)=\sum_{n=0}^\infty i^n\int_{t_1<\dots<t_n<t}dt^n 
e^{iDt_1}\A(t_1)e^{iD(t_2-t_1)}\cdots \A(t_n)e^{-iDt_n}
\ee
with $\A(t_i)=
\vec\alpha\cdot\vec A_{t_i} +A^0_{t_i}
=\gamma_0
\gamma^\mu A_\mu(t_i,\bullet)$, considered as a matrix-valued multiplication operator. 

We now consider the Fock space over the vacuum state for the CAR Algebra generated by the fields satisfying the equation of motion with vanishing external potential.
We recall some conditions which imply that the CAR Algebra generated by fields satisfying the equation of motion with external potential can be represented by suitable operators on the Fock space of the free theory.
To this end we recall that the vacuum state $\omega$ of the free CAR algebra is given by 
\be
\omega(\psi(f_1)\cdots\psi(f_n)\psi(g_m)^*\cdots\psi(g_1)^*)=\delta_{nm}\det(\langle f_i,Pg_j\rangle)\ 
\ee
where $P$ is the projection on the spectral subspace for the positive half axis of the operator $D$.
A unitary operator $U$ on $\mathcal{H}$ induces an automorphism $\alpha_U$ of the CAR algebra by
\be
\alpha_U(\psi(f))=\psi(Uf)\ .
\ee
It can be implemented in the GNS representation $(\mathfrak{H},\pi,\Omega)$ induced by $\omega$ if and only if $[P,U]=PUQ-QUP$ is Hilbert-Schmidt (Shale-Stinespring criterion \cite{SS}), with $Q=1-P$. The unitary $\Gamma(U)$ that implements $U$ is unique up to a phase, since the representation is irreducible.
The Shale-Stinespring criterion holds for $V_\infty(A) \doteq U(A)$ if $A$ has compact support in time \cite{Ruijsenaars77,Fredenhagen,Palmer}. 

We want to obtain explicit control over the 
Hilbert-Schmidt norm of $PU(A)Q$ for a potential $\A$ that is smooth and compactly supported. We first observe that the Dyson series converges in norm, and the derivatives of $A\mapsto U(A)$ are distributions with values in $B(\mathcal{H})$, satisfying the estimate
\be
||\langle\frac{\delta^n}{\delta A^n}U(A),(A')^{\otimes n} \rangle||\le
\left(\int dt'||\A'(t')||\right)^n e^{\int dt ||\A(t)||}\ .
\ee
In order to show that $PU(A)Q$ is a Hilbert-Schmidt operator, we follow the calculation in \cite{Seipp} (cf. also \cite{Scharf}). For $QU(A)P$ an analogous argument holds.

\begin{proposition}\label{prop:U-HS}
Let $A$ be smooth and compactly supported. Then the Hilbert-Schmidt norm of $PU(A)Q$ is bounded by
\begin{equation}\label{eq:HS-estimate}
    ||PU(A)Q||_{HS}\le \frac{\pi}{\sqrt{m}}\frac{d^2}{d\lambda^2}\bigg|_{\lambda=0}e^{||A||_{I}+\lambda||\dot A||_{I}+\frac12\lambda^2||\ddot A||_{I}}
\end{equation}
with \begin{equation}
    ||A||_{I}\doteq\mathrm{max}\left(\int dt d^3k |\hat{\A}(t)|(k)\ ,\sqrt{\int d^3k(\int dt|\hat{\A}(t)|(k))^2}\right)
\end{equation}
with the spatial Fourier transform $\hat A$ of $A$ and where $|\bullet|$ denotes the norm of the matrix at each value of $k$ and $t$.
\end{proposition}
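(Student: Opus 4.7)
The plan is to expand $U(A)$ via the Dyson series~\eqref{eq:V_t} and analyze $PU(A)Q$ order by order. Inserting the resolution $P+Q=\mathbf 1$ between every pair of consecutive factors $\A_H(t_i)=e^{iDt_i}\A(t_i)e^{-iDt_i}$, the $n$-th order contribution decomposes as a sum over sign patterns $\sigma\in\{+,-\}^{n+1}$ with $\sigma_0=+$ and $\sigma_n=-$. Since $[D,P]=0$, the projections commute with the free evolution and each block $P_{\sigma_{i-1}}\A_H(t_i)P_{\sigma_i}$ is preserved; every admissible pattern carries an odd number of ``sign flips'' $i$ with $\sigma_{i-1}\ne\sigma_i$.

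The key mechanism is integration by parts in time at flip positions. Between a positive-energy eigenstate at $\omega(k)$ and a negative-energy eigenstate at $-\omega(k')$, the oscillatory phase $e^{i(\omega(k)+\omega(k'))t}$ accumulates, and since the Dirac gap satisfies $\omega(k)+\omega(k')\ge 2m>0$, each IBP trades a factor $(\omega(k)+\omega(k'))^{-1}$ in the kernel for one time derivative on the potential. A single IBP does not yield a finite Hilbert--Schmidt norm, because the resulting spatial integral over the energy shift still diverges; two IBPs, on the other hand, produce $(\omega(k)+\omega(k'))^{-4}$ in the squared kernel, and the three-dimensional estimate
\be
\int d^3k\,(2\omega(k))^{-4}=\frac{\pi^2}{16\,m}
\ee
furnishes the prefactor of order $\pi/\sqrt m$. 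The two IBPs can be distributed either (i) both on the same flip factor --- producing an $\ddot\A$ insertion --- or (ii) one at each of two distinct positions --- producing a $\dot\A\cdot\dot\A$ structure. The remaining non-differentiated factors are bounded in operator norm by $\|\A(t)\|_{op}\le\int d^3k\,|\hat\A(t,k)|$, whose time integral reproduces the $L^1_tL^1_k$ ingredient of $\|A\|_I$, while the IBP'd factor(s), after Plancherel in one spatial variable and the elementary bound $|\tilde\A(\Omega,q)|\le\int dt\,|\hat\A(t,q)|$ on the temporal Fourier transform, reproduce the $L^2_kL^1_t$ ingredient, explaining the $\max$ in the definition of $\|A\|_I$. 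Summing over the Dyson order $n$, over admissible sign patterns, and over the positions and type of the two IBPs, the combinatorics organize themselves as the generating-function identity
\be
\tfrac{d^2}{d\lambda^2}\bigg|_{\lambda=0}\,e^{\|A\|_I+\lambda\|\dot A\|_I+\tfrac12\lambda^2\|\ddot A\|_I}=e^{\|A\|_I}\bigl(\|\ddot A\|_I+\|\dot A\|_I^2\bigr),
\ee
where the exponential collects the non-differentiated factors and the $\lambda$-derivatives mark the IBP allocations.

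The main technical obstacle is the bookkeeping of IBP within the Dyson simplex $t_1<\cdots<t_n$: since IBP in one time variable produces boundary contributions at $t_i=t_{i\pm1}$, these terms couple different orders and must be reorganized carefully to fit the clean generating-function form above. A secondary issue is verifying that the two IBPs, when placed at two distinct flips, do supply enough decay in the kernel to make the Hilbert--Schmidt norm finite after distributing the HS and operator norms across the product; here one exploits that the boundary terms produced by IBP against the time ordering can themselves be reabsorbed as effective extra derivatives at the adjacent flip. Absolute convergence of the resulting double series (in the order $n$ and in the number of IBPs) is guaranteed by the exponential decay coming from the Dyson factorials, once the combinatorial bookkeeping is organized as above.
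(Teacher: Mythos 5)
Your overall mechanism (Dyson expansion, integration by parts exploiting the $\geq 2m$ gap between positive and negative energy subspaces, the $\lambda$-generating-function bookkeeping of where the two derivatives land, and the final $\int d^3p\,(p^2+m^2)^{-2}=\pi^2/m$ estimate) is the right one, but the specific route — inserting $P+Q$ between all factors and integrating by parts in the individual $t_i$ at flip positions — runs into exactly the two obstacles you yourself flag, and neither is resolved by what you write. The paper avoids both with a single change of variables: it passes to the barycenter $s=(t_1+\cdots+t_n)/n$ and the relative times $s_i=t_{i+1}-t_i$, and integrates by parts \emph{twice in $s$ only}. Since the intermediate free evolutions $e^{iD(t_{i+1}-t_i)}=e^{iDs_i}$ do not depend on $s$, the entire $s$-dependence of the kernel $\langle p|PU_n(A)Q|q\rangle$ sits in the outer phases and in the arguments of the potentials, and the oscillation frequency in $s$ is exactly $\omega(p)+\omega(q)$ — the \emph{external} momenta. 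This gives the uniform factor $(\omega(p)+\omega(q))^{-2}$ no matter how the two derivatives distribute over the $\A(t_i)$'s, and since $s$ ranges over all of $\mathbb{R}$ (only the $s_i$ carry the simplex constraint) the boundary terms vanish by the compact time support of $A$. No resolution of the identity between factors is needed.

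Concretely, the gaps in your version are: (i) integration by parts in $t_i$ on the simplex $t_1<\cdots<t_n$ produces boundary terms at $t_i=t_{i\pm1}$ that couple Dyson orders; the claim that they "can be reabsorbed as effective extra derivatives at the adjacent flip" is an assertion, not an argument, and it is not clear it is true. (ii) In the mixed case of one derivative at each of two distinct flips, you gain only $(\omega(k_a)+\omega(k_{a+1}))^{-1}$ at each of two \emph{different intermediate} momentum pairs; a single inverse power of $\omega$ is not square-integrable in three dimensions, and because the two factors sit at different places in the operator product they cannot simply be multiplied into a single $(\omega(p)+\omega(q))^{-2}$ — establishing the Hilbert–Schmidt property here would need a genuine additional argument (e.g.\ Schatten–H\"older interpolation), which you do not supply. (iii) Even setting (i) and (ii) aside, estimating each of the $2^{n-1}$ sign patterns separately and summing the bounds inflates the $n$-th order estimate by $2^{n-1}$, so you would at best obtain the stated inequality with $\|A\|_I$ replaced by $2\|A\|_I$ in the exponent, not the bound as claimed. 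I recommend reworking the proof around the barycentric variable; the generating-function identity you wrote down then comes out directly from $\tfrac{d^2}{d\lambda^2}\big|_{\lambda=0}U_n(A+\lambda\dot A+\tfrac12\lambda^2\ddot A)$.
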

\begin{proof}
We use the Dyson series \eqref{eq:V_t} for $t=\infty$.
For the $n$-th term $U_n(A)$, we introduce new integration variables $s=(t_1+\dots +t_n)/n$ and $s_i=t_{i+1}-t_i$, $i=1,\ldots, n-1$. The Jacobian is 1, and the integration domain is $\mathbb{R}$ for $s$ and $\mathbb{R}_-$ for $s_i$, $i=1, \ldots, n-1$. 
We partially integrate twice over $s$. The boundary terms vanish due to the compactness of the support of $A$. 
We obtain for the matrix valued momentum space integral kernel of $PU_n(A)Q$ the expression
\be
(PU_n(A)Q)(p,q)=(\sqrt{p^2+m^2}+\sqrt{q^2+m^2})^{-2}\frac{d^2}{d\lambda^2}\bigg\vert_{\lambda=0}(PU_n(A+\lambda \dot{A}+\frac12 \lambda^2\ddot{A})Q)(p,q)\ .
\ee
The integral kernel of $U_n(A)$ and its derivatives
can be estimated by
\be\label{eq:25}
|\langle\frac{\delta^n}{\delta A^n}U_n(A),(A')^{\otimes n} \rangle(p,q)|\le
\left(\left(\int dt'|\hat{\A'}(t')|\right)^{*n}* \exp_*{\int dt |\hat{\A}(t)|}\right)(p-q)\ 
\ee

where $*$ denotes convolution, $\hat{\A}(t)$ is the matrix valued spatial Fourier transform of $\A(t)$.

We restrict ourselves to the case of a non-vanishing mass $m$. Then
\begin{equation}
    \int d^3p (\sqrt{p^2+m^2}+\sqrt{q^2+m^2})^{-4}\le \int d^3p(p^2+m^2)^{-2}=\frac{\pi^2}{m}
\end{equation}
and the Hilbert-Schmidt norm can be estimated in term of the $L^2$-norm of the right hand side of \eqref{eq:25} as a function of $p-q$. 
Then the Hilbert-Schmidt norm of $PU(A)Q$ is bounded by
\begin{equation}
    ||PU(A)Q||_{HS}\le \frac{\pi}{\sqrt{m}}\frac{d^2}{d\lambda^2}\bigg|_{\lambda=0}e^{||A||_I+\lambda||\dot A||_I+\frac12\lambda^2||\ddot A||_I}\ .
\end{equation}
\end{proof}

\subsection{Construction of the interacting time evolution in Fock space}

We now consider an external potential $A$ which is compactly supported in space, which is constant in time for $t\geq 0$, which vanishes at times $t<-\epsilon$ and which is smoothly switched on in the time interval $[-\epsilon,0]$. 

We construct the cocycle described in subsection 2.1. Let $A_t(t',x)\doteq A(t'-t,x)$. Then the cocycle is the product of the S-matrices $S(A-A_T)^{-1}S(A_t-A_T)$ for $T$ sufficiently large. It induces on the one-particle space the unitary 
\be\label{eq:U_t} 
U_t\doteq U(A-A_T)U(A_t-A_T)^{-1}\ ,\ t>0\ .
\ee
$U_t$ is independent of $T$ if it is chosen to be sufficiently large. $(U_t)$ is indeed a cocycle, as may be seen from the following calculation, where we exploit the independence from the choice of $T$: 
\be
\begin{split}
U_t\, e^{itD}U_s\, e^{-itD}&= U(A-A_T)U(A_t-A_T)^{-1}e^{itD}U(A-A_T)U(A_s-A_T)^{-1}e^{-itD}\\
&=U(A-A_T)U(A_t-A_T)^{-1}U(A_t-A_{T+t})U(A_{t+s}-A_{T+t})^{-1}\\&=U(A-A_T)U(A_t-A_T)^{-1}U(A_t-A_T)U(A_{t+s}-A_{T})^{-1}\\
&=U(A-A_T)U(A_{t+s}-A_{T})^{-1}=U_{t+s}\ .
\end{split}
\ee
We may split $U_t$ into factors corresponding to a splitting of the time intervals,
\be
U(A-A_T)=U(A\chi_{[-\epsilon,0]})\,U(A\chi_{[0,T-\epsilon]})U((A-A_T)\chi_{[T-\epsilon,T]})\ ,
\ee
with the characteristic function $\chi_I$ of an interval $I$, and in the same way
\be
U(A_t-A_T)=U(A_t\chi_{[t-\epsilon,t]})\,U(A\chi_{[t,T-\epsilon]})U((A-A_T)\chi_{[T-\epsilon,T]})\ .
\ee
We use the facts that
\be
U(A_t\chi_{[t-\epsilon,t]})=e^{iDt}U(A\chi_{[-\epsilon,0]})e^{-iDt}
\ee
and
\be
U(A\chi_{[0,T-\epsilon]})U(A\chi_{[t,T-\epsilon]})^{-1}=U(A\chi_{[0,t]})=e^{i(D+\A_+)t}e^{-itD}
\ee
with the time independent potential $A_+$ for positive times.
Then we get 
\be\label{eq:new-UTonH}
U_t=V_0(A)e^{i(D+\A_+)t}V_0(A)^{-1}e^{-iDt}
\ee
for the cocycle on the one-particle space with $V_0(A)=U(A\chi_{[-\epsilon,0]})$ as in \eqref{eq:V_t}.

The automorphism $\alpha_{U_t}$ is implementable in the vacuum representation of the free Dirac field, since it is a composition of two implementable automorphisms according to \eqref{eq:U_t}. Thus, also the interacting time evolution for $t>0$
can be defined in this representation. Note that this is due to the smooth switching described by the unitary $V_0$. An instantaneous switch on at time zero, corresponding to a coincidence of the free and the interacting field at time zero, would cause ultraviolet divergences that exclude the implementation of the interacting time evolution in the case of a non-vanishing spatial part of $A$. In the case of a purely electric external field, where an implementation is possible, the domains of the interacting Hamiltonian and of the free Hamiltonian have trivial intersection and therefore also there the interaction term cannot be defined as an operator.

We look at the smoothed interaction
$\mathcal{K}$ which satisfies 
\[
        V_0(A) e^{i t(D+\A_+)}
        =    e^{i t(D+\mathcal{K})}     V_0(A), \qquad t\geq0 .
\]
We want to study the operator $\mathcal{K}$ and its implementability as a selfadjoint operator on the Hilbert space $\mathfrak{H}$ of the GNS representation of the vacuum.

The operator $\mathcal{K}$ will later be used to construct the interacting equilibrium state as a vector state in the GNS Hilbert space of the KMS state of the free theory. We then can investigate the relative modular operator and Araki's relative entropy.

$\mathcal{K}$ is defined as the generator of the cocycle $U_t$
\begin{equation}
\mathcal{K}= 
-i\left.\frac{d}{dt}U_t
\right|_{t=0}=-i U(A-A_T)\left.\frac{d}{dt}U(A_t-A_T)^{-1}\right|_{t=0} 
\end{equation}
We have
\be
\left.\frac{d}{dt}U(A_t-A_T)^{-1}\right|_{t=0}=\int_{-\epsilon}^0 ds\, U((A-A_T)\chi_{[s,T]})^{-1}e^{isD}\dot{\A}(s)e^{-isD}U(A\chi_{[-\epsilon,s]})^{-1}\ ,
\ee
where we used the fact that $\dot A$ vanishes outside of the time interval $[-\epsilon,0]$, and therefore, with the factorization $U(A-A_T)=U(A\chi_{[-\epsilon,s]})U((A-A_T)\chi_{[s,T]})$ and with $U(A\chi_{[-\epsilon,s]})=V_s(A)$ according to \eqref{eq:V_t}
\be\label{eq:new-defK}
\mathcal K=\int_{-\epsilon}^0 ds\, V_s(A)e^{isD}\dot\A(s)e^{-isD}V_s(A)^{-1}\ .
\ee
We find that $\mathcal K$ is a bounded selfadjoint operator with $||\mathcal K||\le\int_{-\epsilon}^0 ds||\dot\A(s)||$.

\begin{lm}\label{lm:Kbounded}
Assume that the electromagnetic potential $A_\mu(t,\bullet)$ is smooth, of compact spatial support, constant for $t>0$ and vanishing for $t<-\epsilon$.
For the operator $\mathcal{K}$ constructed in \eqref{eq:new-defK} with $\Aslash=\gamma^0\gamma^\mu A_{\mu}$, the following estimates hold for its integral kernel in the Fourier domain,
\begin{align}\label{eq:ineq-K-H}
|\hat{\mathcal{K}}(p,q)| \leq 
H(p-q)
\end{align}
where
\begin{align}
\label{eq:defH}
H(p-q)
\doteq  
\exp_* \left(\int_{-\epsilon}^0 dt |\hat\Aslash(t)|\right) * \int dt'  |{\hat {\dot {\Aslash}}}(t')| 
* 
\exp_* \left(\int_{-\epsilon}^0 dt |\hat\Aslash(t)|\right) 
(p-q)  
\end{align}
and its $L^1$, $L^2$ norms can be estimated by 
\begin{equation}\label{eq:estimateH2}
\|H\|_1 \leq  \epsilon \lnorm{\dot\Aslash}
e^{2\epsilon \lnorm{\Aslash}}
,
\qquad
\|H\|_2 \leq  \epsilon \lnorm{\dot\Aslash}
e^{2\epsilon \lnorm{\Aslash}}
\end{equation}
where  the 
norm $\lnorm{\cdot}$ is

\begin{equation}\label{eq:lnorm}
\lnorm{A}\doteq\mathrm{max}\left(\sup_t \int d^3k |\hat{A}(t)|(k)\ ,\sup_t \sqrt{\int d^3k(|\hat{A}(t)|(k))^2}\right)\ .
\end{equation}
\end{lm}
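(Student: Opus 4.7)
The plan is to control the Fourier kernel of $\mathcal{K}$ via the Dyson expansions of $V_s(A)$ and $V_s(A)^{-1}$, in the spirit of the proof of Proposition \ref{prop:U-HS}. The central factor $e^{isD}\dot\A(s)e^{-isD}$ is handled first: since $e^{\pm isD}$ is a unitary multiplier in momentum space (the $4\times 4$ matrix $D(p)$ exponentiates to a unitary at each $p$) and $\dot\A(s)$ is a matrix-valued multiplication operator in position space with momentum-space kernel $\hat{\dot\Aslash}(s)(p-q)$, the entry-wise modulus of the kernel of this factor is dominated pointwise by $|\hat{\dot\Aslash}(s)|(p-q)$. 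For $V_s(A)$ I would expand via \eqref{eq:V_t} and observe that, once absolute values are taken, the integrand is symmetric in the time variables, so the time-ordered domain $\{t_1<\cdots<t_n\}$ contributes a factor $1/n!$ relative to the full $\mathbb{R}^n$. Because operator composition becomes convolution in the transferred momentum $p-q$, this yields
\[
|\widehat{V_s(A)}(p,q)|\le\sum_{n\ge 0}\frac{1}{n!}\left(\int_{-\epsilon}^{0}dt\,|\hat\Aslash(t)|\right)^{*n}\!(p-q)=\exp_*\!\left(\int_{-\epsilon}^{0}dt\,|\hat\Aslash(t)|\right)(p-q),
\]
and the same bound applies to $V_s(A)^{-1}$, whose Dyson series differs only by signs that the absolute values absorb. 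Convolving the three factors and carrying out the outer $s$-integration over $[-\epsilon,0]$ produces precisely the majorant $H(p-q)$ of \eqref{eq:defH}, with $|\hat{\dot\Aslash}(s)|$ extended to all $t'\in\mathbb{R}$ by the support hypothesis on $\dot A$.

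For the norm bounds I would apply Young's convolution inequality. Iterating $\|f*g*h\|_1\le\|f\|_1\|g\|_1\|h\|_1$, the two exponential factors contribute $\|\exp_*(\int dt\,|\hat\Aslash(t)|)\|_1\le e^{\epsilon\lnorm{\Aslash}}$ (estimated term by term using $\|(\int dt\,|\hat\Aslash(t)|)^{*n}\|_1\le(\epsilon\lnorm{\Aslash})^n$), while Minkowski's integral inequality yields $\|\int dt'\,|\hat{\dot\Aslash}(t')|\|_1\le\epsilon\lnorm{\dot\Aslash}$. For $\|H\|_2$ I would invoke Young in the form $\|f*g*h\|_2\le\|f\|_1\|g\|_2\|h\|_1$, placing the $L^2$ slot on the $\dot A$ factor; Minkowski again delivers $\|\int dt'\,|\hat{\dot\Aslash}(t')|\|_2\le\epsilon\lnorm{\dot\Aslash}$. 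Both routes terminate at the advertised value $\epsilon\lnorm{\dot\Aslash}\,e^{2\epsilon\lnorm{\Aslash}}$.

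The principal technical point is the careful assignment of $L^1$ versus $L^2$ roles among the three convolution factors, and the recognition that the $\sup_t$ form of $\lnorm{\cdot}$ controls both the exponential factors (in $L^1$, via the term-wise geometric bound) and the time-integrated central factor (in $L^1$ or $L^2$, via Minkowski), so that the two estimates close with the identical prefactor $e^{2\epsilon\lnorm{\Aslash}}$. Once this matching is in place the remaining steps are routine bookkeeping of convolutions and time integrals.
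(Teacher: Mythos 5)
Your proof is correct and follows essentially the same route as the paper: expand $V_s(A)$ and $V_s(A)^{-1}$ via the Dyson series, bound the momentum-space kernel using that $e^{itD}$ is unitary and $\A$, $\dot\A$ are multiplication operators whose kernels depend only on $p-q$, symmetrize the time-ordered domain to get the $1/n!$ and hence the $\exp_*$ structure, and then apply Young's inequality (placing the $L^2$ slot on the central $\dot A$ factor and the $L^1$ slots on the two exponential factors) together with Minkowski's integral inequality to obtain the stated norm bounds. The paper's proof is terse and leaves exactly these details implicit; your write-up supplies them correctly.
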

\begin{proof}
The inequality \eqref{eq:ineq-K-H} for the integral kernel of $K$ follows from the expansion of the $V_s(A)$ given in equation \eqref{eq:V_t}, observing that $e^{i t D}$ are bounded by $1$ and $\A$ are multiplicative operators.
The estimates \eqref{eq:estimateH2} are a direct consequence of this expansion and of the observation that the support in time of $\dot{A}$ is contained in $[-\epsilon,0]$.
\end{proof}

$\mathcal{K}$ is implementable by a selfadjoint operator on the GNS representation of the vacuum  provided $P\mathcal{K}Q$ is Hilbert-Schmidt (Theorem 6.1 in \cite{Fredenhagen}). The Hilbert-Schmidt property is established in the next lemma.

\begin{lm}\label{lm:KHS}
Assume that the electromagnetic potential $A_\mu(t,\bullet)$ is smooth, of compact spatial support, constant for $t>0$ and vanishing for $t<-\epsilon$.
The operators $PU_tQ$ and $P\mathcal{K}Q$ constructed respectively out of $U_t$ in \eqref{eq:U_t} and $\mathcal{K}$ in \eqref{eq:new-defK} 
are Hilbert-Schmidt. Furthermore, 

\begin{equation}
\label{eq:HS-estimateK}
    ||P\mathcal{K}Q||_{HS}\le \frac{\pi \epsilon}{\sqrt{m}}\frac{d^2}{d\lambda^2}\bigg|_{\lambda=0}
    e^{2\epsilon \left(\lnorm{A} +\lambda \lnorm{\dot A}
    + \frac12\lambda^2 \lnorm{\ddot A}\right)}
	 \left(\lnorm{\dot A} 
    +\lambda \lnorm{\ddot A}
    + \frac12\lambda^2 \lnorm{\dddot A} \right)
\end{equation}
where the norm $\lnorm{\cdot}$ is given in \eqref{eq:lnorm}\ .
\end{lm}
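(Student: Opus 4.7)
The plan is to establish the two Hilbert--Schmidt claims by reducing both to the integration-by-parts argument behind Proposition \ref{prop:U-HS}.

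For $PU_tQ$, I would write $U_t=U(A-A_T)U(A_t-A_T)^{-1}$ and decompose by inserting $1=P+Q$ between the two factors:
\[
PU_tQ=PU(A-A_T)P\cdot U(A_t-A_T)^{-1}Q+PU(A-A_T)Q\cdot U(A_t-A_T)^{-1}Q .
\]
The second summand is (Hilbert--Schmidt)$\cdot$(bounded) directly from Proposition \ref{prop:U-HS} applied to the compactly time-supported potential $A-A_T$. The first is (bounded)$\cdot (QU(A_t-A_T)P)^*$, and the HS norm of $QU(A_t-A_T)P$ is controlled by the very same argument as in Proposition \ref{prop:U-HS}: the change of variables to the center-of-mass time of the Dyson series, partial integration in that time, and the denominator $(\sqrt{p^2+m^2}+\sqrt{q^2+m^2})^{-2}$ are all symmetric under the exchange $P\leftrightarrow Q$.

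For $P\mathcal KQ$ the pointwise kernel bound of Lemma \ref{lm:Kbounded} does not suffice; the $(\sqrt{p^2+m^2}+\sqrt{q^2+m^2})^{-2}$ weight must be produced. Starting from \eqref{eq:new-defK}, I would expand $V_s(A)$ and $V_s(A)^{-1}$ as time-ordered and anti-time-ordered exponentials, so that a generic contribution to $P\mathcal K Q$ is
\[
i^{n-m}\!\!\int_{-\epsilon}^0\!\! ds\!\int \!dt^n\!\int\! dt'^m\;P\,e^{it_1 D}\A(t_1)e^{-it_1 D}\cdots e^{it_n D}\A(t_n)e^{-it_n D}\,e^{isD}\dot\A(s)e^{-isD}\,\cdots e^{it'_m D}\A(t'_m)e^{-it'_m D}\,Q,
\]
with all $n+m+1$ time variables in $[-\epsilon,0]$ and appropriately ordered. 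Introducing the center-of-mass time $\bar t$ of these $n+m+1$ variables together with relative coordinates (Jacobian one), the internal exponentials $e^{iD(t_{j+1}-t_j)}$ depend only on relative times; only the outermost $e^{iDt_1}$ and $e^{-iDt_{\mathrm{last}}}$, acting on $P$ and $Q$, contribute the overall $\bar t$-phase $e^{i(\sqrt{p^2+m^2}+\sqrt{q^2+m^2})\bar t}$ in the $(p,q)$-kernel, while the remaining $\bar t$-dependence sits only in the rigid shifts $\A(\bar t+u_i)$ and $\dot\A(\bar t+u_0)$.

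Integrating by parts twice in $\bar t$ then produces the factor $(\sqrt{p^2+m^2}+\sqrt{q^2+m^2})^{-2}$; the boundary contributions at $\bar t=-\epsilon$ and $\bar t=0$ vanish because, by smoothness of $A$ together with its constancy for $t\ge 0$ and vanishing for $t\le-\epsilon$, all time derivatives of $A$ (and in particular $\dot A$ and its derivatives) vanish at $t=0$ and $t=-\epsilon$. By Leibniz the two $\bar t$-derivatives distribute at most two additional time derivatives among the factors $\A(\bar t+u_i)$ and $\dot\A(\bar t+u_0)$, and this distribution is packaged exactly by the Taylor device of Proposition \ref{prop:U-HS}: replace $\A\mapsto \A+\lambda\dot\A+\tfrac12\lambda^2\ddot\A$ and $\dot\A\mapsto\dot\A+\lambda\ddot\A+\tfrac12\lambda^2\dddot\A$ and extract the coefficient of $\tfrac12\lambda^2$ through $\frac{d^2}{d\lambda^2}\big|_{\lambda=0}$.

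Finally I would form the Hilbert--Schmidt norm, use $\int d^3p\,(\sqrt{p^2+m^2}+\sqrt{q^2+m^2})^{-4}\le \pi^2/m$ exactly as in Proposition \ref{prop:U-HS}, and apply Young's inequality to the resulting convolution in $p-q$. The convolution structure is precisely that of $H$ in \eqref{eq:defH} but with $\A$ and $\dot\A$ replaced by their $\lambda$-deformations; its $L^2$ bound is then obtained as in \eqref{eq:estimateH2}, yielding \eqref{eq:HS-estimateK}. The main obstacle is the combinatorial bookkeeping of the two $\bar t$-derivatives across a product of $n+m+1$ time-dependent factors and their repackaging as a single $\lambda$-derivative; this step is nevertheless entirely parallel to the analogous one in the proof of Proposition \ref{prop:U-HS}, and every other ingredient reuses estimates already established in Proposition \ref{prop:U-HS} and Lemma \ref{lm:Kbounded}.
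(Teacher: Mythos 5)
Your proposal is correct and takes essentially the same route as the paper: write $U_t=U(A-A_T)U(A_t-A_T)^{-1}$, insert $P+Q$ in the middle so that each summand is a product of a Hilbert--Schmidt factor (from Proposition~\ref{prop:U-HS} or its $P\leftrightarrow Q$ transpose) and a bounded one; and for $P\mathcal KQ$, expand $V_s(A)$ and $V_s(A)^{-1}$ into Dyson series, pass to barycenter plus relative time coordinates, integrate by parts twice in the barycenter to produce the $(\omega(p)+\omega(q))^{-2}$ weight, repackage the Leibniz distribution of the two derivatives through the $\lambda$-deformation, and estimate the resulting convolution kernel in $L^2$ using Lemma~\ref{lm:Kbounded} and $\int d^3p\,(\omega(p)+\omega(q))^{-4}\le\pi^2/m$. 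Your extra remark on why the boundary terms vanish --- which the paper leaves implicit --- is correct (it really comes from the integrand being smooth and compactly supported in the barycenter once the constraints from the supports of $\dot A$ and $A$ are imposed), and the rest matches the paper's argument step by step.
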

\begin{proof}
We observe that $U_t=U(A-A_T)U(A_t-A_T)^{-1}$. Hence, 
$P U_t Q = P U(A-A_T) P P U(A_t-A_T)^{-1} Q + 
P U(A-A_T) Q Q U(A_t-A_T)^{-1} Q$. According to Proposition \ref{prop:U-HS} in both terms we have a product of a Hilbert-Schmidt operator and a bounded operator, hence $PU_tQ$ is Hilbert-Schmidt.

To prove that  $P\mathcal{K}Q$ is Hilbert-Schmidt, we adapt the proof of Proposition \ref{prop:U-HS} to $\mathcal{K}$ given in \eqref{eq:new-defK}.
In particular, $\mathcal{K}$ is the composition of an operator $V_s(A)$  expressed by an anti-time-ordered product, of $\dot{A}$ and of an operator expressed by a time-ordered product $V_s(A)^{-1}$.
We consider the $n$-th order contribution to the Dyson series of $V_s(A)$ and the $l$-th order contribution to the Dyson series of $V_s(A)^{-1}$ and we denote by $\mathcal{K}_{n,l}$ the corresponding contribution to $\mathcal{K}$.  
It takes the form
\begin{equation}\label{eq:Knl}
\mathcal{K}_{n,l}
= i^{n+l}\!\!\!\!\!\!\!
\int\limits_{
\substack{t_1< \dots <t_n<t' \\ 
\tau_1< \dots <\tau_{l} <t'
}}\!\!\!\!\!\!
dt'd t^{n}
d \tau^{l}  \ e^{i t_1 D}\A(t_1) e^{i (t_2 -t_1) D}    \!\!\!\dots   
e^{i (t' -t_{n}) D}  \dot\A(t') 
 e^{-i(t' - \tau_l) D}
 \!\!\!\dots
e^{-i (\tau_{2} -\tau_{1}) D}    \A(\tau_{1}) e^{-i \tau_1  D}.
\end{equation}
We observe that the domains of time integrations of the various factors composing $\mathcal{K}_{n,l}$ are contained in the intersection of the past part of the support of $\dot{A}$ and the future part of the support of $A$ which is thus contained in $[-\epsilon,0]$. 
We use now the baricenter $s$ of all time coordinates and the displacements from the baricenter as a coordinate system, as done previously in Proposition~\ref{prop:U-HS}. We observe that $s$ appears in the first operator $e^{it_1 D}$  and in the last one $e^{-i\tau_1 D}$ and in the various arguments of $\A$ and $\dot\A$. The domain of integration of $s$ is $[-\epsilon,0]$.  
We partially integrate in $s$ two times, take the sum over $n$ and $l$ and obtain 
\[
P\mathcal{K}Q(p,q)  = -\frac{1}{(\omega(p)+\omega(q))^2} \frac{d^2}{d\lambda^2}\bigg|_{\lambda=0} P\mathcal{K}(A+\lambda \dot A + \frac{\lambda^2}{2}\ddot{A})Q
\]
where we highlighted in $\mathcal{K}$ how it depends on $A$ and where $\omega(k)=\sqrt{k^2+m^2}$.
Hence, since
$\frac{1}{(\omega(p)+\omega(q))^{2}}\leq \frac{1}{\omega(p)^{2}}$, 
the Hilbert-Schmidt norm of  $P\mathcal{K}Q$ can be estimated by the product of the $L^{2}$ norm of $1/\omega(p)^2$ which is $\pi/\sqrt{m}$ and the $L^2$ estimates of $\mathcal{K}$ given in \eqref{eq:estimateH2} of Lemma \ref{lm:Kbounded}. 
In particular, we get the non-optimal estimates given in the thesis.
\end{proof}

The estimate obtained in Lemma \ref{lm:KHS} grows  exponentially in $\epsilon$ where $\epsilon$ measures the time extension of the support of $\dot\A$.
To study the case of $A$'s that are switched on adiabatically, and thus very slow, we need to improve that estimate. 
This task is accomplished in the following at the price of requiring $A$ to be sufficiently small.
\begin{lm}\label{lm:KHSunifrom}
Consider $\A_{\mathcal{T}}(t)\doteq h_\mathcal{T}(t)\A$, $\mathcal{T}>0$ where $\A$ is smooth, time independent with compact spatial support, and $h_\mathcal{T}(t)=h(t/\mathcal{T})$ is a time cutoff function constructed from a smooth function $h$ which is equal to $0$ for $t<-1$ and equal to $1$ for $t>0$, and with a nonnegative time derivative $\dot{h}\geq 0$.
If $\A$ is sufficiently small, 
the operator $\mathcal{K}$ corresponding  to $\A_{\mathcal{T}}$ satisfies the estimate
\[
\|P\mathcal{K}Q\|_{HS}
\leq  \frac{C}{\mathcal{T}^2} 
\]
for some positive constant $C$ independent of $\mathcal{T}$,
and hence the Hilbert-Schmidt norm of $P\mathcal{K}Q$ vanishes in the limit $\mathcal{T}\to \infty$.
\end{lm}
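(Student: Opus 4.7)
The plan is to strengthen the integration-by-parts estimate of Lemma \ref{lm:KHS} by taking advantage of the adiabatic scaling. Since $\A_\mathcal T(t)=h(t/\mathcal T)\A$ and $\dot\A_\mathcal T(t)=\mathcal T^{-1}\dot h(t/\mathcal T)\A$, every factor in the Dyson expansion of $\mathcal K$ depends on its time argument only through $h(\cdot/\mathcal T)$, whose derivative with respect to any time variable carries an extra $\mathcal T^{-1}$ by the chain rule. Combined with the mass gap of $D$, this opens the door to gaining arbitrarily many factors of $1/\mathcal T$ through iterated integrations by parts, at the price of generating a combinatorial factor that must be absorbed by the smallness of $\A$.

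Concretely, I start from \eqref{eq:new-defK}, expand $V_s$ and $V_s^{-1}$ in Dyson series, and for each $(n_1,n_2)$-term pass to the baricenter $s_b$ of the $n_1+n_2+1$ time variables as in the proof of Lemma \ref{lm:KHS}. In the $(p,q)$-Fourier representation of the $PQ$-kernel the $s_b$-dependence of the free exponentials $e^{isD}$ reduces to a single oscillating factor $e^{is_b\Omega}$ with $\Omega=\omega(p)+\omega(q)\ge 2m$. Boundary terms of any order in $s_b$ vanish because $h$ is smooth and constant outside $[-1,0]$, so all derivatives $h^{(k)}$ with $k\ge 1$ are zero at the endpoints $\{-1,0\}$; this allows iterated integrations by parts in $s_b$, each one producing $1/\Omega$ from the oscillating phase and, by the chain-rule observation above, an additional $1/\mathcal T$ from the derivative hitting the product of cutoffs.

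The linear-in-$\A$ term needs only two integrations by parts to yield a pointwise kernel bound proportional to $\mathcal T^{-2}\Omega^{-2}$. The higher-order Dyson contributions, however, carry an extra $\mathcal T^{n_1+n_2}$ from their time integrations, which must be offset by performing as many as $n_1+n_2+2$ integrations by parts. This gives a uniform pointwise bound proportional to $\mathcal T^{-2}\Omega^{-(n_1+n_2+2)}$ times $\|\A\|^{n_1+n_2+1}$, independent of $(n_1,n_2)$. Summing against the factorials $1/(n_1!\,n_2!)$ and integrating $\Omega^{-(n_1+n_2+2)}$ against $d^3p\,d^3q$ (which is finite because of the mass gap $\Omega\ge 2m$) produces the Hilbert–Schmidt estimate.

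The main obstacle is the combinatorial control of $\partial_{s_b}^{n_1+n_2+2}$ applied to the product of $n_1+n_2+1$ smooth cutoffs: the Leibniz expansion contains roughly $(n_1+n_2+1)^{n_1+n_2+2}$ terms, whose magnitudes depend on sup norms of $h,\dot h,\ldots, h^{(n_1+n_2+2)}$. Smallness of $\lnorm{\A}$ is precisely what is needed for the resulting Neumann-type series to converge to a constant independent of $\mathcal T$, yielding $\|P\mathcal K Q\|_{HS}\le C/\mathcal T^2$ as claimed, with $C$ depending on $m$, on the profile $h$, and on $\lnorm{\A}$.
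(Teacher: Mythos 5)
Your plan diverges from the paper's proof and, as written, has a gap that is not merely a matter of bookkeeping.

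The core of your argument is to perform $n_1+n_2+2$ integrations by parts in the baricenter variable $s_b$ for the $(n_1,n_2)$-th Dyson contribution, so that the $\mathcal T^{-(n_1+n_2+2)}$ gained from the chain rule offsets the volume factor $\mathcal T^{n_1+n_2}$ from the time integrations and leaves a net $\mathcal T^{-2}$. The difficulty you gesture at — a Leibniz expansion with $\sim (n_1+n_2+1)^{n_1+n_2+2}$ terms — is only half the problem. More serious is that the terms in that expansion involve sup norms $\|h^{(k)}\|_\infty$ with $k$ ranging up to $n_1+n_2+3$ (there are $n_1+n_2+2$ derivatives from the integrations by parts, plus the one already present in $\dot h$ from the $\dot\A$ in \eqref{eq:new-defK}). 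For a generic smooth compactly supported cutoff $h$, the sequence $\|h^{(k)}\|_\infty$ is completely uncontrolled: it can grow faster than any exponential in $k$ (a compactly supported $h$ cannot be analytic, and for a generic Gevrey function the growth is super-exponential). Consequently, no amount of smallness of $\lnorm{\A}$ can make the sum over $(n_1,n_2)$ converge, and the constant $C$ you would produce is not finite for the class of $h$ admitted in the statement of the lemma. Your approach would require extra quantitative regularity on $h$ (e.g.\ a Gevrey-$1$ bound with small constant) that the lemma does not assume.

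The paper avoids this trap by not taking more than two integrations by parts in $s_b$: it keeps the number of $h$-derivatives fixed (only $\dot h,\ddot h,\dddot h$ and sup norms of $\dot h,\ddot h$ appear) and instead controls the growing time integrations by an entirely different mechanism. The momentum integral between consecutive $\A$ insertions, $I(p,q;t)=\int dp_j\,\hat\A(p-p_j)e^{\pm it\omega(p_j)}\hat\A(p_j-q)$, is shown via stationary phase (Lemma~\ref{lm-decay}, using the compact \emph{spatial} support of $\A$) to decay as $|t|^{-3/2}$; the convolution inequality $\int_0^t ds\,(1+t-s)^{-3/2}(1+s)^{-3/2}\le 4(1+t)^{-3/2}$ then makes the iterated time integration bounded uniformly in the integration range, so the extra $\mathcal T^{n_1+n_2}$ never appears. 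After summing the geometric series in $\lnorm{\A}$, the surviving factors involve only $\|\dot h_{\mathcal T}\|_1,\|\ddot h_{\mathcal T}\|_1,\|\dddot h_{\mathcal T}\|_1$ and $\|\dot h_{\mathcal T}\|_\infty,\|\ddot h_{\mathcal T}\|_\infty$, and the scaling $h\to h(\cdot/\mathcal T)$ makes each remaining combination scale as $\mathcal T^{-2}$. In short, the paper trades the extra integrations by parts you want for a stationary-phase time decay coming from the spatial localization of $\A$; this keeps the $h$-derivative order bounded and is essential for the lemma to hold under the stated smoothness assumption.
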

\begin{proof}
With the proof of Proposition \ref{prop:U-HS} and of Lemma \ref{lm:KHS} in mind 
we look for an efficient bound
for the time integrations in $\mathcal{K}_{n,l}$. 

We start considering the case $\mathcal{T}=1$. We observe that
in the hypotheses of the present Lemma contributions involving $h$ and $\A$ factorize in the integrand of the momentum space integral kernel of $P\mathcal{K}_{n,l}Q$
\[
P\mathcal{K}_{n,l}Q(p,q)=-\frac{i^{n+l}}{(\omega(p)+\omega(q))^2}
\int_{\mathcal{D}}
dx^{n+l+1} K_{n,l}(x) \mathsf{H}_{n,l}(x)
\]
with $x=(t,t',\tau)\in\mathbb{R}^{n+1+l}$ and the domain of integration $\mathcal{D} = \{x |x_1< \dots <x_{n+1},
x_{n+1}>\dots > x_{n+1+l} \}$. 
$\mathsf{H}_{n,l}$ is the $L_1$ matrix norm (the sum of the absolute values of the entries) of the Hessian of the function $ h(x_1) \dots \dot{h}(x_{n+1})\dots h(x_l)$,
and $K_{n,l}(x;p,q)$
is the integral kernel of the operator
$
K_{n,l}(x)
=
P
 e^{ix_1 D}\A e^{i (x_2 -x_1) D}   \dots 
 \!\!\!    \A 
e^{-i x_{n+l+1} D} Q.
$
With respect to the proofs of Proposition \ref{prop:U-HS} and of Lemma \ref{lm:KHS}, we look for an extra time decay in $K_{n,l}$.
The momentum space integral kernel of $K_{n,l}(x)$ is the matrix valued function 
\[
K_{n,l}(x)(p,q) = \! \int \!\! dp^{n+l}
e^{i\omega(p)x_1}
\prod_{j=1}^{n+l}\left(
\sum_{\sigma_j \in \{\pm\}}
\hat{\A}(p_j-p_{j-1})
P_{\sigma_j}(p_j)e^{\sigma_j i(x_{j+1}-x_j)\omega(p_j)}
\right)
\hat{\A}(q-p_{n+l})
e^{i\omega(q)x_{n+l+1}}
\]
where $p_0=p$ and $p_{n+l+2}=q$, furthermore,  $P_+ = P$ and $P_-=Q=1-P$ are matrix valued functions of spatial momentum of the form
$ 
P_\pm(p)= (\mp \omega(p)-\alpha^ik_i -\beta m)(2 \omega(p))^{-1}
$, and their matrix norms $|P_\pm(p)|$ are bounded by $1$.

Let us analyze the integral over $p_j$ in 
\[
I(p,q;t)=\int dp_j\hat{\A}(p-p_j)e^{\pm i t \omega(p_j)}\hat{\A}(p_j-q)
\]
where $\omega(p_j)=\sqrt{p_j^{2}+m^2}$.
For small times $I$ is bounded by the convolution of $|\hat \A|*|\hat{\A}|(p-q)$.
Furthermore, according to Lemma \ref{lm-decay} since $\A$ is of compact spatial support, we find that for large times $|I(p,q,t)|$ decays as $c(|\hat{A}(p)||\hat{A}(q)|+|\hat{A}|*|\hat{A}|(p-q))|t|^{-3/2}$ where the constant $c$ depends on the extent of the domain of $A$.

We can combine these observations in the following  inequality  valid for every $t$ and for some positive constant $c$
\[
|I(p,q;t)| \leq 
\frac{c}{(1+|t|)^{3/2}} 
\left(\frac{|\hat{\A}|*|\hat{\A}|(p-q) + |\hat{\A}(p)||\hat{\A}(q)|
}{2}\right).
\]
Using it for all the $p_i$ integrations we get
\begin{align*}
|P\mathcal{K}_{n,l}Q(p,q)|
\leq
&  
\int_\mathcal{D}dx^{n+l+1}
\prod_{j=1}^{n+l}\frac{c}{(1+|x_i-x_{i+1}|)^{\frac{3}{2}}}
\mathsf{H}_{n,l}(x)
\left|\frac{\mathbf{A}_{n,l}(p,q)}{(\omega(p)+\omega(q))^2}\right|\ .
\end{align*}

Furthermore,
$\mathbf{A}_{n,l}(p,q)$
contains $n+l+1$ $\A$ factors, sometimes convoluted with themselves, sometimes evaluated at $0$ or $p$ or $q$.
We have that 
\[
\left\|\frac{1}{(\omega(p)+\omega(q))^2}\mathbf{A}_{l,n}(p,q)\right\|_{2}
\leq \frac{\pi}{\sqrt{m}} \lnorm{\A}^{n+l+1}
\]
where the norm $\|\cdot\|_2$ is for a function on $L^2(\mathbb{R}^3\times\mathbb{R}^3)$ and
where the norm $\lnorm{\cdot}$ is given in \eqref{eq:lnorm} and here it is applied to $\A$ which is constant in time, furthermore, since $A$ is of compact support we used the fact that $\| \hat{A} \|_\infty \leq \|A\|_1  \leq c \|\hat{A}\|_2$ for some constant $c$ depending on the support of $A$ and where $||\cdot||_\infty$ is the $L^\infty$-norm.

Using the following non optimal bounds  
\[
\int_0^t ds \frac{1}{(1+t-s)^{\frac{3}{2}}} 
\frac{1}{(1+s)^{\frac{3}{2}}} 
\leq \frac{4}{(1+t)^{\frac{3}{2}}}, 
\qquad
\int_{0}^{\infty} ds \frac{1}{(1+s)^{\frac{3}{2}}}=2 
\]
and bounding the contributions of $h$ or of its derivative in $\mathsf{H}_{n,l}$ by the sup norms, the integrals over $x_i$ with $i\neq n+1$ can be taken. 
The integral over $x_n$ which remains is estimated by $\|\dot{h}\|_1$, by $\|\ddot{h}\|_1$ or by $\|\dddot{h}\|_1$. 
Observing that $\|h\|_\infty = 1$, so that it can be ignored, we have that the sums of $n$ and $l$ in $P\mathcal{K}Q$ are geometric sums which can be taken if $\lnorm{A}$ is sufficiently small. 
We obtain for some constant $c'$ and $c''$  
\begin{equation*}
\|P\mathcal{K}Q\|_{HS}
\leq \frac{c'}{4}\left.\left(
\|\dot{h}\|_1
\|\dot{h}\|_\infty^2
\frac{d^2}{d\lambda^2}
+ 
\|\dot{h}\|_1
\|\ddot{h}\|_\infty
\frac{d}{d\lambda}
+
2\|\ddot{h}\|_1
\|\dot{h}\|_\infty 
\frac{d}{d\lambda}
+
\|\dddot{h}\|_1
\right)
 \frac{\lnorm{\A}}{\left( 1- c'' \lambda  \lnorm{\A}\right)^{2}}\right|_{\lambda=1}
\end{equation*}
where the derivatives in $\lambda$ are present in this formula in order to take care of the $n,l$ dependence in Hessian in $\mathsf{H}_{n,l}$. Furthermore, we have used the fact that the sums over $n$ and $l$ are geometric series or theirs derivatives.
If we now substitute $h_\mathcal{T}$ in place of $h$ we have for some positive constant $C$
\[
\|P\mathcal{K}Q\|_{HS}
\leq  \frac{C}{\mathcal{T}^2} 
\]
and this vanishes in the limit $\mathcal{T} \to \infty$.
\end{proof}

\section{Equilibrium states,  the Cook criterion and stability properties} \label{se:equilibrium-states}

States at finite temperature are characterized by the celebrated Kubo-Martin-Schwinger condition (KMS) \cite{HHW}. 
We recall in particular that in the case of a $C^{*}$-dynamical systems $(\mathfrak A,\tau)$,
a state $\omega$ is KMS with respect to $\tau$ at inverse temperature $\beta$ if $\omega$ is invariant under $\tau$ and if for every $B,C\in\mathfrak{A}$ there exists a continuous function $F$ on the strip $\Im{t}\in[0,\beta]$ which is analytically in the interior and satisfies $F(t)=\omega(B\tau_t(C)))$ and $F(t+i\beta)=\omega(\tau_t(C)B)$, see e.g. \cite{BratteliRobinson}.  
This condition can be generalized also for the case of $*$-algebras where conditions on higher correlation functions have to be added which can be derived for ${C^*}$-dynamical systems.  

For the Dirac field theory with vanishing external potential, the equilibrium state $\omega^{\beta,\mu}$ at the inverse temperature $\beta$ and the chemical potential $\mu$, with respect to the time evolution $\tau^\mu_t$, 
\[
\tau^\mu_t(\psi(f))= \psi(e^{it(D-\mu)}f)\ ,
\]
is a quasi-free state whose two-point function can be obtained by means of the KMS condition.
The KMS condition together with the anticommutation relations  implies that 

\begin{align*}
\omega^{\beta,\mu}(\psi(f)\psi(g)^*)+\omega^{\beta,\mu}(\psi(f)\psi(e^{-\beta (D-\mu)}g)^*)
& =
\langle f,g\rangle.
\end{align*}
An analogous relation holds for $\omega^{\beta,\mu}(\psi(g)^*\psi(f))$. Hence,

\begin{equation}\label{eq:two-point-kms}
\omega^{\beta,\mu}(\psi(f)\psi(g)^*)
=
\langle 
f, \frac{1}{1+e^{-\beta (D-\mu)}}g\rangle, 
\qquad 
\omega^{\beta,\mu}(\psi(g)^*\psi(f)) 
=
\langle 
f, \frac{1}{1+e^{\beta (D-\mu)}}g\rangle.
\qquad 
\end{equation}

The corresponding $n$-point functions are
\be
\omega^{\beta,\mu}(\psi(f_1)\cdots\psi(f_n)\psi(g_m)^*\cdots\psi(g_1)^*)=\delta_{nm}\det(\langle f_i,\frac{1}{1+e^{-\beta (D-\mu)}}g_j\rangle)\ .
\ee
We observe that the two-point function of the KMS state with non-vanishing chemical potential can be obtained as the KMS state with vanishing chemical potential by replacing the operator $D$ with $D_\mu = D-\mu$.
For this reason, we shall restrict our analysis to the case of vanishing chemical potential. For sufficiently small $\mu$ it is possible to recover the case with a non-vanishing chemical potential substituting $D$ with $D_\mu$ in the results we present.

\subsection{Analytic continuation of the cocycle and the state of Araki}

In the context of $C^{*}$-dynamical systems $(\mathfrak A,\tau)$,  with a C*-algebra $\mathfrak{A}$ and a strongly continuous 1-parameter group $(\tau_t)$ of automorphisms describing the time translations, perturbations of the time evolution and of KMS-states can be treated in the following way. Let $K\in\mathfrak{A}$. The Dyson series for $K(t)=\tau_t(K)$ yields a continuous unitary cocycle $(U_t)$ and a perturbed time evolution $\tau^K_t\doteq \mathrm{Ad}(U_t)\tau_t$. 

Let $\omega_\beta$ be a $\beta$-KMS state with respect to $\tau$. An equilibrium state for the interacting dynamics $\tau^K$ can be constructed in the following way \cite{Araki-KMS, BKR, BratteliRobinson}.

We consider the expansion
\[
\omega^{\beta}(BU(t)) =  \omega^{\beta}(B)  +  \sum_{n} i^{n}\int_{0<s_1<\dots <s_n<t}\!\!\!\! dS\; \omega^{\beta}(B \tau_{s_1}(K) \dots \tau_{s_n}(K)), \qquad B \in \mathfrak{A}.
\] 
where $S=(s_1,\dots ,s_n) \in \mathbb{R}^{n}$ and $dS=ds_1\dots ds_n$.
 Due to the KMS condition, the integrand on the right-hand side can be analytically extended to the region $\{0<\Im (s_1)<\dots<\Im (s_n)<\Im  (t)<\beta\}\subset \mathbb{C}^n$, with bounded and continuous boundary values.
Hence, the function
\[
F(t)\doteq \omega^{\beta}(BU(t)), \qquad B \in \mathfrak{A},
\]
 is analytic for $\Im (t) \in (0,\beta)$ and continuous and bounded at the boundary (see \cite{Araki-KMS} for the case of von Neumann algebras or \cite{BKR, BratteliRobinson}). 

This implies that an equilibrium state (KMS) for the perturbed dynamics 
\[
\tau_t^{K}(B) = U(t) \tau_t(B) U(t)^*
\]
at inverse temperature $\beta$ is obtained as \cite{Araki, FL}
\begin{equation}\label{eq:araki-state}
\omega^{\beta K}(B) = \frac{\omega^\beta(B U_{i\beta})}{\omega^\beta(U_{i\beta})}.
\end{equation}
It is furthermore proved in \cite{BKR} that
\begin{equation}\label{eq:omegauib}
\omega^{\beta K}(B) =  \omega^{\beta}(B)  +   \sum_{n} (-1)^n\int_{0<u_1\dots <u_n<\beta} dU \omega_c^{\beta}(B, \tau_{i u_1}(K), \dots ,\tau_{i u_n}(K))
\end{equation}
where $dU=du_1\dots du_n$ and where now $\omega^\beta_{c}$ are the connected correlation functions obtained from $\omega^\beta$.

An analogous construction of equilibrium states was performed in the framework of formal perturbation theory \cite{FL}, where $C^*$-algebras have to be replaced by $*$-algebras of formal power series. 
Also there \eqref{eq:omegauib} holds.

In the case analyzed in the present paper we are in an intermediate situation. The perturbation $K$ is an unbounded operator in the GNS-representation of the KMS-state, and the associated cocycle is not contained in the CAR-algebra. We can, however, exploit the fact that both arise from bounded operators $\mathcal K$ and $U_t$, respectively, in the one-particle Hilbert space.

We showed that both $U_t$ and $\mathcal K$ can be implemented in the vacuum representation,
$U_t$ by a unitary operator and $\mathcal K$ as an unbounded  selfadjoint operator $K$. Later we will show that analogous properties hold also in the GNS-representation of KMS states (see Appendix). 

We use formula \eqref{eq:omegauib} also in the case under investigation in this paper and show that it results in an operator on the one-particle Hilbert space that intertwines the free and interacting KMS state.

We recall that the interacting Hamiltonian $K$ is quadratic in the fields, and that it is the implementation of the operator $\mathcal{K}$ given in equation \eqref{eq:new-defK} above. 

The operator $K$  takes the form \cite{Fredenhagen}
\begin{equation}\label{eq:int-ham}
K = -\sum_i (\psi( f_i) \psi(\mathcal{K}f_i)^* +c_i)
\end{equation}
where $f_i$ are the elements of an orthonormal basis of the one-particle Hilbert space $\mathcal{H}$, and where the constants $c_i$ correspond to normal ordering. 
Since multiples of 1 vanish in truncated functionals of order $>1$, these constants do not contribute to \eqref{eq:omegauib}.

For the case of a quadratic functional $B = \psi(f)\psi(g)^*$, a particularly simple representation for $\omega^{\beta K}(B)$ can be obtained,
\begin{gather}\label{eq:expomegabetav1}
 \omega^{\beta K}(\psi(f)\psi(g)^*) = 
 \langle f,  F(0) g \rangle
 -
 \!\!
 \sum_{n>0}\sum_{\pi\in \mathcal{P}_{n}}
   \!(-1)^n\!\!\!\!\!\!\!\!\!\!\int\limits_{0<u_1\dots <u_n<\beta}\!\!\!\!\!\!\!\!\!\! d^n u 
    \;
   \langle f, F(u_{\pi_1}) 
  \left(
  \prod_{j=1}^{n-1}
  \mathcal{K}
  F(u_{\pi_{j+1}}-u_{\pi{j}}) 
  \right)
   \mathcal{K}
  F(\beta - u_{\pi_n}) 
   g\rangle 
 \end{gather}
 where the contribution of the product $\prod_{j=1}^{n-1}$ for $n=1$ is assumed to be $1$. Furthermore,
$F$ is an operator valued function on $\mathbb{R}$ which is antiperiodic of period $[0,\beta)$ and which is defined as  
\begin{equation}\label{eq:fermi-factors}
F(u) \doteq \frac{e^{ - u D}}{1+e^{-\beta D}}, 
\end{equation}
for $u\in[0,\beta)$.
Notice that for every $u$, $F(u)$ is a bounded operator.
  We have also that $F_- \doteq (1+e^{-\beta D})= F(0)$ and $F_{+} \doteq -(1+e^{\beta D})=F(0)-1$ are called {\bf Fermi factors} of the free theory.
Finally, $\mathcal{P}_{n}$ contains all the permutations of the elements of the set $\{1,\dots, n\}$.
To obtain the formula, we used the two-point functions given in eq. \eqref{eq:two-point-kms} and the fact that, for factors which are quadratic in the fields, the truncated correlation functions with $n$ factors can be given as a sum over the connected graphs joining $n$ vertices with $n$ edges ({\it loops}).
One of the vertices corresponds to the rank 1 operator $|g\rangle\langle f|$, the others to $\mathcal{K}$ possibly translated in imaginary time, and the edges to $F_{\pm}$, and the graph contributes by the trace of the product of operators along the loop.
The sum over $\mathcal{P}_{n}$ in \eqref{eq:expomegabetav1} can be resolved by enlarging the domain of integration from  $u\in \beta{\mathcal{S}_n}=\{(u_1,\dots ,u_n) \in \mathbb{R}^{n} | 0<u_1<\dots <u_n<\beta\}$ to $u\in (0,\beta)^n$.  We obtain 
\begin{gather}\label{eq:expomegabetav}
\omega^{\beta K}(\psi(f)\psi(g)^*) = 
\langle f,  \frac{1}{1+e^{-\beta D}} g \rangle
-
 \sum_{n>0}
   (-1)^n\!\!\!\!\!\int\limits_{u\in (0,\beta)^n}\!\!\!\!\! d^n u 
    \;
   \langle f, F(u_1) 
  \left(
  \prod_{j=1}^{n-1}
  \mathcal{K}
  F{(u_{j+1}-u_j)} 
  \right)
   \mathcal{K}
  F(\beta-u_n) 
   g\rangle \ .
\end{gather}
A similar expression for the case of Klein Gordon fields with perturbation of the mass has been found in \cite{Drago}.
We discuss in the following the well-posedness of the previous expression and that it satisfies the KMS condition with respect to the interacting time evolution. 

\begin{thm}\label{lm:T}
Consider
\begin{equation}\label{eq:def-T}
T(\beta) \doteq   \frac{1}{1+e^{-\beta D}}  
-
 \sum_{n>0}
   (-1)^n\!\!\!\!\!\int\limits_{u\in (0,\beta)^n}\!\!\!\!\! d^n u 
    \;
   F(u_1)  
  \left(
  \prod_{j=1}^{n-1}
  \mathcal{K}
  F(u_{j+1}-u_j)
  \right)
   \mathcal{K}
  F(\beta - u_n)\ 
\end{equation}

If $\|\mathcal{K}\| <\beta^{-1}$, 
the series which defines $T$ is absolutely convergent and converges to a bounded operator in the uniform operator topology.  

Furthermore, it holds that 
\begin{equation}\label{eq:Tbeta}
T(\beta) =  \frac{1}{1+e^{-\beta (D+\mathcal{K})}}
\end{equation}
and thus, recalling \eqref{eq:expomegabetav}
\[
\omega^{\beta K}(\psi(f)\psi(g)^*) = 
\langle f,  T(\beta) g \rangle
\]
is the two-point function of the unique KMS state with respect to the time evolution $\tau_t^K$ at inverse temperature $\beta$.
\end{thm}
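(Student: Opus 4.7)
The plan is to prove the theorem in two parts: first establish absolute convergence of the series for $T(\overline u)$, then identify $T(\beta)$ with the Fermi function of the perturbed one-particle Hamiltonian $D+\mathcal{K}$ through the KMS characterisation of the perturbed two-point function.

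For convergence, the key is a spectral estimate: for every $u,v\in[0,\beta]$ with $\sigma=\mathrm{sign}(u-v)$,
\[
\|e^{uD}F_{\sigma}e^{-vD}\|\leq 1,
\]
which reduces via the spectral theorem for the selfadjoint $D$ to bounding the scalar functions $e^{(u-v)x}/(1+e^{-\beta x})$ (for $\sigma=-$) or $e^{(u-v)x}/(1+e^{\beta x})$ (for $\sigma=+$); a short calculation shows both are bounded by the supremum of $t^t(1-t)^{1-t}$ with $t=|u-v|/\beta$, hence by $1$. The boundary cases $\|F_-e^{-uD}\|\leq 1$ and $\|e^{uD}F_+\|\leq 1$ are included. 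With this in hand, I would factor the $n$-th integrand as a telescoping product
\[
F_-\,\mathcal{K}_{iu_1}F_{\sigma_1}\cdots \mathcal{K}_{iu_n}F_+ = [F_-e^{-u_1 D}]\,\mathcal{K}\,[e^{u_1 D}F_{\sigma_1}e^{-u_2 D}]\,\mathcal{K}\,\cdots\,[e^{u_n D}F_+],
\]
where the sign convention $\sigma_j=\sigma(u_j-u_{j+1})$ ensures every bracketed factor has operator norm at most $1$. The integrand is therefore bounded in operator norm by $\|\mathcal{K}\|^n$; integrating over the hypercube $(0,\overline u)^n$ gives a bound $(\|\mathcal{K}\|\overline u)^n$ on the $n$-th term and summing yields $\|T(\overline u)\|\leq (1-\|\mathcal{K}\|\overline u)^{-1}$, finite for all $\overline u\leq\beta$ under the hypothesis $\|\mathcal{K}\|<\beta^{-1}$. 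Absolute convergence in the uniform operator topology follows.

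For the identification, the graph expansion already carried out in the paper gives $\omega^{\beta K}(\psi(f)\psi(g)^*)=\langle f,T(\beta)g\rangle$, where $\omega^{\beta K}$ is Araki's perturbed state \eqref{eq:araki-state}. By Araki's theorem on perturbations of KMS states, applied in the present setting using the implementability results of Section~2 (and the appendix referenced there), $\omega^{\beta K}$ is a $(\tau^K,\beta)$-KMS state, where the perturbed one-particle evolution is $e^{it(D+\mathcal{K})}$. Applying the KMS condition to the two-point function together with the CAR $\{\psi(f),\psi(g)^*\}=\langle f,g\rangle$, just as in the free derivation of \eqref{eq:two-point-kms}, yields the operator equation $T(\beta)\bigl(1+e^{-\beta(D+\mathcal{K})}\bigr)=1$, from which \eqref{eq:Tbeta} follows. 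Uniqueness is the standard statement that a gauge-invariant quasi-free KMS state on the CAR algebra is determined by its two-point function, which here is the bounded operator $F_-(D+\mathcal{K})$.

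The main obstacle is that $\mathcal{K}_{iu}=e^{-uD}\mathcal{K}e^{uD}$ is \emph{not} a bounded operator on its own, since $e^{\pm uD}$ is unbounded whenever $D$ has unbounded spectrum, so the naive term-by-term Dyson bound is divergent. The crux is the pairing above, in which each $\mathcal{K}$ is sandwiched between adjacent Fermi factors whose exponential decay compensates the growth of the imaginary-time translations. The sign rule $\sigma_j=\mathrm{sign}(u_j-u_{j+1})$ encoded in the definition of $T(\overline u)$ is precisely what makes the combined factor $e^{u_j D}F_{\sigma_j}e^{-u_{j+1}D}$ uniformly bounded over $[0,\beta]^2$; without this matching the argument breaks down, and one would not expect geometric convergence even for small $\mathcal{K}$.
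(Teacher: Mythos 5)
Your convergence argument is correct and essentially the same as the paper's: you isolate the bracketed factors $F_-e^{-u_1 D}$, $e^{u_j D}F_{\sigma(u_j-u_{j+1})}e^{-u_{j+1}D}$, $e^{u_n D}F_+$ and bound each by $1$ via the spectral theorem (your $t^t(1-t)^{1-t}\le 1$ computation is a cleaner way of saying what the paper asserts tersely, namely that $e^{uD}(1+e^{\beta D})^{-1}$ and $e^{-uD}(1+e^{-\beta D})^{-1}$ have norm at most $1$ for $u\in(0,\beta)$). The resulting geometric bound $(1-\overline{u}\|\mathcal K\|)^{-1}$ is the same as the paper's up to the harmless replacement $\overline{u}\le\beta$.

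The identification step $T(\beta)=(1+e^{-\beta(D+\mathcal K)})^{-1}$, however, takes a genuinely different route from the paper, and it has a gap. The paper proves the identity directly at the one-particle level by a modified second-resolvent identity: it writes
\[
\frac{e^{-u(D+\mathcal K)}}{1+e^{-\beta(D+\mathcal K)}}
-\frac{e^{-uD}}{1+e^{-\beta D}}
=-\int_0^\beta\frac{e^{-\overline u(D+\mathcal K)}}{1+e^{-\beta(D+\mathcal K)}}\,
\mathcal K e^{(\overline u-u)D}F_{\sigma(\overline u-u)}\,d\overline u\,,
\]
iterates this, and shows the remainder after $n$ steps is bounded by $(\beta\|\mathcal K\|)^n\to 0$; this produces exactly the series \eqref{eq:def-T}. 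Your proposal instead asserts that $\omega^{\beta K}$ is a $(\tau^K,\beta)$-KMS state by Araki's perturbation theorem and then reads off the form of $T(\beta)$ from the one-particle KMS condition. This is where the gap lies: Araki's theorem requires $K\in\mathfrak A$, whereas here $K$ is an unbounded selfadjoint operator in the GNS representation and the cocycle $U_t$ is \emph{not} in the CAR algebra — the paper flags precisely this (``we are in an intermediate situation'') as the reason it cannot directly cite the bounded perturbation theory. Appealing to the implementability results of Section~2 does not by itself establish the KMS property of the state $\omega^{\beta K}$ defined by the series \eqref{eq:omegauib}. Worse, the argument risks circularity: for a gauge-invariant quasi-free state with one-particle evolution $e^{it(D+\mathcal K)}$, the $\beta$-KMS condition is \emph{equivalent} to the two-point operator being $(1+e^{-\beta(D+\mathcal K)})^{-1}$, which is exactly the identity you are trying to prove. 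To make your route rigorous one would have to invoke (and verify the hypotheses of) an extension of Araki's theorem to unbounded perturbations affiliated with the algebra — such as the $\mathfrak A^{\mathrm{ext}}$ framework the paper mentions in passing when discussing \cite{OhyaPetz} — and this is a substantive piece of work that the paper deliberately avoids by doing the resolvent computation directly.
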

\begin{proof}
We have that for every $u\in (0,\beta)$ the operators $e^{u D} (1+e^{\beta D})^{-1}$ and $e^{-u D} (1+e^{-\beta D})^{-1}$ have norm smaller than $1$. Hence, 
recalling the form of $F$ and considering the domain of integration of $u$, we have
\[
\|T(\beta)\| \leq \sum_{n\geq 0}  \beta^{n} \|\mathcal{K}\|^n
\leq
\frac{1}{1-\beta \|\mathcal{K}\|}
\]
and thus the series is absolutely convergent if $\beta\|\mathcal{K}\|<1$.

We now check that $T(\beta)$ has the form in \eqref{eq:Tbeta} and
 to this avail, we use the
recursive relation for $T(\beta)$ given in \eqref{eq:def-T} and prove that the recursion has a vanishing reminder in the limit for a large number of steps. 
We start observing that $-1$ is in the resolvent set of both the unbounded operators $e^{-\beta (D+\mathcal{K})}$ and $e^{-\beta D}$ because both operators are positive. 
Hence, for $u\in (0,\beta)$ we use a modified  second resolvent equation to analyze
\begin{equation}
\begin{aligned}
\frac{e^{-u (D+\mathcal{K})}}{1+e^{-\beta (D+\mathcal{K})}} &-
\frac{e^{-u D}}{1+e^{-\beta D}}
\\
&=
\frac{1}{1+e^{-\beta (D+\mathcal{K})}}
\left(e^{-u (D+\mathcal{K})}-e^{-u D} +
e^{-u (D+\mathcal{K})}e^{-\beta D}
-e^{-\beta (D+\mathcal{K})}e^{-u D} \right)
\frac{1}{1+e^{-\beta D}} 
\\
&= 
-\frac{1}{1+e^{-\beta (D+\mathcal{K})}}
\int_0^\beta 
e^{-\overline{u} (D+\mathcal{K})}
\mathcal{K}e^{\overline{u} D}
\left(
\theta(u-\overline{u}) - e^{-\beta D} \theta(\overline{u}-u)
\right)
d\overline{u} \; 
\frac{e^{-u D}}{1+e^{-\beta D}} 
\end{aligned}
\label{eq:recursive-relations}
\end{equation}
where $\theta$ is the Heaviside step function. To obtain this equation, we used the relation
\[
e^{- u (D+\mathcal{K})}e^{ uD} -1 = -\int_0^{u} e^{-\overline{u}(D+\mathcal{K})}\mathcal{K} e^{\overline{u}D}
d\overline{u}.
\]
Notice also that, after factoring out, all operators present on the right-hand side of \eqref{eq:recursive-relations} are bounded. 
Actually, the integral is taken over a finite interval, and for every value of the integration variable the 
integrand has the form of 
\[
\frac{e^{-a (D+\mathcal{K})}}{1+e^{-\beta (D+\mathcal{K})}} \mathcal{K}
\frac{e^{-b D}}{1+e^{-\beta D}} 
\]
for some $a,b\in(0,\beta)$.
From \eqref{eq:recursive-relations} we get a relation that can be used recursively,
\begin{equation}\label{eq:recursive-relation-simplified}
\frac{e^{-u (D+\mathcal{K})}}{1+e^{-\beta (D+\mathcal{K})}}  
= 
\frac{e^{-u D}}{1+e^{-\beta D}}
-
\int_0^\beta 
\frac{e^{-\overline{u} (D+\mathcal{K})}}{1+e^{-\beta (D+\mathcal{K})}}
\mathcal{K}e^{(\overline{u}-u) D}
F_{\sigma(\overline{u}-u)}
d\overline{u} \; 
\end{equation}
where $\sigma$ is the sign function and
$F_\pm$ are the Fermi factors given in \eqref{eq:fermi-factors}.
Using recursively this relation, we obtain exactly \eqref{eq:def-T} for $T(\beta)$ \footnote{The expansion in powers of $\mathcal K$ can also be obtained directly (Proposition \ref{prop:derivatives} in Appendix \ref{se:derivatives}).}
. 
Notice that if we truncate the series after $n$ steps, the error we commit is estimated in norm by $\beta^n\|\mathcal{K}\|^{n}$ which tends to $0$ in the limit of large $n$.

We finally observe that 
\[
{\omega}^{\beta K}(\psi(f)\psi(g)^*) = \langle f, \frac{1}{1+e^{-\beta (D + \mathcal{K})}} g \rangle \ .
\]
It provides the two-point function of a KMS with respect to the time evolution $\tau_t^{K}$ at the inverse temperature $\beta$. 
\end{proof}

We conclude this section recalling that there exists a unique KMS state for the evolution generated by $D+\mathcal{K}$. The state we have obtained is thus the unique quasifree state which is KMS with respect to the time evolution $\tau^K_t$.

\section{Approach to equilibrium and non-equilibrium steady states} \label{se:NESS}

Stability properties of KMS states have been considered by many authors. Remarkable results 
are usually obtained when the interacting Hamiltonian $K$ is an element of the $C^*$-algebra and if some clustering conditions hold \cite{BKR, HaagTrych, HKT}, see also \cite{BratteliRobinson}.

Recently some of these results have been extended to the case of perturbation theory where 
stability holds in a weaker sense  \cite{DFP}. 
In the case under investigation in this paper we have the one-particle structure well under control. Since we  consider perturbations which are quadratic in the field, the results obtained in the one-particle Hilbert space can be used to analyze stability properties of the full system.

To be more specific we would like to clarify under which conditions
\[
\lim_{t\to\infty} \omega^\beta(\tau_t^{K}(B)) =  \omega^{\beta K}(B)\ .
\]
If this holds we say that the KMS state $\omega^\beta$ is stable under perturbations. Furthermore we are also interested in return to equilibrium properties, namely, limits of the following way
\[
\lim_{t\to\infty} \omega^{\beta K}(\tau_t(B)) =  \omega^{\beta }(B)\ .
\]
If the latter holds we say that there is return to equilibrium.
The results presented in this section make use of the Cook criterion (see e.g. Section X.3.2 in \cite{Kato}) to estimate the stability and return to equilibrium properties of the free and interacting KMS state. 

Since the interaction Lagrangian we are considering is quadratic in the field, it is sufficient to study these problems at the level of the two-point function (the states we obtain remain quasifree). For the stability problem, we want to obtain conditions under which 
\[
\lim_{t\to\infty} \omega^\beta(\tau_t^{K}(\psi(f)\psi(g)^*)) =  \lim_{t\to\infty} \omega^\beta(\tau_{-t}\tau_t^{K}(\psi(f)\psi(g)^*)) = 
\omega^{\beta K}(\psi(f)\psi(g)^*)\ .
\]
We start observing that 
\[
\lim_{t\to\infty} \omega^\beta(\tau_t^{K}(\psi(f)\psi(g)^*)) =  \lim_{t\to\infty} \omega^\beta(\tau_{-t}\tau_t^{K}(\psi(f)\psi(g)^*)) =  \lim_{t\to\infty} \langle {U}_t^* f, \frac{1}{1+e^{-\beta D}} {U}_t^* g \rangle
\]
where we used ${U}_t = e^{i t (D+\mathcal{K})}e^{-itD}$ given in eq. \eqref{eq:new-UTonH} above. 

Hence the analysis can be done at the level of the one-particle Hilbert space and we are looking for some conditions under which  
\[
\lim_{t\to\pm\infty} {U}_t \frac{1}{1+e^{-\beta D}} {U}_t^{*}  = \frac{1}{1+e^{-\beta (D+\mathcal{K})}} \ . 
\]
We shall analyze this limit by using the Cook-criterion.

We consider the quantum mechanical Möller operators for $D$ and $D+\mathcal{K}$ on the Hilbert space $\mathcal{H}$. We assume that $D+\mathcal{K}$ has absolutely continuous spectrum and that the Cook-criterion holds,
\be
\int dt ||\mathcal{K}e^{it D}f||<\infty\ ,
\ee
for $f$ in a dense subspace. Then the Möller operators
\be
\Omega_\pm=\mathrm{s}-\lim_{t\to \pm\infty}e^{it(D+\mathcal{K})}e^{-it D}
\ee
exist and are isometries (see e.g. Theorem 3.7 in Chapter X of \cite{Kato}). But then
\be
e^{itD}(1+e^{\beta(D+\mathcal{K})})^{-1}e^{-itD}\to\Omega_+^*(1+e^{\beta(D+\mathcal{K})})^{-1}\Omega_+
\ee
in the weak operator topology. Since $\Omega_+f(D)=f(D+\mathcal{K})\Omega_+$ for continuous bounded functions $f$ we see that the KMS state with external potential converges to the free one, \ie return to equilibrium holds for $\omega^\beta$.

If the Cook-criterion also holds for $D+\mathcal{K}$ instead of $D$ we get also the convergence from the interacting to the free KMS state, \ie $\omega^\beta$ is stable under these perturbations.

\begin{lm}\label{lm:Cook-criterion-D}
If ${A}$ is of compact spatial support, the Cook-criterion holds for $D$.
\end{lm}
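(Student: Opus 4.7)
The plan is to exploit the factorized form of $\mathcal{K}$ from \eqref{eq:new-defK} together with dispersive estimates for the free Dirac evolution, using the compact spatial support of $A$ in a decisive way. Taking norms under the integral,
\[
\|\mathcal{K}\, e^{itD} f\| \;\le\; \int_{-\epsilon}^0 ds\, \|\dot\A(s)\, e^{-isD}\, V_s(A)^{-1}\, e^{itD} f\|,
\]
and using that $\dot\A(s,\cdot)$ is multiplication by a matrix-valued function supported in a fixed compact set $K$ with $\|\dot\A(s)\|_\infty$ uniformly bounded in $s$, this is controlled by $C\int_{-\epsilon}^0 \|\chi_K\, e^{-isD}\, V_s(A)^{-1}\, e^{itD} f\|\, ds$, where $\chi_K$ is a smooth cutoff of $K$.

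Next I would expand $V_s(A)^{-1}$ in its norm-convergent anti-time-ordered Dyson series. A short calculation shows that the $m$-th order contribution to $e^{-isD}\, V_s(A)^{-1}\, e^{itD}$ has the form
\[
(-i)^m\!\!\!\int\limits_{t_1<\dots<t_m<s}\!\!\!\!dt^m\; e^{-iD(s-t_m)}\,\A(t_m)\, e^{-iD(t_m-t_{m-1})}\cdots\A(t_2)\,e^{-iD(t_2-t_1)}\,\A(t_1)\, e^{iD(t-t_1)}.
\]
The crucial structural observation is that since $s$ and all $t_j$ live in $[-\epsilon,0]$, every intermediate propagator acts over a bounded time, and the only factor carrying the large parameter $t$ is the rightmost one, namely $\A(t_1)\, e^{iD(t-t_1)} f$.

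For $f$ in a suitable dense subspace (e.g.\ Schwartz functions, or functions whose Fourier transform is smooth and compactly supported), standard stationary-phase/dispersive estimates for the free massive Dirac equation in three space dimensions give
\[
\|\A(t_1)\, e^{iD(t-t_1)} f\|_{L^2} \;\le\; c\,(1+|t-t_1|)^{-3/2}\, N(f),
\]
thanks to the compact spatial support of $\A(t_1)$ and $|t-t_1|\to\infty$ as $|t|\to\infty$; here $N$ is an appropriate Schwartz-type seminorm. The subsequent bounded factors $e^{iD(t_{j+1}-t_j)}\A(t_j)$ contribute a multiplicative $\|\A\|_\infty$ per order, while the ordered integrations over $t_1<\dots<t_m$ in $[-\epsilon,0]$ produce $\epsilon^m/m!$. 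Summing over $m$ yields $\|\chi_K\, e^{-isD}\,V_s(A)^{-1}\, e^{itD} f\| \le c'(1+|t|)^{-3/2}\, N(f)\, e^{\epsilon\|\A\|_\infty}$, and a final integration in $s\in[-\epsilon,0]$ gives $\|\mathcal{K}\, e^{itD} f\| \le C(1+|t|)^{-3/2}$, which is integrable on $\mathbb{R}$, establishing the Cook criterion on a dense subspace of $\mathcal{H}$.

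\textbf{Main obstacle.} The principal technical issue is ensuring that the dispersive decay in $t$ survives the full Dyson expansion of $V_s(A)^{-1}$ with a constant summable in the order $m$. This works precisely because $t$ is carried by a single propagator acting on nice initial data, while all other evolutions occur over the bounded switching interval $[-\epsilon,0]$ and are sandwiched between spatially compactly supported multiplications by $\A$, so neither the spreading of $e^{iD(t-t_1)}f$ away from $\supp\A$ nor the rate $(1+|t|)^{-3/2}$ is degraded by the remaining short-time, bounded-norm operations.
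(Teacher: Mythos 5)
Your argument is correct, but it takes a more roundabout route than the paper's. The paper observes that $\mathcal{K}$ as a whole is an operator of compact spatial support: by finite propagation speed of the (time-dependent) Dirac evolution, conjugating the compactly supported multiplication operator $\dot\A(s)$ by $W_s=V_s(A)e^{isD}$ over the bounded interval $[-\epsilon,0]$ only enlarges its support by an $\epsilon$-collar, so $\mathcal{K}=\mathcal{K}\chi$ for a single smooth compactly supported cutoff $\chi$. Then $\|\mathcal{K}e^{itD}f\|\le\|\mathcal{K}\|\,\|\chi e^{itD}f\|$, and a single application of the stationary-phase estimate for $\chi e^{itD}f$ (Lemma~\ref{lm-decay}) finishes the job --- no Dyson expansion is needed at this stage. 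Your proposal instead unwinds the Dyson series for $V_s(A)^{-1}$ and applies the dispersive estimate to the rightmost factor $\A(t_1)e^{iD(t-t_1)}f$ order by order, then checks that the $|t|^{-3/2}$ decay survives summation because the remaining propagators live over the bounded switching interval and the simplex volumes supply $\epsilon^m/m!$. This is legitimate and self-contained: you avoid invoking the compact-support / finite-propagation-speed property of $\mathcal{K}$, trading it for a re-run of the Dyson bookkeeping that the paper has already absorbed into the two facts ``$\mathcal{K}$ is bounded'' and ``$\mathcal{K}$ has compact spatial support.'' Both routes ultimately rest on the same dispersive ingredient, namely that a compactly supported multiplication composed with $e^{itD}$ decays as $|t|^{-3/2}$ in $L^2$ on Schwartz data, which the paper isolates in Lemma~\ref{lm-decay}.
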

\begin{proof}
If the vector potential ${A}$ is of spatially compact support, the interacting Hamiltonian is also of compact spatial support. The same holds for $\mathcal{K}$ given in \eqref{eq:new-defK}. We can thus find a smooth compactly supported function $\chi$ defined on $\Sigma_0$ which is equal to the identity  on the support of $\mathcal{K}$. 
Furthermore, the compactly supported smooth functions $C^{\infty}_c(\Sigma_0;\mathbb{R}^{4})$ are dense in $\mathcal{H}$, hence for ever $f\in C^{\infty}_c(\Sigma_0;\mathbb{R}^{4})$, it holds that 
\[
\| \mathcal{K}e^{itD} f \| =
\| \mathcal{K} \chi e^{itD} f \| \leq
\| \mathcal{K}\|  \|\chi e^{itD} f \|
\leq C  \|\chi e^{itD} f \|
\]
where in the last step we used that $\mathcal{K}$ is bounded. 
By stationary phase methods, we have that $\|\chi e^{itD} f\|$ decays for large $t$ at least as $1/t^{3/2}$ hence the thesis holds.
To prove this decay we observe that 
\[
\reallywidehat{(\chi e^{itD} f)}(p) 
= \int d^3{k} 
\overline{\chi}(p-k) 
\left((\omega(k)-\alpha^ik_i -\beta m)\frac{e^{-i t \omega(k)}}{2\omega(k)} 
+
(-\omega(k)-\alpha^ik_i -\beta m)
\frac{e^{i t \omega(k)}}{2\omega(k)}
\right) 
\hat{f}(k)
\]
where $\omega(k)=\sqrt{k^2+m^2}$.
To prove that the $L^2$ norm of the previous function decays we use the results of Lemma \ref{lm-decay}.
\end{proof}

\begin{lm}\label{lm:Cook-criterion-D+K}
If the vector potential ${A}=\lambda \mathcal{A}$ is of compact spatial support and if $\lambda$ is sufficiently small so that $\|\mathcal{K}\|$ is also sufficiently small, the Cook-criterion holds also for $D+\mathcal{K}$.
\end{lm}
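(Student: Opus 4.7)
The plan is to use the Dyson expansion of $e^{it(D+\mathcal K)}$ around the free propagator, exploit that $\mathcal K$ has compactly supported kernel in both spatial variables, and use smallness of $\|\mathcal K\|$ to sum the resulting series. For $f$ in the dense subspace $C_c^\infty(\Sigma_0)$ I would write
\[
e^{it(D+\mathcal K)}f=\sum_{n\geq 0}i^n\!\!\!\int\limits_{0<s_1<\cdots<s_n<t}\!\!\!e^{i(t-s_n)D}\mathcal K\,e^{i(s_n-s_{n-1})D}\mathcal K\cdots\mathcal K\,e^{is_1D}f\,ds,
\]
multiply by $\mathcal K$, take norms, integrate in $t\in(0,\infty)$ and change variables to the positive increments $u_0=t-s_n$, $u_j=s_{n-j+1}-s_{n-j}$ for $j=1,\ldots,n-1$, $u_n=s_1$, all in $(0,\infty)$. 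The claim then reduces to the absolute convergence of
\[
\sum_{n\geq 0}\int_{(0,\infty)^{n+1}}\|\mathcal K e^{iu_0D}\mathcal K e^{iu_1D}\cdots\mathcal K e^{iu_nD}f\|\,du_0\cdots du_n.
\]

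Next I would observe that $\mathcal K=\chi\mathcal K\chi$ for a smooth cutoff $\chi$ of compact spatial support: in the representation \eqref{eq:new-defK} the multiplication operator $\dot{\A}(s)$ has compact spatial support and, because $s$ ranges over the bounded interval $[-\epsilon,0]$, finite propagation speed of $D$ keeps the kernels of $V_s(A)e^{isD}\dot{\A}(s)e^{-isD}V_s(A)^{-1}$ in a fixed compact set. Since every vector in $\mathrm{Ran}\,\mathcal K$ is supported in $\mathrm{supp}\,\chi$, I may apply iteratively the bound
\[
\|\mathcal K e^{iuD}g\|=\|\mathcal K\chi e^{iuD}\chi g\|\leq\|\mathcal K\|\,\|\chi e^{iuD}\chi\|_{\mathrm{op}}\,\|g\|,\qquad g\in\mathrm{Ran}\,\mathcal K,
\]
to the chain above. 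Telescoping from the leftmost factor to the right yields
\[
\|\mathcal K e^{iu_0D}\cdots\mathcal K e^{iu_nD}f\|\leq\|\mathcal K\|^n\Bigl(\prod_{j=0}^{n-1}\|\chi e^{iu_jD}\chi\|_{\mathrm{op}}\Bigr)\|\mathcal K e^{iu_nD}f\|.
\]

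Because the right-hand side factorises, the $(n+1)$-fold integral separates into a product of equal one-dimensional integrals, and the $n$-th summand is bounded by $\|\mathcal K\|^n A^n C(f)$, where
\[
A:=\int_0^\infty\|\chi e^{iuD}\chi\|_{\mathrm{op}}\,du,\qquad C(f):=\int_0^\infty\|\mathcal K e^{iuD}f\|\,du.
\]
Here $C(f)<\infty$ is exactly Lemma~\ref{lm:Cook-criterion-D}, while $A<\infty$ follows from the dispersive decay $\|\chi e^{iuD}\chi\|_{\mathrm{op}}\lesssim(1+|u|)^{-3/2}$ of the free massive Dirac kernel between compact spatial sets, read off from Lemma~\ref{lm-decay} applied to the Hilbert–Schmidt norm of the kernel (stationary phase in the momentum integral of $e^{iuD}(x,y)$ at $x,y$ in a fixed compact region). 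Summing the geometric series in $n$ gives
\[
\int_0^\infty\|\mathcal K e^{it(D+\mathcal K)}f\|\,dt\leq\frac{C(f)}{1-\|\mathcal K\|\,A},
\]
finite as soon as $\|\mathcal K\|<A^{-1}$, and Lemma~\ref{lm:Kbounded} (or Lemma~\ref{lm:KHSunifrom}) guarantees that this smallness condition holds once $\lambda$ is small enough.

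The main obstacle is the geometric-series step: no individual-$n$ gain is available, so convergence of the sum over $n$ hinges precisely on the explicit smallness threshold $\|\mathcal K\|<A^{-1}$. This is the technical reason behind the hypothesis on $\lambda$ in the statement, and is consistent with the fact that once $\mathcal K$ becomes large enough for bound states of $D+\mathcal K$ to appear, return to equilibrium can fail and the NESS phenomena discussed in the sequel emerge.
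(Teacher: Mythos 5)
Your proof is correct and arrives at the same conclusion as the paper by essentially the same ingredients: Dyson expansion of $e^{it(D+\mathcal K)}$, the observation that finite propagation speed of $D$ over the bounded switching interval $[-\epsilon,0]$ lets you write $\mathcal K=\chi\mathcal K\chi$ for a fixed compactly supported cutoff $\chi$, the dispersive bound $\|\chi e^{iuD}\chi\|\lesssim(1+|u|)^{-3/2}$ from stationary phase, and a geometric series in $\|\mathcal K\|$. The one genuine difference is in how the nested time integrals are handled. The paper keeps $t$ fixed, iterates the convolution bound $\int_0^t (1+t-s)^{-3/2}(1+s)^{-3/2}\,ds\leq 4(1+t)^{-3/2}$, and thereby obtains the \emph{pointwise} estimate $\|\mathcal K e^{it(D+\mathcal K)}f\|\leq \tfrac{4C_fC\|\mathcal K\|}{1-4C\|\mathcal K\|}(1+t)^{-3/2}$, from which integrability is read off. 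You instead integrate over $t\in(0,\infty)$ first and change to the independent increments $u_0,\ldots,u_n\in(0,\infty)$, which decouples the simplex integral into a product of one-dimensional integrals $A=\int_0^\infty\|\chi e^{iuD}\chi\|\,du$; this gets you the required $L^1$-in-$t$ bound directly with a marginally better explicit smallness threshold ($\|\mathcal K\|A<1$ with $A\approx 2C$ rather than $4C\|\mathcal K\|<1$), at the price of not yielding a pointwise decay rate for $\|\mathcal K e^{it(D+\mathcal K)}f\|$. Since only integrability is needed for Cook's criterion, this trade-off is harmless, and your route is a clean and slightly sharper alternative to the paper's.
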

\begin{proof}
Writing $e^{i t (D+\mathcal{K})}$ as a Dyson series we have
\[
\mathcal{K}e^{i t(D+\mathcal{K})}
f
= 
\sum_{n} i^n
\int_{t \mathcal{S}_n} 
dt^{n}
\mathcal{K}
e^{i (t-t_n) D} \mathcal{K}
e^{i (t_n-t_{n-1}) D}
\dots e^{i (t_2-t_1) D} \mathcal{K}
e^{i t_1 D} f
\]We operate as in the previous Lemma.
Using an $\chi$ which is smooth, compactly supported and equal to the identity on support of $\mathcal{K}$, we have that 
\[
\|\mathcal{K}e^{i t(D+\mathcal{K})}
f\|
\leq 
\sum_{n} 
\|\mathcal{K}\|^{n+1}
\int_{t \mathcal{S}_n} 
dt^{n}
\| \chi e^{i (t-t_n) D} \chi\| 
\dots \|\chi e^{i (t_2-t_1) D}\chi\|
\dots \| \chi e^{i (t_1) D} f\|
\]
with stationary phase methods similar to those discussed in the proof of Lemma \ref{lm:Cook-criterion-D} we get the $1/t^{3/2}$ decay for every norm in the integrand, more precisely, we have that it holds
\[
\|\chi e^{itD} \chi \| \leq \frac{C}{(1+t)^{3/2}}, \qquad
\|\chi e^{itD} f \| \leq \frac{C_f}{(1+t)^{3/2}}
\]
for $t\geq 0$. 
Hence 
\[
\|\mathcal{K}e^{i t(D+\mathcal{K})}f\|
\leq 
\sum_{n} 
C \|f\| C^{n}\|\mathcal{K}\|^{n+1}
\int_{t \mathcal{S}_n} 
dt^{n}
\frac{1}{(1+t_1)^{\frac{3}{2}}}
\frac{1}{(1+t_2-t_1)^{\frac{3}{2}}}
\dots 
\frac{1}{(1+t-t_n)^{\frac{3}{2}}}
\]
Using recursively the bound  
\[
\int_0^t ds \frac{1}{(1+t-s)^{\frac{3}{2}}} 
\frac{1}{(1+s)^{\frac{3}{2}}} 
\leq \frac{4}{(1+t)^{\frac{3}{2}}} 
\]
we get a geometric series which can be summed if $4C\|\mathcal{K}\|<1$. In that case we obtain 
\[
\|\mathcal{K}e^{i t(D+\mathcal{K})}f\|
\leq 
 \frac{4 C_f C\|\mathcal{K}\|}{1- 4C\|\mathcal{K}\|} \frac{1}{(1+t)^{\frac{3}{2}}}
\]
and the Cook-criterion holds.
\end{proof}

\bigskip

We comment now on the fact that the absolute continuity of the spectrum of $D+\mathcal{K}$ is an essential request to have return to equilibrium. Actually, if there are bound states for $D+\mathcal{K}$,  return to equilibrium and or stability of KMS states are not expected to hold.
To make this manifest, we consider an external potential $\A$ which is strong enough to make the point spectrum of the operator $D+\mathcal{K}$ not empty and such that it contains at least an isolated eigenvalue $s$ (disjoint from the continuous spectrum of $D+\mathcal{K}$). 
Consider $\varphi_s$ one of the normalized eigenstates of $D+\mathcal{K}$ with eigenvalue $s$ (which is thus an element of the the point spectrum, we assume that there is no degeneracy, namely that the eigenspace of $s$ has dimension $1$).

Consider $P_s$ the orthogonal projection to the eigenspace corresponding to $s$.
The interacting time evolution applied to the $(1+e^{-\beta D})$ and projected to the eigenspace corresponding to $s$ gives
\[
O_s(t)=P_s e^{i t (D+\mathcal{K})}\frac{1}{1+e^{-\beta D}}e^{-i t (D+\mathcal{K})}P_s =
P_s\frac{1}{1+e^{-\beta D}}P_s.
\]

We have that, if at time $0$, $O_s$ is not of the form
\[
O_s \neq \frac{1}{1+e^{-\beta s}}
\]
the state we obtain at large $t$ cannot be an equilibrium state because $O_s$ cannot change in time.

In the generic case, we do not expect that the occupation number in the eigenspace of the eigenvalue $s$ of $D+\mathcal{K}$ of the equilibrium state wrt $D$ has the Fermi form, hence return to equilibrium cannot hold when $D+\mathcal{K}$ has bound states corresponding to isolated eigenvalues.

We remind the reader some   conditions on the potential which guaranties the appearance of bound states in the spectral gap region $(-m,m)$
for $D+\A$.
We refer in particular to the work of Klaus \cite{Klaus}, see also the book of
Thaller \cite{Thaller}. If $\vec{A}=0$ and $A^0 = \gamma (1+|\mathbf{x}|^2)$ and if $\gamma>1/8m$ there are infinitely  many bound states in the spectral gap.
On the contrary, as proved by Birman in \cite{Birman}, if 
\[
\lim_{R\to \infty} R \int_{R}^\infty |A^0(r,\theta,\varphi)| dr = 0
\]
where $r,\theta, \varphi$ are standard spherical coordinates, the number of bound states with eigenvalues in the spectral gap is finite, see also \cite{cojuhari}.
In particular, for $A^0$ which is in $L^{3}\cap L^{\frac{3}{2}}$,
the $L^1$ norm of $|A^0|^3$ provides an upper bound for the number of bound states in the spectral gap and, more precisely, the number of eigenvalues in the spectral gap 
for the case of external potential $(\lambda A^0,0)$
grows as $(\lambda \|A^{0}\|_3)^3$ for large $\lambda$ \cite{Klaus}.

We conclude this section with the following remark. When the Cook criterion holds both for $D$ and $D+\mathcal{K}$ by Theorem 3.2 and Theorem 3.7 in Section X.3.2 in \cite{Kato},
    and since the spectrum of $D$ coincides with its absolutely continuous spectrum and it is $\mathbb{R} \setminus (-m,m)$ it holds that the spectrum of $D+\mathcal{K}$ is purely absolutely continuous and coincides with $\mathbb{R} \setminus (-m,m)$.

    Hence, when the external potential is smooth, has compact spatial support and is sufficiently small, the hypotheses of Lemma \ref{lm:Cook-criterion-D} and Lemma \ref{lm:Cook-criterion-D+K} are satisfied; hence the Cook criterion holds for both $D$ and $D+\mathcal{K}$ and thus $D+\mathcal{K}$ has no bound states.
    This result is compatible with the observation stated above about the bounds of the number of bound states in terms of the  suitable norms of the vector potential \cite{Klaus}.

\subsection{Non-equilibrium steady states (NESS)}

As discussed above, there are cases where the external potential $A$ is sufficiently strong to allow the
appearance of eigenvalues in the spectral gap region $(-m,m)$ for $D+\mathcal{K}$.

Consider $\tilde\omega_t$ one of these states, which is thus time dependent, and its time behavior depends on the particular form of the switching of the potential.
If the limit $t\to\infty$ does not exist, 
to analyze it at large times after the switching on of the external potential it is  useful to compute its ergodic mean on large intervals of time 
\[
\tilde\omega(B) =  \lim_{T \to \infty } \frac{1}{T}\int_{0}^T\tilde\omega_t(B) dt\ .
\]
If return to equilibrium or stability of KMS state does not hold, the state one obtains in this way is a non-equilibrium steady state  (NESS) \cite{Ruelle}.

We now discuss how to obtain some of these states with certain assumptions, which follow from the discussion presented in the last part of the previous section.
We assume in particular that the spectrum of the operator $D+\mathcal{K}$ is formed by the absolute continuous part coinciding to the essential part of the spectrum and the point spectrum which for the moment corresponds to a single eigenvalue which falls in the spectral gap of $D$, which is thus disjoint from the absolutely continuous spectrum.

If the Cook criterion holds for elements in the intersection of the domain of $D$ and $P_{ac} \mathcal{H}$ where $P_{ac}$ is the spectral projector over the absolute continuous part of the spectrum of $D+\mathcal{K}$ we get a well defined long time limit of the quasi-free state whose two-point function is obtained from the operator
\[
T_1(t) =   (P_{ac}+P_s)e^{i t (D+\mathcal{K})}  e^{-it D}\frac{1}{1+e^{-\beta D}} e^{it D} e^{-i t (D+\mathcal{K})}(P_{ac}+P_s)
\]
where $P_{ac}$ is the projector on the absolutely continuous part of the spectrum while $P_s$ is the orthogonal projector on the subspace generated by the eigenvector $\varphi_s$ of eigenvalue $s$.
We have furthermore used the fact that under the stated hypotheses $P_{ac}+P_s=I$.
The generalized Möller operators
\be
\tilde\Omega_\pm=\mathrm{s}-\lim_{t\to \pm\infty}
U_t^* P_{ac}\ , 
\ee
where $U_t=e^{it(D+\mathcal{K})}e^{-it D}$ coincides with \eqref{eq:new-UTonH},
exist thanks to the validity of the Cook criterion 
(Theorem 3.7 in Chapter X of \cite{Kato})
furthermore, they are such that  (Theorem 3.2 in Chapter X of \cite{Kato})
\begin{align*}
T_1(t) 
&=   
P_{ac}U_t\frac{1}{1+e^{-\beta D}}
 U_t^* P_{ac}
+
P_{s}U_t\frac{1}{1+e^{-\beta D}}
U_t^* P_{s}
\\
&\quad+
P_{ac} U_t \frac{1}{1+e^{-\beta D}}
U_t^* P_{s}
+
P_{s} U_t \frac{1}{1+e^{-\beta D}}
 U_t^* P_{ac}
\end{align*}
in the large time limit if the M\"oller operators exist we have that the first contribution converges to 
\[
P_{ac}\frac{1}{1+e^{-\beta (D +\mathcal{K}) } }
 P_{ac}
\]
because 
$ f(D)\tilde\Omega_\pm P_{ac}= \tilde\Omega_\pm  f(D+\mathcal{K}) P_{ac}$.   
For the same reason, the last two contributions vanish.
The remaining contribution does not depend on time 
we end up with 
\[
\tilde{T}_{1} = P_{ac}  \frac{1}{1+e^{-\beta (D+\mathcal{K})}}   P_{ac} 
+
P_s \frac{1}{1+e^{-\beta D}}  P_s\ .
\] 
The quasi-free state $\tilde{\omega}_1$ which has two-point functions constructed from the $T_{1}$ function
\[
\tilde\omega_1(\psi(f)\psi(g)^*) = \langle f, \tilde{T}_1g\rangle
\]
is in invariant under time evolution, however, since it is not an equilibrium state, it is a NESS. 
Notice that since the large time limit is well defined,  
considering the ergodic mean does not alter this observation.

If now the discrete eigenvalues in the spectral gap are $\{s_1,\dots, s_N\}$ with $N>1$, assuming as before that the corresponding eigenspaces have dimension $1$,  that they are disjoint from the absolutely continuous spectrum and that $P_{ac}+\sum_{i=1}^NP_{s_i}=I$, we do not get a definite limit for long time for
\[
T_N(t) = e^{i t (D+\mathcal{K})}\frac{1}{1+e^{-\beta D}} e^{-i t (D+\mathcal{K})}\ .
\]
Actually, the part of $T_N$ projected on the point spectrum is
\[
T_{N,pp}(t) = \sum_{j,k} P_{s_j} U_t \frac{1}{1+e^{-\beta D}} U_t^* P_{s_k} = \sum_{j,k} e^{i(s_j-s_k)t} 
P_{s_j} \frac{1}{1+e^{-\beta D}}  P_{s_k}\ ,
\]
and there is no a priori reason for having vanishing $P_{s_j} (1+e^{-\beta D}) P_{s_k}$ for different $j,k$.
The part of $T_N(t)$ projected on the absolutely continuous part behaves in the same way as for $T_1(t)$. 

We can nevertheless obtain a non-equilibrium steady state taking an ergodic mean on large times instead of the simple long time limit. The two-point function of this state is actually obtained from the operator 
\begin{equation}\label{eq:TN}
\tilde{T}_N=\lim_{T\to \infty}\frac{1}{T}\int_0^T T_{N}(t) dt
=
P_{ac}  \frac{1}{1+e^{-\beta (D+\mathcal{K})}}   P_{ac} 
+
\sum_j P_{s_j} \frac{1}{1+e^{-\beta D}}  P_{s_j}\ .
\end{equation}
If we use this operator $\tilde{T}_N$ to construct the two-point function of another quasi-free  state $\tilde{\omega}_N$
\begin{equation}\label{eq:tildeomegaN}
\tilde\omega_N(\psi(f)\psi(g)^*) \doteq  \langle f, \tilde{T}_Ng\rangle
\end{equation}
we obtain a new state which is invariant under time evolution and in general not of equilibrium. It is thus a NESS.

\section{Relative entropy and entropy production}
\label{eq:relative-entropy}

Consider two states $\Psi$ and $\Phi$ described by  positive trace class operators $\rho_\Psi$ and $\rho_\Phi$ on some Hilbert space.
The relative entropy of $\Phi$ with respect to $\Psi$ is given by
\[
\mathscr{S}(\Psi,\Phi) = \Tr (\rho_{\Psi}\left(\log(\rho_\Psi)-\log(\rho_\Phi)\right)).
\]
This definition of relative entropy
has been generalized by Araki in \cite{Araki, Araki2} to the case of normal states on a von Neumann algebra $\mathfrak{A}$ (see also the book of Ohya and Petz \cite{OhyaPetz} for the use and relevance of this concept).
Araki's relative entropy of a normal state $\Phi$ with respect to another normal state $\Psi$ of a $C^*$-algebra $\mathfrak{A}$ is defined as
\[
\mathscr{S}_{Araki}(\Psi,\Phi)= -\langle \Psi, \log(\Delta_{\Phi \Psi})\Psi \rangle 
\]
here, $\Delta_{\Phi\Psi}$  is the relative modular operator obtained as $\Delta_{\Phi\Psi}=S^*S$ where $S$ is the closed operator that implements at the same time the conjugation and the change of the reference state, $SA\Psi = A^*\Phi$, $A\in\mathfrak{A}$.

Let $(\mathfrak{A},\tau_t)$ be some $C^*$ dynamical system. Let $\omega^{\beta}$ be a KMS state with respect to the time evolution $\tau$ and $\omega^{\beta K}$ the KMS state of a perturbed time evolution constructed as in
\eqref{eq:araki-state} for some selfadjoint $K\in\mathfrak A$. $\omega^{\beta K}$ is represented by a vector in the GNS-representation of $\omega^\beta$, and the perturbed time evolution differs by an inner cocycle $(U_t)$ from the original one. 

Under these hypotheses, Araki's relative entropy takes a particularly simple representation given in terms of $K$ and $U$ only, 
\[
\mathscr{S}_{0,K}\doteq\mathscr{S}(\omega^\beta,\omega^{\beta K})=\omega^{\beta}(\beta K)+\log(\omega^{\beta}(U(i\beta))).
\]
See Proposition 1 in \cite{Petz}, section 6.2.3 in the book of \cite{BratteliRobinson} and chapter 12 in  \cite{OhyaPetz} where slightly different sign conventions are considered.
Notice that, in this case, the relative modular operator coincides with $U(i\frac{\beta}{2}) \Delta_\Psi U(i\frac{\beta}{2})^*$, up to a normalization factor,
where $\Delta_\Psi$ is the modular operator of the unperturbed state
and the normalization factor is $\omega^{\beta}(U(i\beta))$, see \cite{Petz, BratteliRobinson}.
This representation of the relative entropy makes manifest the thermodynamic interpretation of this formula. Actually $\omega^\beta(K)$ is the mean value of the relative Hamiltonian and $\beta^{-1}\log (\omega^\beta(U(i \beta)))$ is interpreted as the relative free energy of the system. Furthermore, reverting the point of view,
the equilibrium state for the perturbed dynamics can be seen as the minimizer of the relative free energy $\omega^\beta(K)-\beta^{-1} \mathscr{S}_{0,K}$.
See chapter 12 on perturbation of states in the book of Ohya and Petz \cite{OhyaPetz} for further details.
That book also contains a discussion on the use of similar expressions for the free energy and for the relative entropy when $K$ is an element of a larger set $\mathfrak{A}^{ext}$ that contains operators affiliated with $\mathfrak{A}$, namely lower bounded operators whose spectral projections are elements of $\mathfrak{A}$.

Actually, in order to apply that formula it suffices to have control on $K$ and the cocycle it generates.  Then it can be used also in other contexts, e.g. 
in the case of equilibrium states constructed with perturbation theory  \cite{DFP2}. 

We now discuss how to obtain $\mathscr{S}_{0,K}$ for the setup discussed in the present paper. In Proposition \ref{prop:PowerStormer} in the Appendix it is proved that $\omega^\beta$ and $\omega^{\beta K}$ are quasi-equivalent. 
We proceed now to estimate the relative entropy of $\omega^{\beta K}$ relative to $\omega^\beta$.

We show in particular that $\mathscr{S}_{0,K}$ can be given with operations that involve operators acting in the one-particle Hilbert space. 

We recall that (see, e.g., Appendix B in \cite{DFP2} or the PhD Thesis of Falk Lindner \cite{Lindner})
the expectation value of the logarithm of the normalization factor admits a decomposition similar to \eqref{eq:omegauib}, actually
\[
\log(\omega^{\beta}(U(i\beta))) =    \sum_{n>0} (-1)^n \int_{0<u_1\dots <u_n<\beta} dU \omega_c^{\beta}( \tau_{i u_1}(K), \dots ,\tau_{i u_n}(K))
\]
where $\omega^{\beta}_c(\cdot ; \dots ; \cdot)$ are the truncated or connected correlation functions of $\omega^{\beta}$.
Hence, in view of the time translation invariance of $\omega^\beta$ under $\tau_t$, we have
\[
\mathscr{S}_{0,K}=\sum_{n>1} (-1)^n \int_{0<u_1\dots <u_n<\beta} dU \omega_c^{\beta}( \tau_{i u_1}(K), \dots ,\tau_{i u_n}(K))\ .
\]
so the series starts with $n=2$.

The interaction Hamiltonian $K$ has
the form given in \eqref{eq:int-ham} $
K = -\sum_i (\psi( f_i) \psi(\mathcal{K}f_i)^*
+c_i)$
where $f_i$ is a basis of the one-particle Hilbert space $\mathcal{H}$ and where the constants $c_i$ depend on normal ordering. In the truncated functions of higher degree than 1 the constants do not contribute. Hence we can treat $K$ as a quadratic functional of the Dirac fields. 

Thanks to this form of $K$, the connected correlation functions among various factors $K$ are particularly simple,
and $\mathscr{S}_{0,K}$ admits a representation in terms of operations performed in the one-particle Hilbert space.
We find
$
\mathscr{S}_{0,K}= \Tr (\mathfrak{S})
$
where the operator $\mathfrak{S}$ that acts on the one-particle Hilbert space can be obtained in the following way.

Notice that, using the KMS condition for $\omega^\beta$, we have 
\[
\frac{d}{d u} \log \omega^\beta(U(i\beta u)) 
=
- \beta \frac{\omega^\beta (U(i\beta u) \tau_{i \beta u}K ) }{\omega^\beta (U(i\beta u))}
=
- \beta \frac{\omega^\beta (\tau_{i\beta(u-1)}(K) U(i\beta u)) }{\omega^\beta (U(i\beta u))}.
\]
Similarly to \eqref{eq:omegauib} and following a similar proof given in \cite{BKR} we have for $u\in (0,1)$ 
\begin{equation}\label{eq:Uibetau}
 \frac{\omega^\beta (B U(i\beta u)) }{\omega^\beta (U(i\beta u))}
= \omega^{\beta}(B)  +   \sum_{n} (-1)^n\int_{0<u_1\dots <u_n<\beta u} dU \omega_c^{\beta}(B, \tau_{i u_1}(K), \dots ,\tau_{i u_n}(K))
\end{equation}
the latter formula can be written in terms of elements of the one-particle 
Hilbert space as we have done to obtain 
\eqref{eq:expomegabetav1} and 
\eqref{eq:expomegabetav}. 
Actually,  using \eqref{eq:Uibetau} with $K$ given in \eqref{eq:int-ham} and with $B\doteq   \tau_{i\beta(u-i)}K$
we
can get the truncated correlation function in the $n$-th contribution in the sum present at the right hand side in \eqref{eq:Uibetau} as the sum over all possible connected graphs joining the $n+1$ vertices where the vertices correspond to the $n+1$ arguments of $\omega_c^\beta$  and the edges to the Fermi factors $F_{\pm}$.
Since the number of edges for every vertex is in this case $2$ and the number of edges equals the number of vertices the form of these graphs is well under control and it is equal to the number of possible permutations of $n+1$ elements keeping fixed the first one.
Arguing as we have done in \eqref{eq:expomegabetav1} to obtain 
\eqref{eq:expomegabetav}, the contribution of all the different graphs can be combined in a single integral whose domain is extended from the simplex $\beta \mathcal{S}$ to the cube $(0,\beta u)^n$.

Computing in this way $\frac{d}{du} \log \omega^\beta(U(i\beta u))$, after integrating over $u\in (0,1)$ and recalling that 
\begin{align*}
\mathscr{S}_{0,K}&=
\omega^\beta(\beta K) + \log \omega^\beta(U(i\beta )) 
\end{align*}
we actually get the following 
\begin{equation}\label{eq:rel:entropy-trace}
\mathscr{S}_{0,K} = \Tr \mathfrak{S}
\end{equation}
where 
\begin{align} 
\label{eq:S-expansion}
\mathfrak{S} 
&\doteq-
\beta\int_0^1 du\;
 \sum_{n>0}
   (-1)^n\!\!\!\!\!\int\limits_{U\in (0, \beta u)^n}\!\!\!\!\! dU 
    \; 
    \sqrt{F( \beta u-u_n)}\mathcal{K}F(\beta - \beta u+u_1) 
    \prod_{j=1}^{n-1} \left(\mathcal{K} F(u_{j+1}-u_j) \right) 
    \mathcal{K} \sqrt{F( \beta u-u_n)}
     \ .
\end{align}
The square roots of the operator $F(\beta u-u_n)$ are present in $\mathfrak{S}$ in that place because we have inserted the resolution of unit $ |f_i\rangle\langle f_i|$  to obtain the trace of the operator splitting one of the propagators in two. This is done 
to make the operator $\mathfrak{S}$ trace class when $\mathcal{K}$ is sufficiently small. We shall prove this claim below. A similar argument has been used in \eqref{eq:formula-omega_truncated} in the appendix. 

We can now present the following Lemmata proving that the operator $\mathfrak{S}$ is well behaved if $\mathcal{K}$ is sufficiently small.

\begin{lm}\label{le:Sbounded}
Let ${A}$ be smooth, of compact spatial support, constant for $t>0$ and vanishing for $t<-\epsilon$. Suppose that the corresponding operator $\mathcal{K}$ constructed as in \eqref{eq:new-defK} is such that
$\|\mathcal{K}\| <\beta^{-1}$.  
Then the operator $\mathfrak{S}$ on $\mathcal{H}$ given in 
\eqref{eq:S-expansion}
is bounded.
\end{lm}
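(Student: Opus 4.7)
The plan is to bound the operator norm of $\mathfrak{S}$ directly from the series expansion \eqref{eq:S-expansion}, obtaining a geometric series in $\beta\|\mathcal{K}\|$ which converges precisely under the standing hypothesis $\beta\|\mathcal{K}\|<1$. The mechanism is the same one used in the proof of Theorem~\ref{lm:T}, adapted to the slightly richer string of operators appearing inside $\mathfrak{S}$.

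The two spectral inequalities
\[
\left\|\frac{e^{-\tau D}}{1+e^{-\beta D}}\right\|\leq 1,\qquad \left\|\frac{e^{\tau D}}{1+e^{\beta D}}\right\|\leq 1\qquad\text{for every }\tau\in(0,\beta),
\]
which follow from functional calculus on the self-adjoint operator $D$, drive the entire estimate. I would substitute $\mathcal{K}_{iu_j}=e^{-u_j D}\mathcal{K}\,e^{u_j D}$ inside the string
\[
F_-\,e^{-\beta(1-u)D}\Bigl(\prod_{j=1}^{n-1}\mathcal{K}_{iu_j}F_{\sigma(u_j-u_{j+1})}\Bigr)\mathcal{K}_{iu_n}\,e^{\beta(1-u)D}F_+
\]
and regroup each exponential onto the adjacent Fermi factor. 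The two outermost blocks then take the form $F_-\,e^{-(\beta(1-u)+u_1)D}$ and $e^{(u_n+\beta(1-u))D}F_+$; since $u\in(0,1)$ and $u_1,u_n\in(0,\beta u)$, the exponents lie in $(0,\beta)$ and both blocks have norm at most $1$. Between consecutive copies of $\mathcal{K}$, each intermediate block $e^{u_j D}F_{\sigma(u_j-u_{j+1})}e^{-u_{j+1}D}$ collapses to $\pm e^{(u_j-u_{j+1})D}/(1+e^{\pm\beta D})$, with the sign of the Fermi factor chosen precisely so that $|u_j-u_{j+1}|\in(0,\beta)$ falls in the good range and the block is again bounded by $1$.

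Putting these pieces together, the integrand in the $n$-th term of \eqref{eq:S-expansion} has operator norm bounded by $\|\mathcal{K}\|^n$. Integrating over $U\in(0,\beta u)^n$ produces a factor of at most $\beta^n$, and combining with the outer $\beta\mathcal{K}$ and the harmless integration over $u\in(0,1)$ yields
\[
\|\mathfrak{S}\|\leq \beta\|\mathcal{K}\|\sum_{n\geq 1}\bigl(\beta\|\mathcal{K}\|\bigr)^n=\frac{\beta^2\|\mathcal{K}\|^2}{1-\beta\|\mathcal{K}\|}<\infty.
\]
The point requiring the most attention is the combinatorial bookkeeping: every exponential factor must be paired with a Fermi factor of the correct sign so that the resulting block stays in the good regime $\tau\in(0,\beta)$; the function $\sigma(u_j-u_{j+1})$ was introduced in \eqref{eq:S-expansion} precisely with this pairing in mind. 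Once the rearrangement is in place, convergence of the series is immediate from $\beta\|\mathcal{K}\|<1$ and the claimed boundedness of $\mathfrak{S}$ follows.
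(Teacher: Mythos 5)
Your proof is correct and follows essentially the same route as the paper: both rely on the observation that every pair $e^{u_j D}F_{\sigma(u_j-u_{j+1})}e^{-u_{j+1}D}$ (and the two boundary blocks) has norm at most $1$ because the exponents stay in $(0,\beta)$, reducing the estimate to a geometric series in $\beta\|\mathcal{K}\|$. The only cosmetic difference is that the paper factors $\mathfrak{S}=\mathcal{K}B$ and bounds $B$ separately, while you keep the outer $\beta\mathcal{K}$ inside the single calculation — the intermediate bounds and the final estimate $\|\mathfrak{S}\|\le\beta^2\|\mathcal{K}\|^2/(1-\beta\|\mathcal{K}\|)$ coincide.
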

\begin{proof}
$\mathcal{K}$ is bounded by hypothesis. 
For every $u,u_j$ in the domain of integration all the $F$ are bounded and $F(\beta u-u_n)$ is positive.
We thus get the estimate
\[
\|\mathfrak{S}\| \leq \frac{ \beta^2 \|\mathcal{K}\|^2 }{1-\beta \|\mathcal{K}\|}.
\]
\end{proof}

\begin{lm}\label{lm:S-traceclass}
    Consider $A_\mu(t,\bullet)$ such that it is smooth, bounded, of compact spatial support, it vanishes for $t<-\epsilon$ and it is constant for $t>0$.
    Suppose that the corresponding $\mathcal{K}$ is such that its integral kernel in the Fourier domain satisfies the bound
    \[
    |\hat{\mathcal{K}}(p,q)| \leq H(p-q)
    \]
    with
    $H\in L^1(\mathbb{R}^3)\cap L^2(\mathbb{R}^3)$
   and where   $\|H\|_1 < \beta^{-1}$, $\|H\|_2 < \beta^{-1}$ as in \eqref{eq:defH}. Then the operator 
    $\mathfrak{S}$ is trace class.
\end{lm}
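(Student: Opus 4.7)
The plan is to use the convergent series expansion \eqref{eq:S-expansion} to write $\mathfrak{S}=\sum_{n\ge 1}\mathfrak{S}_{(n)}$ and bound each term's trace norm so that the resulting series is geometric and converges thanks to the smallness assumption $\beta\|H\|_1<1$. First I would reorganise the integrand of $\mathfrak{S}_{(n)}$ by absorbing each exponential $e^{\pm u_j D}$ coming from $\mathcal{K}_{iu_j}=e^{-u_j D}\mathcal{K}e^{u_j D}$ into the adjacent Fermi factor, bringing the integrand to the canonical form $\mathcal{K}\,G_0\,\mathcal{K}\,G_1\cdots\mathcal{K}\,G_n$, where the end factors are $G_0=F_- e^{-a_0 D}$ and $G_n=e^{a_n D}F_+$ with $a_0,a_n\in(0,\beta)$, and each middle factor $G_j=e^{u_j D}F_{\sigma(u_j-u_{j+1})}e^{-u_{j+1}D}$ has operator norm $\le 1$, exactly as in the boundedness proof of Lemma \ref{le:Sbounded}.

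Next I would estimate the integrand's trace norm via a Hilbert–Schmidt factorisation. Since $G_0$ and $-G_n$ are positive bounded functions of $D$, I split $\mathfrak{S}_{(n)}=\pm(\mathcal{K}G_0^{1/2})\cdot M\cdot(|G_n|^{1/2}\mathcal{K})$, with $M$ a bounded middle operator carrying the remaining $n-1$ copies of $\mathcal{K}$ and the middle $G_j$'s, and apply the key bound
\[
\|\mathcal{K}\,g(D)^{1/2}\|_{HS}^2 \le \|H\|_2^2\int|g(D(k))|\,dk ,
\]
which follows directly from $|\hat{\mathcal{K}}(p,q)|\le H(p-q)$ and Plancherel. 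The $n-1$ copies of $\mathcal{K}$ inside $M$ are then controlled by the Schur-test bound $\|\mathcal{K}\|_{op}\le\|H\|_1$, and each bounded middle $G_j$ by $1$. Combining these estimates and integrating over $(u,u_1,\ldots,u_n)$ produces a schematic bound $\|\mathfrak{S}\|_1\le C\,\|H\|_2^2\,\beta\sum_{n\ge 1}(\beta\|H\|_1)^{n-1}$, convergent under the hypothesis $\beta\|H\|_1<1$.

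The main technical obstacle is keeping the Hilbert–Schmidt estimates uniform in the imaginary-time integration parameters: the spectral $L^1$ norms $\int|G_0(D(k))|\,dk$ and $\int|G_n(D(k))|\,dk$ diverge at the boundary of the integration domain, namely as $a_0\to 0$ on the positive-energy subspace $P_+$ of $D$ and symmetrically as $a_n\to\beta$ on the negative-energy subspace $P_-$. To handle this one must exploit the vanishing of the measure factor coming from $\int_0^1 du\int_{(0,\beta u)^n}dU$ at $u=0$, together with the symmetry between $G_0$ and $G_n$ under $a_0\leftrightarrow\beta-a_n$ and $P_+\leftrightarrow P_-$. A cleaner alternative is to carry out the innermost imaginary-time integration in closed form before estimating norms: already for $n=1$, $\int_0^\beta c\,e^{-c(D(p)-D(q))}dc$ combined with the Fermi prefactors $F_-(D(p))F_+(D(q))$ collapses to a function of $(D(p),D(q))$ which is uniformly bounded and decays off-diagonally, yielding the desired trace-norm bound without any boundary singularity; the higher-order terms would then follow by iterating the resolvent-identity scheme that produced \eqref{eq:recursive-relation-simplified}.
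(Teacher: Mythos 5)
Your overall architecture (expand $\mathfrak{S}$ via \eqref{eq:S-expansion}, bound each order by a product of two Hilbert--Schmidt factors using $|\hat{\mathcal K}(p,q)|\le H(p-q)$, and sum a geometric series in $\beta\|H\|_1$) is the same as the paper's, and your key bound $\|\mathcal K g(D)^{1/2}\|_{HS}^2\le\|H\|_2^2\int|g|$ is exactly the estimate behind \eqref{eq:stima3}. The gap is precisely at the point you flag as ``the main technical obstacle'', and neither of your two proposed fixes closes it. (i) Splitting at the two ends and invoking the vanishing of the measure does not work: already at $n=1$, writing $a_0=\beta(1-u)+u_1$, the measure is $\tfrac{a_0}{\beta}\,da_0$ while the two end factors on the all-$P$ block contribute $a_0^{-3/2}$ and $(\beta-a_0)^{-3/2}$, so the parameter integral behaves like $\int_0^\beta a_0^{-1/2}(\beta-a_0)^{-3/2}\,da_0=\infty$. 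The device that actually works --- and is the heart of the paper's proof --- is a pigeonhole on the imaginary-time gaps: since $(\beta-u_1)+(u_1-u_2)+\dots+u_n=\beta$, at least one gap exceeds $\beta/(n+1)$, and one splits the product \emph{there} (not at the ends), paying only a polynomial factor $(n+1)^3$ from $\int e^{-\beta\omega(p)/(n+1)}dp$, which the geometric series absorbs. Your proposal never states this, and the symmetry $a_0\leftrightarrow\beta-a_n$, $P\leftrightarrow Q$ you appeal to does not substitute for it.

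(ii) More seriously, your scheme cannot handle the off-diagonal spectral blocks at all. The pigeonhole only helps within a fixed spectral subspace of $D$ (all $P$ or all $Q$), because the direction of exponential damping supplied by $F_\pm e^{\mp aD}$ flips between $P$ and $Q$. For a cross block such as $P\mathcal K\,(F_-e^{-a_0D})\,\mathcal K\,(e^{a_0D}F_+)Q$, both available dampings degenerate \emph{simultaneously} as $a_0\to0$ (total exponent $2a_0$, not $\beta$), so no choice of splitting point saves you, and the parameter integral diverges like $\int_0 a_0^{-2}da_0$. This is why the paper first performs the $P/Q$ decomposition \eqref{eq:step66} and controls the cross terms by the \emph{separately established} Hilbert--Schmidt property of $P\mathcal KQ$ from Lemma \ref{lm:KHS}, reserving the Fermi-factor/pigeonhole argument for the diagonal pieces $C_n$. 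Your proposal uses only Fermi-factor damping and never invokes $\|P\mathcal KQ\|_{HS}<\infty$; that input is indispensable. Finally, your ``cleaner alternative'' of integrating out the imaginary times in closed form is only asserted: uniform boundedness plus off-diagonal decay of the resulting weight $w(\lambda_1,\lambda_2)$ is not a sufficient criterion for trace class (it gives Hilbert--Schmidt for the weighted kernel, but the outer $\mathcal K$ is not Hilbert--Schmidt on the diagonal blocks), so this route would still have to be reduced to a two-Hilbert--Schmidt factorisation of the type above.
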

\begin{proof}
We start observing that by the Young convolution inequality $\|\mathcal{K}\|\leq \|H\|_1$, hence $\|\mathcal{K}\| <  \beta^{-1}$.

We consider $\mathfrak{S}=\sum_{n> 0} \mathfrak{S}_n$
where the sum is over the number of operators $\mathcal{K}$ appearing in $\mathfrak{S}$ (minus $1$).

After some standard manipulation of the terms in the sum defining $\mathfrak{S}$ in \eqref{eq:S-expansion} and the change of the integration variables $u_j\to u-u_j$, we obtain  
\begin{align}\label{eq:def-Sn}
\mathfrak{S}_n 
&=
- \int_0^\beta du\;
   (-1)^{n}\!\!\!\!\!\int\limits_{U\in (0, u )^n}\!\!\!\!\! dU 
    \;
      \sqrt{\frac{e^{-u_nD }}{1+e^{-\beta D}}  }
      \mathcal{K}
    \frac{e^{-(\beta - u_1)D }}{1+e^{-\beta D}}
  \left(
  \prod_{j=1}^{n-1}
  \mathcal{K}
  \frac{(-1)^{-\sigma(u_j-u_{j+1})}e^{-(u_j-u_{j+1})D}}
  {1+e^{-\sigma(u_j-u_{j+1})\beta D} }
  \right)
   \mathcal{K}
      \sqrt{\frac{e^{-u_nD }}{1+e^{-\beta D}}} 
\end{align}
where $\sigma(u)$ denotes the sign of $u$.
Inserting the projectors $P+Q$ in front of each $\mathcal{K}$ and at the end of the integrand
 we have a decomposition of the integrand of $\mathfrak{S}_n$ into a sum of $2^{n+2}$ terms. 
Each term is characterized by a sequence of $n+1$ elements of the set $C=\{ P\mathcal{K}P,P\mathcal{K}Q, Q\mathcal{K}P, Q\mathcal{K}Q\}$.

We show that each term is a product of two Hilbert-Schmidt operators with bounded operators with bounded norms which are uniform in the $u$ and $U$ integration variables.

For terms which contain at least two of the factors with mixed $P,Q$ namely $Q\mathcal{K}P$ and its adjoint, this follows from Lemma \ref{lm:KHS}. 
Notice that once the place where $Q\mathcal{K}P$ and its adjoint appears in the product are fixed we can sum over all possible other choices of $P$ and $Q$. 
The Hilbert-Schmidt norm of this sum is then bounded by $\|\mathcal{K}\|^{n+1} \| P \mathcal{K} Q \|_{HS}\| Q \mathcal{K} P \|_{HS}$ uniformly on the integration domain.   There are at most $(n+1)n$ bounds of this form corresponding to the number of different places in $\mathfrak{S}_n$ where the factor $P\mathcal{K}Q$ and $Q\mathcal{K}P$ can appear.

For terms containing only one such factor, (say  $Q\mathcal{K}P$, the other follows similarly), we have three cases: 
1) if this factors appears in the first factor
notice that for every element of the domain of $u$ and $U$
at least one element $w\in 
Y=\{\beta-u_1, u_{1}-u_{2}, \dots , u_{n-1}-u_n,u_n\}$ is positive and bigger than 
$\beta/(n+1)$. (This holds even if generically on the domain of $U$, $u_l-u_{l+1}$ for some $l$ can be negative.)
But then another factor $\mathcal{K}$ is multiplied by $P\sqrt{\frac{e^{-{w}D}}{1+e^{-\beta D}}}$ which yields another Hilbert Schmidt operator with Hilbert-Schmidt norm bounded by $||H||_2||e^{-\beta/(2(n+1))\omega(p)}||_2$. 

2) An analogous result holds in the case that the last factor is $Q\mathcal{K}P$, observing that  
 for every element of the domain of $u$ and $U$
at least one element 
$\tilde{w}\in 
\tilde{Y}=\{u_1, u_{2}-u_{1}, \dots , u_n-u_{n-1},\beta - u_n\}$ is bigger than $\beta/(n+1)$, hence another factor $\mathcal{K}$ is multiplied by $Q\sqrt{\frac{e^{{\tilde{w}}D}}{1+e^{\beta D}}}$ whose Hilbert-Schmidt norm is then again bounded by
$||H||_2||e^{-\beta/2(n+1)\omega(p)}||_2$.

3) If the single factor $Q\mathcal{K}P$ is present in middle we have, for every element of the integration domain of $u$ and $U$, that either $u_n \geq \beta/2$ or $\beta - u_n \geq \beta/2$. The second factor which is Hilbert Schmidt is then either the first $\mathcal{K}$ multiplied by $Q \sqrt{\frac{e^{(\beta -u_n)D}}{1+e^{\beta D}}}$ or the last one 
multiplied by $P \sqrt{\frac{e^{-u_n D}}{1+e^{-\beta D}}}$. In both cases, the Hilbert-Schmidt norm of this second factor can then bounded by $||H||_2||e^{-\beta/(2(n+1))\omega(p)}||_2$. 

There are hence at most $2(n+1)$ terms in the decomposition of the integrand that can be bound in this way.

In case no such factor occurs, one even obtains two factors $P\mathcal K P$ or two factors $Q \mathcal K Q$  multiplied from the right or the left by an exponentially decaying function of momentum of the form 
$P\sqrt{\frac{e^{-{w}D}}{1+e^{-\beta D}}}$ or 
$Q\sqrt{\frac{e^{\tilde{w}D}}{1+e^{\beta D}}}$ with $w \in Y$ and $\tilde{w} \in \tilde{Y}$ both bigger than $\beta/(n+1)$.
Therefore again a product of two Hilbert Schmidt operators with bounded operators and the norm respects the bound $||H||_2||e^{-\beta/(2(n+1))\omega(p)}||_2$ uniformly on every $u$ and $U$ in their integration domain.
There are hence at most two terms in the decomposition of the integrand that can be bound in this way.

Combining all these estimates, which are uniform on the $u$ and $U$ integration domain, taking the integrals we have
\begin{equation}\label{eq:step70}
\begin{aligned}
\Tr(|\mathfrak{S}_n|)  &\leq \beta^{n+1} 
\|
\mathcal{K}\|^{n-1}
\left(  n(n+1) 
\|Q \mathcal{K} P\|_{HS} \|P \mathcal{K} Q\|_{HS} 
+
(n+1) \|Q \mathcal{K} P\|_{HS} I_n
+
(n+1) \|P \mathcal{K} Q\|_{HS} I_n
+ 2 I_n^2
\right).
\\
&\leq \beta^{n+1} 
\|
\mathcal{K}\|^{n-1}
\left(  n(n+1) C^2 
+
(n+1)C I_n
+
(n+1)C I_n
+ 2 I^2_n
\right)
\end{aligned}
\end{equation}
where $C$ bounds $\|P\mathcal{K}Q \|_{HS}$ and  $\|Q\mathcal{K}P \|_{HS}$ as in Lemma \ref{lm:KHS}
and 
$I_n$ is such that
\[
||H||_2||e^{-\beta/(2(n+1))\omega(p)}||_2  \leq ||H||_2 \sqrt{\frac{8\pi}{\beta^{3}} (n+1)^3} \doteq I_n. 
\]
Hence the sum over all $n$ can  be taken under the hypothesis of the present lemma and it gives a finite result. 
\end{proof}

\bigskip

We combine now the results of Lemma  \ref{lm:S-traceclass} and Lemma \ref{lm:KHSunifrom} to discuss conditions under which
the relative entropy vanishes in some limits.
Consider an external potential $A$ which is sufficiently small (smooth and of compact spatial support and time independent) so that
return to equilibrium holds for $\A(t)=h_{\mathcal{T}}(t)\A$, where $h_{\mathcal{T}}(t)$ is a switch on function of the form discussed in Lemma \ref{lm:KHSunifrom}.
In other words, we assume that 
 for every $\mathcal{T}>0$, considering $\mathcal{K}$ obtained from $\A(t)$ above, the corresponding
 $D+\mathcal{K}$
has purely absolutely continuous spectrum and the Cook criterion holds for $D+\mathcal{K}$ thanks to Lemma \ref{lm:Cook-criterion-D+K}. 
Under these hypotheses we have that $\|P\mathcal{K}Q\|_{HS}$ tends to $0$ in the limit $\mathcal{T}\to\infty$.
We would like to prove that also the relative entropy in the form discussed in this paper vanishes under that limit. 

To prove it we might operate as in the proof of Lemma \ref{lm:S-traceclass}
where we observed that $\mathfrak{S}$ is at least quadratic in $\mathcal{K}$ and that the Fermi factors inserted in the expansion of $\mathscr{S}_{0,K}$ essentially 
act as $P$ or $Q$.

In view of the decomposition of 
$\mathfrak{S} = \sum_{n>0} \mathfrak{S}_n$ discussed in the proof of Lemma 
 \ref{lm:S-traceclass} with 
 $\mathfrak{S}_n$ given in \eqref{eq:def-Sn},  
 and in view of the inequality \eqref{eq:step70} obtained therein, we have to estimate the decay in $\mathcal{T}$ of $I_n$ for large $\mathcal{T}$.
 The most severe contribution in the estimate of the decay for large $\mathcal{T}$ in $\Tr{\mathfrak{S}_n}$ comes from the evaluation of the lowest order. The higher orders can be treated as in Lemma \ref{lm:KHSunifrom}.
We therefore need to study the $L^2$ norms of terms of the form
\[
P\mathcal{K}_{0,0}P(p,q)e^{-\frac{\beta}{2} \omega(p)} \ .
\]

This is essentially equal to the $L^2$ norm of
\[
O=\hat{\dot{h}}(\mathcal{T}(\omega(p)-\omega(q))) \hat{\A}(p-q)e^{-\frac{\beta}{2} \omega(p)} 
\]
where the integral over $t$ on $\mathcal{K}_{0,0}$ corresponds to the Fourier transform of $\dot{h}_{\mathcal{T}}$ evaluated at $\omega(p)-\omega(q)$, and we have related it to the Fourier transform of $\dot{h}$.
We thus have
\[
\|O\|^2_{2} \leq 
\int d^3p d^3q |\hat{\dot{h}}(\mathcal{T}(\omega(p)-\omega(q)))|^2 
|\hat{\A}(p-q)|^2e^{-{\beta} \omega(p)}\ . 
\]
We divide  the domain of $p, q$ integration in the following way, 
\[
\int d^3 p\int_{|(p-q)(p+q)|\leq \epsilon^2} d^3(p-q)|O|^2
+
\int d^3 p\int_{|(p-q)(p+q)|\geq \epsilon^2} d^3(p-q)|O|^2.
\]
The integral of the first term over $p-q$ is bounded by $c\,\epsilon  p^2e^{-\beta\omega(p)}$ with a constant $c$ which does not depend on $\epsilon$, so its contribution can be made arbitrarily small. For the second term one exploits the fact that $|\omega(p)-\omega(q)|>\epsilon^2/2m$, hence the integrand vanishes pointwise in the limit $\mathcal{T}\to\infty$. Since it is uniformly bounded by a an integrable function, the integral vanishes by dominated convergence.
A similar analysis can be performed also for the $n,l$ contribution in $\mathcal{K}_{n,l}$.

\bigskip
We conclude this section with a discussion about the use of formula \eqref{eq:S-expansion} and \eqref{eq:rel:entropy-trace} to compute the relative entropy in another context. We study the relative entropy of $\tilde{\omega}_N$ given in \eqref{eq:tildeomegaN} with respect to $\omega^{\beta K}$ given in \eqref{eq:araki-state}. Here we change reference states for the relative entropy to obtain closed explicit expressions. In order to compute this relative entropy we compare the corresponding one-particle Hilbert space operators $T(\beta)$ of \eqref{eq:Tbeta} and $\tilde{T}_N$ given in \eqref{eq:TN} and we use \eqref{eq:rel:entropy-trace} with 
$D+\mathcal{K}$ at the place of $D$ and 
$\sum_{j} (d_{s_j}-s_j) P_{s_j} = k_j P_{s_j}$ at the place of $\mathcal{K}$, 
where $d_j\in \mathbb{R}$ is such that  
$P_{s_j}\frac{1}{1+e^{-\beta D}} P_{s_j}=\frac{1}{1+e^{-\beta d_j}}
P_{s_j}$. 
Observe that $d_j$ exists because $1/(1+e^{-\beta D})$ is positive and the image of $1/(1+e^{-\beta d_j})$ with varying $d_j$ is $\mathbb{R}^{+}$.
Furthermore, with these choices 
\[
\tilde{T}_N= 
P_{ac}
\frac{1}{1+e^{-\beta (D+\mathcal{K} )}}
P_{ac}
+
\sum_j
P_{s_j}
\frac{1}{1+e^{-\beta D}}
P_{s_j}
=
\frac{1}{1+e^{-\beta (D+\mathcal{K} + \sum_j k_j P_{s_j} )}}.
\]
We start analyzing the case $N=1$ and operating the substitutions described above in \eqref{eq:rel:entropy-trace} and in \eqref{eq:S-expansion},  with an obvious extension of the notation, we obtain 
\begin{align*}
\mathscr{S}(\omega^{\beta K},\tilde\omega_1) 
&= -\beta 
k_1
\;
 \sum_{n>0}
   \frac{ (-1)^n(k_1)^n e^{-\beta s_1}}{(1+e^{-\beta s_1})^{n+1}} \frac{1}{n+1}\!\!\!\!\!\int\limits_{U\in (0,\beta )^n}\!\!\!\!\! dU 
    \;
  \left(
  \prod_{j=1}^{n-1}
  (\theta(u_j-u_{j+1})-\theta(u_{j+1}-u_j)e^{-\beta s_1}  ) 
  \right).
\end{align*}
To compute the integrals in $\mathscr{S}(\omega^{\beta K},\tilde{\omega}_1)$, we use the expansion of the integral in $U$ as a sum over permutations, as in   
\eqref{eq:expomegabetav1}
which is equivalent to  \eqref{eq:expomegabetav}.
We obtain 
\begin{align*}
\mathscr{S}(\omega^{\beta K},\tilde\omega_1) 
& =
   e^{\beta s_1}\sum_{n>0}
   \left(
   \frac{ \beta k_1 }{1+e^{\beta s_1}} 
   \right)^{n+1}
    \frac{1}{(n+1)!} 
    \sum_{l=0}^{n-1}  
    (-1)^{l} 
    e^{\beta l s_1}
     A(n, l) 
 \\
&=
   e^{\beta s_1} 
   \sum_{n>0}
   \left(\frac{ k_1 \beta 
    }{1+e^{\beta s_1}}
    \right)^{n+1}
    \frac{1}{(n+1)!} 
    A_n(-e^{\beta s_1})
\\
     &=
    \frac{ \beta k_1 
    }{1+e^{\beta s_1}}+
    \log\left(\frac{1 + e^{-\beta (s_1+k_1)}}{1 + e^{-\beta s_1}}\right)
\end{align*}
where $A(n,l)$ are called Eulerian numbers and count the number of permutations of the first natural number $n$ with $l$ positive increments (raises). 
We furthermore denoted by $A_n(u) = \sum_{k=0}^{n-1} A(n,k) u^n$ the  Eulerian polynomial which are known to admit the following generating function \cite{AdvancedCombinatorics}
\begin{align*}
G(x,u) 
&=\sum_{n\geq 1}{A_n(u)} \frac{x^{n}}{n!} = \frac{u-1}{u-  e^{x(u-1)}}-1
\end{align*}
which is used in the integrated form above 
\begin{align*}
\tilde{G}(x,u) 
&= \int_{0}^x G(y,u)\;dy  
= 
\sum_{n\geq 1}{A_n(u)} \frac{x^{n+1}}{(n+1)!}=
- x + 
\frac{1}{u}\log\left( \frac{1-u}{1-u e^{x(1-u) }} \right) .
\end{align*}
By direct inspection, we get that the expression of the relative entropy obtained above is positive, it admits a Taylor expansion for $k_1=0$ whose lower order contribution in $k_1$ is $k_1^2$.
This formula has a direct physical interpretation. The first term proportional to $k_1$ in the entropy is the statistical mean of the variation of internal energy in $\tilde\omega_1$ wrt to $\omega^{\beta V}$ and the logarithmic term is minus the difference of the free energy of the eigenstate $s_1+k_1$ wrt the eigenstate of $s_1$.
More precisely let $Z(\beta, \epsilon) = 1+e^{-\beta (s_1+\epsilon k_1)}$ be the partition function of the one-particle fermionic state of energy $s_1+\epsilon k_1$ we have that
\[
\mathscr{S}(\omega^{\beta K},\tilde\omega_1) = -\frac{\partial}{\partial\epsilon} \log Z (\beta, \epsilon) |_{\epsilon=0}+ \log{Z(\beta, 1)}-\log{Z(\beta, 0)}.
\]
To get the relative entropy in the case of $\tilde\omega_N$ we just need to sum over all various eigenvalues, we obtain
\begin{equation}\label{eq:Sexample}
\mathscr{S}(\omega^{\beta K},\tilde\omega_N) 
=\sum_{j=1}^N
    \left(
    \frac{ \beta  k_j 
    }{1+e^{\beta s_j}}+ 
    \log\left(\frac{1 + e^{-\beta (s_j+k_j)}}{1 + e^{-\beta s_j}}\right)
    \right)
\end{equation}
which has the same properties of $\mathscr{S}(\omega^{\beta K},\tilde\omega_1) $ and the same thermodynamical interpretation of $\mathscr{S}(\omega^{\beta K},\tilde\omega_1) $ presented above with partition function $Z(\beta, \epsilon) = \prod_j (1+e^{-\beta (s_j+\epsilon k_j)})$.

In these concrete examples we used $\omega^{\beta K}$ as reference state instead of $\omega^{\beta}$ to analyze $\tilde{\omega}_N$ in order to get closed expressions that can be interpreted as descending from the partition function.

\subsection{Relative entropy for generic quasi-free states}

Motivated by the concrete example presented at the end of the previous section, we discuss now how to write the relative entropy in \eqref{eq:S-expansion} in a more compact form with a direct connection to representations of the relative entropy already used in the literature.
We observe in particular that the relative entropy of $\omega^{\beta K}$ relative to $\omega^\beta$ can be computed as the trace of the operator given in \eqref{eq:S-expansion}. 
We have furthermore that the integral over $u$ in \eqref{eq:S-expansion} can be computed and, recalling also
\eqref{eq:expomegabetav1}, the trace of \eqref{eq:S-expansion}
can thus be written as
\begin{align*}
\mathscr{S}_{0,K} &= 
-\Tr \left(
\sum_{n>0}
 \sum_{\pi\in\mathcal{P}_n}
   (-1)^n\!\!\!\!\!\int\limits_{0<u_1<\dots <u_n<\beta}\!\!\!\!\! dU 
    u_1
    \;
    \sqrt{F(\beta -u_{\pi_n})}
    \mathcal{K} 
   F(u_{\pi_1}) 
  \left(
  \prod_{j=1}^{n-1}
  \mathcal{K}
  F(u_{\pi_{j+1}}-u_{\pi_j}) 
  \right)
   \mathcal{K}
  \sqrt{F(\beta -u_{\pi_n})}  \right)\ .
\end{align*}
Exploiting now the cyclic property of the trace and rearranging the domains of integration we can change the factor $u_1$ in the integrand to  $(u_{j+1}-u_{j})$ for any $j\in\{1,\dots n\}$ (with $u_{n+1}= \beta$).
Hence, averaging all these contributions we obtain a factor $\beta/(n+1)$,
\begin{equation}\label{eq:new-S0V}
\mathscr{S}_{0,K}= 
-\Tr \left(
\beta
 \sum_{n>0}
   \frac{(-1)^n}{n+1}\!\!\!\!\!\int\limits_{U\in (0,\beta )^n}\!\!\!\!\! dU 
      \sqrt{F(\beta -u_n)}
    \mathcal{K}
   F(u_1) 
  \left(
  \prod_{j=1}^{n-1}
  \mathcal{K}
  F (u_{j+1}-u_j) 
  \right)
   \mathcal{K}
  \sqrt{F(\beta -u_n)}  
\right) \ .
\end{equation}
The latter expression can be written in a compact form.

To this end, 
similar to the operator valued function $F(u)$ given in  \eqref{eq:fermi-factors} we consider the operator valued function defined for $u$ in $[0,\beta)$
as
\[
F_K(u)\doteq \frac{e^{- u (D+\mathcal{K})}}{1+e^{-\beta (D+\mathcal{K})}}.
\]
and extended by anti-periodicity to $u\in \mathbb{R}$, namely there $F_K(u) =(-1)^{[\frac{u}{\beta}]} F_K(u-\beta[\frac{u}{\beta}])$ where $[u]$ denotes the largest integer smaller than or equal to $u$. This function has values in the set of bounded operators on the one-particle  Hilbert space $\mathcal{H}$.

As proved in Proposition \ref{prop:derivatives} this function for various $\mathcal{K}$, satisfies an expansion in powers of $\lambda$ which can be used to relate $F_{\lambda K}(u)$ to $F_0(u) = F(u)$.
That expansion can be shown to be convergent by estimating the reminder of the truncated expansion,
as we have done in Theorem \ref{lm:T} and equation \eqref{eq:recursive-relation-simplified} in particular.
We have actually the following

\begin{proposition}\label{prop:newS}
Suppose that $\|\mathcal{K}\| < \beta^{-1}$,
the relative entropy given in \eqref{eq:new-S0V} is equal to
\begin{equation}\label{eq:new-mathfrakS}
\mathscr{S}_{0,K} = \Tr (\tilde{\mathfrak{S}})
\end{equation}
where
\begin{equation}\label{eq:new-SigmaTilde}
\tilde{\mathfrak{S}}
\doteq
\beta
\int_0^{\beta} du 
  \int_0^1  d\lambda \lambda
\sqrt{F(\beta-u)}\mathcal{K}  F_{\lambda K}(u)\mathcal{K} \sqrt{F(\beta-u)}
\ .
\end{equation}
If the hypotheses of Lemma \ref{lm:S-traceclass} are satisfied, the trace of $\tilde{\mathfrak{S}}$ is finite.
\end{proposition}

\begin{proof}
We start considering the operator under the trace at the right hand side of equation \eqref{eq:new-S0V}.
The sum over $n$ in this operator is absolutely convergent. Furthermore, it converges to $\tilde{\mathfrak{S}}$.
To prove this claim, we preliminary see that the power series in $\mathcal{K}$ appearing in \eqref{eq:new-S0V} can be directly obtained using the results of Proposition \ref{prop:derivatives}.
To show that the series is absolutely convergent we may then operate as in the proof of Theorem \ref{lm:T}, actually, 
using recursively equation \eqref{eq:recursive-relation-simplified} with $\mathcal{K}$ replaced by $u \mathcal{K}$ we can prove that the series at the right hand side of \eqref{eq:new-S0V} truncated at order $N$ in powers of $\mathcal{K}$ is equal to 
$\tilde{\mathfrak{S}}$ up to an error term which vanishes in the limit $N\to\infty$. 
Notice in particular that the factor $1/(n+1)$ present in the $n$-th order term of the expansion of the right hand side of \eqref{eq:new-S0V} arises from the integration of the factor $u^n$ on the interval $[0,1]$. 
Under the hypothesis of Lemma \ref{lm:S-traceclass} we know that $\mathfrak{S}$ is trace class and its trace is equal to $\mathscr{S}_{0,K}$ hence $\Tr\tilde{\mathfrak{S}}= \Tr \mathfrak{S}$ and it is finite. 
\end{proof}

The expression of the relative entropy presented in Proposition \ref{prop:newS}
and in equation \eqref{eq:new-S0V}
can be used to find a formula for the relative entropy valid under certain hypothesis for quasi-free states. 

\begin{proposition}\label{prop:gen-rel-entropy}
Let $A$, $B$ be strictly positive self adjoint operators on the one-particle Hilbert space 
$\mathcal{H}$ such that also $1-A$ and $1-B$ are strictly positive, where strictly positive means that 0 is not an eigen value. 

Consider the corresponding quasi-free states $\omega_A$ and $\omega_B$ characterized by the two-point functions
\[
\omega_A(\psi(f)\psi^*(g)) = \langle f,A g \rangle, \qquad 
\omega_B(\psi(f)\psi^*(g)) = \langle f,B g \rangle, \qquad f,g\in \mathcal{H}.
\]
Let 
\[
A(u)\doteq 
A^{1-u} \left(1-A\right)^{u}, \qquad B(u)\doteq B^{1-u} \left(1-B\right)^{u}
\]
for $u\in[0,1)$ and for $u\in\mathbb{R}$, $A(u)=(-1)^{[u]}A(u-[u])$, $B(u)=(-1)^{[u]}B(u-[u])$ where $[u]$ denotes the largest integer smaller or equal to $u$.  
Consider the operators $\mathcal{A}$ and $\mathcal{B}$ on $\mathcal{H}_e={L}^{2}([0,1])\otimes {\mathcal{H}}$ obtained extending the operator valued functions $A(u)$ and $B(u)$ to $\mathcal{H}_e$ as convolution operators in $u$ (similar to $\mathcal{F}$ and $\mathcal{F}_K$ above).
Consider 
\[
C \doteq \log (A^{-1}-1) - \log(B^{-1}-1)
\]
and  its extension $\mathcal{C}$ to $\mathcal{H}_e$ as multiplicative operator in $u$.
If ${C}$ is bounded and if 
$\mathcal{C}\mathcal{A}$ is 
Hilbert-Schmidt on $\mathcal{H}_e$, then
the Araki relative entropy of $\omega_B$ relative to $\omega_A$ is 
\be\label{eq:rel-entropy-general}
\mathscr{S}(\omega_A,\omega_B) = \log\det\!{}_2 (1+ \mathcal{C}\mathcal{A}).
\ee
Furthermore, $\mathcal{A}$ and $\mathcal{B}$ have  densely defined inverses
\[
\mathcal{A}^{-1}=-\log(A^{-1}-1)+\frac{d}{du}\qquad \mathcal{B}^{-1}=-\log(B^{-1}-1)+\frac{d}{du}
\]
where the derivative is applied to functions with antiperiodic boundary values, $\mathcal A^{-1}\mathcal B -1$ is Hilbert-Schmidt and
$\mathcal{B}^{-1}\mathcal{A}+\mathcal{A}^{-1}\mathcal{B}-2$ is trace class. In terms of these operators an alternative expression for the relative entropy is
\be\label{eq:rel-entropy-general2}
\mathscr{S}(\omega_A,\omega_B) = \log\det\!{}_2 (\mathcal{B}^{-1}\mathcal{A}).
\ee
where  $\det_2$ is the regularized determinant on the extended one-particle Hilbert space $\mathcal{H}_e = L^{2}(0,1)\otimes \mathcal{H}$.  
We recall here that 
\[
\log\det\!{}_2(1+\mathcal{C} \mathcal{A}) 
=
\Tr_e \left( \log  (1+ \mathcal{C} \mathcal{A}) - \mathcal{C} \mathcal{A} \right) 
\]
where $\Tr_e$ is the trace on $\mathcal{H}_e$.
The sum with the relative entropy of $\omega_A$ relative to $\omega_B$ is
\be\label{eq:rel-entropy-sum}
\mathscr{S}(\omega_A,\omega_B) +\mathscr{S}(\omega_B,\omega_A)=-\Tr_e(\mathcal{B}^{-1}\mathcal{A}+\mathcal{A}^{-1}\mathcal{B}-2)=-\Tr_{ e}\, \mathcal{C}(\mathcal{A}-\mathcal{B})=-\Tr(C(A-B)) \ .
\ee 
\end{proposition}
\begin{proof}
Since $A$ and $1-A$ are strictly positive, there is a selfadjoint operator $D_A$ with $A=(1+e^{-D_A})^{-1}$. The same holds for $B$ and $D_B$. In terms of these operators $C = D_B-D_A$.
We start by considering the right hand side of equation \eqref{eq:new-S0V},
substituting $D$ by $D_A$, and $\mathcal{K}$ by $C$. This gives then the relative entropy $\mathscr{S}(\omega_A,\omega_B)$. 
We observe that up to a cyclic permutation 
that expression can be written in terms of $\mathcal{A}$ and $\mathcal{C}$ on the extended Hilbert space $\mathcal{H}_e$.
It is actually
\begin{align*}
\mathscr{S}(\omega_A,\omega_B)
&= \Tr_e \left(\sum_{n\geq 2} \frac{(-1)^{n+1}}{n} (\mathcal{C}\mathcal{A})^{n} \right)
\end{align*}
where now the trace is for operators on $\mathcal{H}_e$ and the latter is well defined because $\mathcal{C}\mathcal{A}$ is Hilbert-Schmidt by hypothesis. 
Taking the sum over $n$ and recalling the form of the regularized determinant we have
\begin{align*}
\mathscr{S}(\omega_A,\omega_B)
&= \Tr_e \left( \log \left(1+ \mathcal{C}\mathcal{A} \right) - \mathcal{C}\mathcal{A}\right)
\\
&= \log\det\!{}_2 \left( 1+ \mathcal{C}\mathcal{A}\right),
\end{align*}
thus getting  equation \eqref{eq:rel-entropy-general}.

The formulas for the inverses of $\mathcal{A}$ and $\mathcal{B}$ can be directly verified, and $\mathcal{A}^{-1}-\mathcal{B}^{-1}=-\mathcal{C}$. 

We observe in particular that a relation similar to \eqref{eq:recursive-relation-simplified} proved in Theorem \ref{lm:T}, holds for $A(u)$, $B(u)$ and $C$. This relation for the operators 
$\mathcal{A}$ and $\mathcal{B}$, takes the form
\[
\mathcal{B} - \mathcal{A} = -\mathcal{B} \mathcal{C} \mathcal{A}.
\]
With this observation, the equations \eqref{eq:rel-entropy-general2} and \eqref{eq:rel-entropy-sum} follow using invertibility of $\mathcal{A}$ and $\mathcal{B}$ as well as the 
property of the regularized determinant
\be
\det\!{}_2(T)\det\!{}_2(T^{-1})=e^{-\Tr(T+T^{-1}-2)}
\ee
for invertible $T$ with $T-1$ Hilbert-Schmidt. 
Notice also that the operator $\mathcal{C}(\mathcal{A}-\mathcal{B})$ is trace class because
$\mathcal{C}(\mathcal{A}-\mathcal{B})= \mathcal{C}\mathcal{B}\mathcal{C}\mathcal{A}$, $\mathcal{C}\mathcal{A}$ is Hilbert-Schmidt by hypothesis and the same holds for $\mathcal{C}\mathcal{B} = \mathcal{C}\mathcal{A}-\mathcal{C}\mathcal{B}\mathcal{C}\mathcal{A}$ because $\mathcal{B}$ and $\mathcal{C}$ are bounded. Moreover, its integral kernel with respect to $u$, 
\[\mathcal{C}(\mathcal{A}-\mathcal{B})(u,u')=C(A(u-u')-B(u-u'))\]
is continuous and depends only on the difference of the variables, hence
\[\Tr_e{\mathcal{C}(\mathcal{A}-\mathcal{B})}=\int_0^1 du \,\Tr C(A-B)\;.
\]
\end{proof}

We observe that, with the notation of Proposition \ref{prop:gen-rel-entropy},  denoting by ${A}_\lambda(u) = (e^{u (D_A+\lambda C)}+e^{-(1-u)(D_A+\lambda C)})^{-1}$ so that $A_0(0)=A$ and $A_1(0) = B$ and considering $\mathcal{A}_\lambda$ the corresponding operators on the extended Hilbert space $\mathcal{H}_e$, the formula for the relative entropy obtained in Proposition \ref{prop:gen-rel-entropy} can be further simplified, we actually have that
\begin{align*}
\mathscr{S}(\omega^A,\omega^B) 
= \Tr_e \int_0^1 d\lambda 
\sum_{n\geq 2} (-1)^{n+1}  \mathcal{C} (\lambda \mathcal{A}\mathcal{C} )^{n-1} \mathcal{A}
= \Tr_e \int_0^1 d\lambda \;
 \mathcal{C} (\mathcal{A}_\lambda - \mathcal{A})
= \Tr \int_0^1 d\lambda \;
 {C} ({A}_\lambda(0) - {A})\,.
\end{align*}

Proposition \ref{prop:gen-rel-entropy}  can be also used to make manifest the relation of the Araki relative entropy and the Kullback–Leibler divergences used elsewhere in the literature, see \cite{KL} and e.g. \cite{Finster} where a similar formula for quasifree states in a simpler situation is obtained.
Recalling Proposition \ref{prop:gen-rel-entropy}, we  observe that \eqref{eq:rel-entropy-general2} can be written in the form 
\begin{align*}
\mathscr{S}(\omega^A,\omega^B) 
&= \Tr_e \left( \log\left( \mathcal{B}^{-1}\mathcal{A}\right) -( \log (A^{-1}-1) - \log(B^{-1}-1)))\mathcal{A} \right)
\end{align*}
This expression is very close and reminds  the following formula for the relative entropy derived in the case of compact operators in \cite{Finster}.
\begin{align*}
\Tr\left(A(\log (A)-\log(B)) +  (1-A) (\log(1-A)-\log(1-B))\right) \ .
\end{align*}

A similar formula for the entanglement entropy in the case of compact operators can be found in \cite{Finster}, see also \cite{Casini2008}.
The previous expression which holds under some simplifying hypothesis (e.g. in the finite dimensional case), makes manifest the relation of the Araki relative entropy and the Kullback–Leibler divergences used elsewhere in the literature, see \cite{KL} and e.g. \cite{Finster}. For the non compact case, Proposition \ref{prop:gen-rel-entropy}
and equations \eqref{eq:rel-entropy-general} \eqref{eq:rel-entropy-general2} furnishes generalized expressions for the relative entropy which holds also in the infinite dimensional situation.

Finally we recall that the analysis of the form of modular operator out of properties of the two-point function has been recently studied in \cite{Frob}
for the case of Fermions in 
two dimensions see also \cite{Cadamuro}. In this respect, results of Proposition \ref{prop:gen-rel-entropy} can be used to obtain corrections to modular operators under perturbations. 

\subsection{Non-equilibrium states and entropy production}

The machinery discussed in the previous section permits to analyze also the relative entropy between the equilibrium state and the one obtained composing to the perturbed dynamics. 
This open the possibility of obtaining concrete examples in which states which are not in equilibrium can be studied by thermodynamic like quantities.

Properties of the relative entropy and their relation to the notion of entropy production for states which are not in equilibrium and or stationary but not in equilibrium 
have been studied extensively by Jak{\v s}i\'c and Pillet, see e.g.  \cite{JP01, JP02, JP03}. See also the earlier works \cite{Oj0, Oj1, Oj2, Spohn} and \cite{Ruelle2, Ruelle3}

We use the analysis presented in the previous section about relative entropy to derive an expression which evaluates the entropy production (of relative entropy).
The concepts obtained are in close contact with the works of Jak{\v s}i\'c and Pillet.

We start introducing the state evolved in time. The state we are considering is the equilibrium state for the free theory. The evolution we are considering is the interacting time evolution
\[
\omega^{\beta}_t = \omega^\beta\circ \tau_t^K\ .
\]
At the level of one-particle Hilbert space this corresponds to consider
\[
\frac{1}{1+e^{-\beta D(t)}} \doteq  e^{i t (D+\mathcal{K})} \frac{1}{1+e^{-\beta D}} e^{-i t (D+\mathcal{K})}
\]
where 
\[
D(t) = e^{i t (D+\mathcal{K})} D e^{-i t (D+\mathcal{K})} = D + \mathcal{K}- e^{i t (D+\mathcal{K})} \mathcal{K}e^{-i t (D+\mathcal{K})}  = D+ \mathcal{K} - \mathcal{K}(t)  = D + {L}_t\ .
\]
 Now, we may use Proposition \ref{prop:newS} with $L_t = \mathcal{K} - \mathcal{K}(t)$ at the place of $\mathcal{K}$ to compute the relative entropy  of $\omega^\beta_t$ relative to $\omega^\beta$, namely using equation  \eqref{eq:new-SigmaTilde}
\begin{equation}\label{eq:rel-entropy-omegat}
\mathscr{S}(\omega^\beta, \omega^{\beta}_t )
=\beta \Tr
\int_0^{\beta} du 
  \int_0^1  d\lambda \lambda
\sqrt{F(\beta-u)}L_t  F_{\lambda  L_t}(u)L_t \sqrt{F(\beta-u)}
\ .
\end{equation}
We furthermore observe that 
\begin{equation}\label{eq:entropy-differences}
\mathscr{S}(\omega^\beta, \omega^{\beta}_t )
=
\int_0^t 
\frac{d}{d\tilde{t}} 
\mathscr{S}(\omega^\beta, \omega^{\beta}_{\tilde{t}} )d\tilde{t}
+
\mathscr{S}(\omega^\beta, \omega^{\beta} )\ .
\end{equation}
Hence, in analogy to the results of Theorem 1.1 in \cite{JP01},  
the following quantity is thus understood as the entropy production in the state $\omega^{\beta}_t$ relative to $\omega^\beta$
\[
E_t\doteq \frac{d}{d{t}} 
\mathscr{S}(\omega^\beta, \omega^{\beta}_{{t}} )\ .
\]
To get a more manageable expression of this quantity, we proceed as follows.
Denoting by 
\[
\Phi_t = \frac{d}{dt} D(t) = \frac{d}{dt} L_t =  e^{-i t (D+\mathcal{K})} [D,\mathcal{K}] e^{i t (D+\mathcal{K})}\ ,
\]
we can write the entropy production as
\begin{align*}
E_t 
=& \beta \Tr
\int_0^{\beta} du 
  \int_0^1  d\lambda \lambda
\sqrt{F(\beta-u)} \Phi_t
F_{\lambda  L_t}(u)L_t \sqrt{F(\beta-u)}
\\
&+\beta \Tr
\int_0^{\beta} du 
  \int_0^1  d\lambda \lambda
\sqrt{F(\beta-u)} L_t
F_{\lambda  L_t}(u)\Phi_t \sqrt{F(\beta-u)}
\\
&-\beta \Tr
\int_0^{\beta} du 
\int_0^{\beta} du_1 
  \int_0^1  d\lambda \lambda
\sqrt{F(\beta-u)} L_t
F_{\lambda  L_t}(u-u_1)
\Phi_t
F(u_1)
L _t \sqrt{F(\beta-u)}
\end{align*}
where in the second equality we used the result of Proposition \ref{prop:derivatives}.
The well posedness of the trace written above can be discussed as in Lemma \ref{lm:S-traceclass}.
We can now compute the limit $t\to\infty$ in the relation \eqref{eq:entropy-differences}.
If return to equilibrium holds, we have that $\lim_{t\to\infty} \omega_t^\beta = \omega^{\beta K}$ and 
\[
\mathscr{S}(\omega^\beta, \omega^{\beta K} )
-
\mathscr{S}(\omega^\beta, \omega^{\beta} )
=
\lim_{t\to\infty}
\int_0^t E_{\tilde{t}}\, d\tilde{t}
\]
where $\mathscr{S}(\omega^\beta, \omega^{\beta K} ) = \mathscr{S}_{0,K}$ given in \eqref{eq:rel:entropy-trace}.
If $\mathscr{S}(\omega^\beta, \omega^{\beta K} )$  is finite and since
$\mathscr{S}(\omega^\beta, \omega^{\beta} )=0$ we have that 
\[
\lim_{t\to \infty} E_t =0 \ .
\]
On the other end, if no return to equilibrium holds, we expect that, if  $\omega^\beta_t$  converges in the limit $t\to \infty$ to a state which is not normal to the $\omega^\beta$ (after the ergodic mean), the corresponding relative entropy diverges.
In this case, it is thus expected that the entropy production converges to a constant.

\section*{Acknowledgments}
The authors want to thank 
the referees for pointing out some relevant literature, 
which was useful to clarify  the relation of our results  to existing studies.
NP is  grateful for the support of the National Group of Mathematical Physics (GNFM-INdAM).
The research conducted by NP was supported in part by the MIUR Excellence Department Project 2023-2027 awarded to the Dipartimento di Matematica of the University of Genova, CUPD33C23001110001.

\appendix
\section{}

\subsection{Quasi-equivalence}

In this appendix we prove that the states $\omega^{\beta K}$ and $\omega^\beta$ which are the quasi free states constructed out of the operators $(1+e^{-\beta(D+\mathcal{K})})^{-\frac12}$ and $(1+e^{-\beta D})^{-\frac12}$ acting on the one-particle Hilbert space $\mathcal{H}$, are quasi equivalent if $\mathcal{K}$ is sufficiently small.
To prove it we verify the Powers-St{\o}rmer criterion \cite{PowerStormer}.

\begin{proposition}\label{prop:PowerStormer}
Suppose that $\|\mathcal{K}\| \beta < 1$ and the hypotheses of Lemma \ref{lm:Cook-criterion-D+K} hold.
The Powers-St{\o}rmer criterion is satisfied by the states
$\omega^\beta$ and $\omega^{\beta K}$, hence they are normal states.
\end{proposition}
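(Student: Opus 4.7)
The strategy is to invoke the Powers--St\o rmer characterisation of quasi-equivalence for gauge-invariant quasi-free CAR states \cite{PowerStormer}: the states $\omega^{\beta}$ and $\omega^{\beta K}$ are quasi-equivalent if and only if $A^{1/2}-B^{1/2}$ is Hilbert--Schmidt, where
\[
A=\frac{1}{1+e^{-\beta D}},\qquad B=\frac{1}{1+e^{-\beta(D+\mathcal{K})}}
\]
are the one-particle operators determining the two-point functions in \eqref{eq:two-point-kms} and in Theorem~\ref{lm:T}. By the Powers--St\o rmer operator inequality $\|A^{1/2}-B^{1/2}\|_{HS}^{2}\le\|A-B\|_{1}$ it is enough to show that $A-B$ is trace class. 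In the notation of Theorem~\ref{lm:T} one has $A=T(0)$ and $B=T(\beta)$, so the absolutely convergent series \eqref{eq:def-T} yields $B-A=\sum_{n\ge 1}B_{n}$, where $B_{n}$ is essentially the $n$-th order contribution in $\mathcal{K}$ appearing in \eqref{eq:def-Bn}. Absolute convergence of this series in operator norm is guaranteed by $\beta\|\mathcal{K}\|<1$; the task therefore reduces to controlling the trace norm of each $B_{n}$ and summing the result.

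I would follow the same decomposition scheme used in the proof of Lemma~\ref{lm:S-traceclass}. Inserting $\mathcal{K}=P\mathcal{K}P+P\mathcal{K}Q+Q\mathcal{K}P+Q\mathcal{K}Q$ at every one of the $n$ positions in $B_{n}$ and using that the Fermi factors $F_{\pm}$ commute with $P$ and $Q$, one obtains a decomposition of $B_{n}$ into $4^{n}$ contributions indexed by sequences in $\{PP,PQ,QP,QQ\}$. Whenever at least one off-diagonal block $P\mathcal{K}Q$ or $Q\mathcal{K}P$ appears, it supplies a Hilbert--Schmidt factor by Lemma~\ref{lm:KHS}, while the remaining operators are bounded by~$1$ because $F_{\pm}e^{-uD}$ has norm $\le 1$ for $u\in(0,\beta)$ (as already exploited in Lemma~\ref{le:Sbounded}). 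Each such mixed contribution is therefore trace class, with a bound of order $\|P\mathcal{K}Q\|_{HS}\,(\beta\|\mathcal{K}\|)^{n-1}$, so its contribution to the total sum is a convergent geometric series.

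The main obstacle is the two purely diagonal sequences $P\mathcal{K}P\cdots P\mathcal{K}P$ and $Q\mathcal{K}Q\cdots Q\mathcal{K}Q$: these are the analogues of the operators $C_{n}$ in Lemma~\ref{lm:S-traceclass} and do not inherit a trace-class bound from the off-diagonal Hilbert--Schmidt structure. For them I would repeat the splitting argument from that proof: on every point of the integration domain at least one of the $n+1$ interval lengths between consecutive $\mathcal{K}$-insertions exceeds $\beta/(n+1)$, so the corresponding factor $e^{-wD}/(1+e^{-\beta D})$ can be split into two halves, each of which becomes Hilbert--Schmidt after applying the pointwise majorisation $|\hat{\mathcal{K}}(p,q)|\le H(p-q)$ of Lemma~\ref{lm:Kbounded}. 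This yields an estimate of the form $\|B_{n}\|_{1}\le c\,(n+1)^{3}\|H\|_{2}^{2}\,(\beta\|H\|_{1})^{n-1}$ with $c$ independent of $n$, exactly as in \eqref{eq:step70}. Summability over $n$ is ensured once $\beta\|H\|_{1}<1$, which holds under the hypotheses inherited from Lemma~\ref{lm:Cook-criterion-D+K} provided $\mathcal{A}$ is chosen sufficiently small (possibly shrinking the smallness threshold used there). Combining these estimates gives $A-B$ trace class, and hence $\|A^{1/2}-B^{1/2}\|_{HS}<\infty$, which is the Powers--St\o rmer criterion and in turn yields quasi-equivalence (and hence mutual normality) of $\omega^{\beta}$ and $\omega^{\beta K}$.
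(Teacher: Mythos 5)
Your reduction of the Powers--St\o rmer criterion to ``$A-B$ is trace class'' via the inequality $\|A^{1/2}-B^{1/2}\|_{HS}^{2}\le\|A-B\|_{1}$ is a legitimate strategy in principle, but the way you then try to establish the trace-class property has a genuine gap. First, a single off-diagonal block $P\mathcal{K}Q$ does \emph{not} make a mixed contribution trace class: a product (bounded)$\cdot$(Hilbert--Schmidt)$\cdot$(bounded) is only Hilbert--Schmidt, and indeed your announced bound $\|P\mathcal{K}Q\|_{HS}(\beta\|\mathcal{K}\|)^{n-1}$ contains only one Hilbert--Schmidt norm, so it can at best control an HS norm, not a trace norm. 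To get trace class you need \emph{two} Hilbert--Schmidt factors in each term. Second, and more fundamentally, the expansion of $A-B=T(0)-T(\beta)$ starts at \emph{first} order in $\mathcal{K}$, whereas the splitting device of Lemma \ref{lm:S-traceclass} that you want to recycle relies explicitly on $\mathfrak{S}=\mathcal{K}B$ being at least quadratic: the factor $e^{-wD}/(1+e^{-\beta D})$ with $w\ge\beta/(n+1)$ is split into two square roots, and each half must be absorbed by a \emph{different} $\mathcal{K}$ to produce two Hilbert--Schmidt factors. With only one $\mathcal{K}$ available in the $n=1$ term $\int_0^\beta du\,F_-\mathcal{K}_{iu}F_+$ this is impossible; a direct computation of its kernel after the $u$-integration shows that the off-diagonal blocks only gain a factor $(\omega(p)+\omega(q))^{-1}$ relative to $\hat{\mathcal K}(p,q)$, which suffices for Hilbert--Schmidt but not, by any argument you give, for trace class. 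So the trace-class property of $A-B$ — the cornerstone of your reduction — is not established.

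This is precisely why the paper takes a different route: it never shows $A-B\in\mathcal{L}^1$, but attacks $A^{1/2}-B^{1/2}$ directly using the intertwining relation $A^{1/2}=\Omega_+\,B^{1/2}\,\Omega_+^{*}$ with the M\o ller operator $\Omega_+$. Writing $\Omega_+=1+\tilde\Omega$, the off-diagonal parts of $\tilde\Omega$ are Hilbert--Schmidt (inherited from $P\mathcal{K}Q$), and the remaining diagonal piece is rewritten as $\int_0^\infty dt\,e^{it(D+\mathcal{K})}\,i[D,G]\,e^{-it(D+\mathcal{K})}$ with $G$ Hilbert--Schmidt; the $[D+\mathcal{K},G]$ part integrates to a boundary term and the $[\mathcal{K},G]$ remainder is controlled by the Cook criterion of Lemma \ref{lm:Cook-criterion-D+K}. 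Note that your proof never actually uses the Cook criterion except through ``smallness'' — the fact that the proposition explicitly assumes the hypotheses of Lemma \ref{lm:Cook-criterion-D+K} is a hint that the intended argument is scattering-theoretic. If you want to salvage your approach you would have to supply an independent proof that the first-order term (and, per point one, every term with exactly one off-diagonal block) is trace class, which does not follow from the lemmas you cite.
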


\begin{proof}
Let $A = (1+e^{-\beta (D+ \mathcal{K})})^{-1}$ and $B = (1+e^{-\beta D})^{-1}$.
We need to prove that 
$A^{\frac{1}{2}} - B^{\frac{1}{2}} $ and 
$(1-A)^{\frac{1}{2}} - (1-B)^{\frac{1}{2}}$ are Hilbert-Schmidt operators. 

We discuss in detail the finiteness of the Hilbert-Schmidt norm of the first one. Observing that 
\[
1-\frac{1}{1+e^{-\beta D}} = \frac{1}{1+e^{\beta D}} 
\]
the finiteness of the Hilbert-Schmidt norm of the second operator follows similarly.

Notice that 
\[
A^{\frac{1}{2}} - B^{\frac{1}{2}} = 
\Omega_+ \left(\frac{1}{1+e^{-\beta D}}
\right)^{\frac{1}{2}}\Omega_+^{*}
-
\left(\frac{1}{1+e^{-\beta D}}\right)^{\frac{1}{2}}
\]
where $\Omega_+$ is the M{\o }ller operator discussed above.
We decompose $\Omega_+ =\tilde{\Omega} +1$ and we obtain
\[
A^{\frac{1}{2}} - B^{\frac{1}{2}} = 
\tilde{\Omega} \left(\frac{1}{1+e^{-\beta D}}
\right)^{\frac{1}{2}}
+
\left(\frac{1}{1+e^{-\beta D}}
\right)^{\frac{1}{2}}\tilde{\Omega}^{*}
+
\tilde{\Omega} \left(\frac{1}{1+e^{-\beta D}}
\right)^{\frac{1}{2}}\tilde{\Omega}^{*}
\]
Let us recall that 
$P\mathcal{K}Q$
is Hilbert-Schmidt as we have shown in  
Lemma \ref{lm:KHSunifrom} and \ref{lm:KHS}.
It follows that $P\tilde{\Omega}Q$ is Hilbert-Schmidt too.
The Fourier transform of the integral kernel of $\tilde{\Omega}$ is bounded by $\tilde{H}(p-q)$ which is an integrable and $L^2$ function.

Furthermore, the Fourier transform of $Q (1+e^{-\beta D})^{-\frac{1}{2}}$ is an $L^1 \cap L^2$ function because it is bounded and decays rapidly for large momenta.

Hence, composing with $P+Q$ on both sides of all the operators appearing in  $A^{\frac{1}{2}} - B^{\frac{1}{2}}$,  we get that the contributions which contain at least one $Q$ are Hilbert-Schmidt. 

It remains to discuss
\begin{align*}
R &= 
P \Omega_+ P \left(\frac{1}{1+e^{-\beta D}}
\right)^{\frac{1}{2}}P\Omega_+^{*}P
-
P\left(\frac{1}{1+e^{-\beta D}}\right)^{\frac{1}{2}}P\\
&= 
\lim_{T\to\infty}
Pe^{i T (D+\mathcal{K})}P e^{-i T D}
\left(\frac{1}{1+e^{-\beta D}}
\right)^{\frac{1}{2}}
e^{i T D} Pe^{-i T (D+\mathcal{K})}P
-
P\left(\frac{1}{1+e^{-\beta D}}\right)^{\frac{1}{2}}
\\
&= 
\lim_{T\to\infty}
\int_0^T dt 
Pe^{i t (D+\mathcal{K})} Pe^{-i t D}
i[\mathcal{K}(t),\left(\frac{1}{1+e^{-\beta D}}
\right)^{\frac{1}{2}}]
e^{i t D} P e^{-i t (D+\mathcal{K})}P
\\
&= 
\lim_{T\to\infty}
\int_0^T dt 
Pe^{i t (D+\mathcal{K})} P
i[\mathcal{K},\left(\frac{1}{1+e^{-\beta D}}
\right)^{\frac{1}{2}}]
P e^{-i t (D+\mathcal{K})} P
\end{align*}
Notice that 
\[
O=P[\mathcal{K},({1+e^{-\beta D}})^{-\frac{1}{2}}] P = P[D,G]P
\]
where $G$ is a Hilbert-Schmidt operator.
To prove the last claim we compute the  Fourier transform of (the integral kernel of ) $O$, we get 
\begin{align*}
O(p,q)
&=
 \left[
 \mathcal{K}(p,q)
 \left(\frac{1}{\sqrt{1+e^{-\beta \omega(p)}}}
 -
 \frac{1}{\sqrt{1+e^{-\beta \omega(q)}}}
\right)
\right]
\\
&=
-\int_0^1 d\lambda
 \left[
 (\omega(p)-\omega(q)) 
\mathcal{K}(p,q)
 \frac{\beta}{2}\left(\frac{e^{-\beta \lambda \omega(p)}
 e^{-\beta (1-\lambda) \omega(q)}}{\sqrt{(1+e^{-\beta \lambda \omega(p)}
 e^{-\beta (1-\lambda) \omega(q)})^3}}
\right)
\right]
\end{align*}
The integral kernel of the operator $G$ is thus
\[
G(p,q) = -\int_0^1 d\lambda
 \left[
\mathcal{K}(p,q)
 \frac{\beta}{2}\left(\frac{e^{-\beta \lambda \omega(p)}
 e^{-\beta (1-\lambda) \omega(q)}}{\sqrt{(1+e^{-\beta \lambda \omega(p)}
 e^{-\beta (1-\lambda) \omega(q)})^3}}
\right)
\right]
\]
and this is the integral Kernel of an Hilbert-Schmidt operator because, 
$\mathcal{K}(p,q)$ is bounded by an $L^1\cap L^2$ function of $(p-q)$ thanks to the results of Lemma \ref{lm:Kbounded}. Furthermore, for every value of $\lambda \in [0,1]$ the function in round brackets is $L^2$ either in $p$ or $q$ and bounded in the other variable.

Observe now that $O=P[D+\mathcal{K},G]P-P[\mathcal{K},G]P$. The corresponding contribution in $R$ involving $D+\mathcal{K}$ is a total derivative in time, hence it can be integrated and the result is bounded uniformly in $T$. Having established  the validity of the Cook criterion in Lemma \ref{lm:Cook-criterion-D+K}, the integral in time of the reminder term is bounded too. More precisely, we have 
\begin{align*}
R=\lim_{T\to\infty}
\int_0^T dt 
Pe^{i t (D+\mathcal{K})} 
i
P[D,G]P
 e^{-i t (D+\mathcal{K})}
P=&
\lim_{T\to\infty}
Pe^{i T (D+\mathcal{K})} 
PGP
 e^{-i T (D+\mathcal{K})}
 P
-
PGP
+\\
&-
\lim_{T\to\infty}
\int_0^T dt 
Pe^{i t (D+\mathcal{K})} 
i[\mathcal{K},PGP]
 e^{-i t (D+\mathcal{K})}
 P
\end{align*}
$G$ is Hilbert-Schmidt, $e^{iT (D+K)}$, $P$ are bounded operator of norm $1$, since $G$ is Hilbert-Schmidt we have that the first two contributions at the right hand side are Hilbert-Schmidt. 
The last contribution can be analyzed using the Cook criterion for $D+\mathcal{K}$ and the boundedness of the exponential $e^{i t (D+\mathcal{K})}$. Since $G$ is Hilbert-Schmidt we have that also the last contribution is Hilbert-Schmidt. 
\end{proof}

\subsection{Time decays}

We recall some estimates about the decay in time of $\int e^{i\omega(p)t} f(p)d^3p$. We collect the properties used in the paper in the following lemma.

\begin{lm}\label{lm-decay}
Consider $f$ a Schwartz function on $\mathbb{R}^3$. It holds that, for $t>1$
\[
|\int e^{i\omega(p)t} f(p)d^3p| \leq \frac{C}{t^{3/2}} \left(
|f(0)|+\sum_{|\alpha|\leq 2} \| D^\alpha f\|_\infty  \right) 
+\frac{C}{t^2} \left( \sum_{|\alpha|\leq 2} \| D^{\alpha}f \|_1 \right)
\]
where $\omega(p)=\sqrt{p^2+m^2}$ with $m>0$, $\alpha$ is a multi-index and $D^\alpha$ denotes the derivatives corresponding to that multi-index.
If $f$ is the Fourier transform of $g\in C^\infty_0(\mathbb{R}^3)$, we have that 
\[
|\int e^{i\omega(p)t} \hat{g}(p)d^3p| \leq \frac{C'}{t^{3/2}} \left(
|\hat{g}(0)|+ \| \hat{g}\|_1  \right) 
\]
where the constant $C'$ depends on the domain of $g$.
\end{lm}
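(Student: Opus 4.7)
The plan is to carry out a stationary-phase analysis. The phase $\omega(p)=\sqrt{p^2+m^2}$ has a unique critical point at $p=0$, where the Hessian $\partial_i\partial_j\omega(0)=\delta_{ij}/m$ is non-degenerate with signature $(+,+,+)$. I would first fix a smooth cutoff $\chi\in C^\infty_0(\mathbb{R}^3)$ equal to $1$ on $|p|\leq 1$ and vanishing for $|p|\geq 2$, and split $I(t)=\int e^{it\omega(p)}f(p)\,d^3p=I_1(t)+I_2(t)$ with $I_1$ localised near $0$ (amplitude $\chi f$) and $I_2$ supported away from $0$ (amplitude $(1-\chi)f$).

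For the away-from-origin piece $I_2$, the phase gradient satisfies $|\nabla\omega(p)|=|p|/\omega(p)\geq 1/\sqrt{1+m^2}$ on $\mathrm{supp}(1-\chi)$, so the coefficients of the transport operator $L^{*}=-(it)^{-1}\nabla\cdot\bigl(\nabla\omega/|\nabla\omega|^{2}\,\cdot\bigr)$, together with all their derivatives, are smooth and bounded there. Two integrations by parts against $L^{*}$, using $\int (L^{*}h)\,e^{it\omega}=\int h\,e^{it\omega}$, transfer up to two derivatives onto $(1-\chi)f$ at the cost of a factor $t^{-2}$, and yield
\begin{equation*}
|I_2(t)|\leq \frac{C}{t^{2}}\sum_{|\alpha|\leq 2}\|D^\alpha f\|_1,
\end{equation*}
which matches the second contribution in the claim.

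For the near-origin piece $I_1$, I would decompose $f(p)=f(0)+(f(p)-f(0))$. The constant part $f(0)\int\chi(p)e^{it\omega(p)}\,d^3p$ is a classical oscillatory integral with compactly supported amplitude and a non-degenerate quadratic phase (after subtracting the constant $m$ and invoking the Morse lemma to put $\omega-m$ into the form $q^2/(2m)$), for which the stationary-phase formula gives a leading term of order $t^{-3/2}$ times $|f(0)|$ with remainder of order $t^{-5/2}$. The remainder $f(p)-f(0)$ vanishes linearly at the stationary point and therefore picks up an extra factor $t^{-1}$ via one additional integration by parts in Morse coordinates; its contribution is $O(t^{-5/2})$ with constant controlled by $\sum_{|\alpha|\leq 2}\|D^\alpha f\|_\infty$, where the supremum norm is admissible because $\chi$ is compactly supported. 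Since $t^{-5/2}\leq t^{-3/2}$ for $t\geq 1$, everything combines into
\begin{equation*}
|I_1(t)|\leq \frac{C}{t^{3/2}}\Bigl(|f(0)|+\sum_{|\alpha|\leq 2}\|D^\alpha f\|_\infty\Bigr),
\end{equation*}
and summing with the bound on $I_2$ proves the first estimate. The main technical obstacle is the clean Morse reduction near $p=0$ together with the explicit tracking of the error in terms of derivatives of $f$ up to second order only; this is handled by passing to the Fresnel integral $\int e^{itq^{2}/(2m)}a(q)\,d^3q$ after the change of variables and integrating by parts twice, which extracts the leading $t^{-3/2}$ from the Gaussian and an extra $t^{-1}$ from the Laplacian of $a$.

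For the Fourier-transform statement I would specialise the previous estimate to $f=\hat g$ with $g\in C^\infty_0(\mathbb{R}^3)$, and dominate every norm on the right-hand side by $\|\hat g\|_1$ up to constants depending only on the diameter $R$ of $\mathrm{supp}\,g$. Indeed $|f(0)|=|\hat g(0)|$ directly, while $\|D^\alpha \hat g\|_\infty=\|\widehat{(-ix)^\alpha g}\|_\infty\leq\|x^\alpha g\|_1\leq R^{|\alpha|}\|g\|_1\leq CR^{|\alpha|+3}\|\hat g\|_1$, using $\|g\|_\infty\leq (2\pi)^{-3}\|\hat g\|_1$ (Fourier inversion) and $\|g\|_1\leq|\mathrm{supp}\,g|\cdot\|g\|_\infty$ (compactness). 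An analogous Paley-Wiener estimate $\|\hat h\|_1\leq C_R\|(1-\Delta)^{2}h\|_\infty$ applied to $h=(-ix)^\alpha g$ controls the $L^1$ norms of the derivatives of $\hat g$ by $g$-dependent constants times $\|\hat g\|_1$. Finally $t^{-2}\leq t^{-3/2}$ for $t\geq 1$ lets the two contributions be merged into the announced $t^{-3/2}$ bound with coefficient $|\hat g(0)|+\|\hat g\|_1$.
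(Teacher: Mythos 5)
Your overall strategy mirrors the paper's: split the integral into a piece localised near the stationary point $p=0$ and a piece supported away from it, integrate by parts twice in the latter to produce $t^{-2}$ decay controlled by $L^1$ norms of derivatives, and extract the $t^{-3/2}$ rate from the stationary phase contribution near the origin. Where you genuinely diverge is in the treatment of the near-origin piece. You invoke the Morse lemma to reduce the phase $\omega-m$ to the model form $q^{2}/(2m)$ and then cite Hörmander's Theorem 7.7.5 with one subtracted term, obtaining the $f(0)\,t^{-3/2}$ leading piece plus an $O(t^{-5/2})$ remainder controlled by second derivatives. The paper instead performs an explicit change of variables $|p|=\sqrt{y(y+2m)}$, which makes $\omega(p)=y+m$ exactly linear in $y$; the near-origin piece becomes a Fresnel-type integral $\int e^{iyt}\sqrt{y}\,g(y,\Omega)\,dy\,d\Omega$. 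Subtracting $g(0)$, rescaling $s=ty$ and integrating by parts once then gives explicit control of the remainder with no appeal to any normal-form theorem. Both routes are legitimate; the paper's is more elementary and leaves the dependence of the constant on $f$ completely transparent, while yours is shorter modulo the cited theorem but requires tracking how the constant in Hörmander's estimate depends on the Morse coordinate change.

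For the second claim, your sup-norm chain $\|D^{\alpha}\hat{g}\|_\infty\le\|x^{\alpha}g\|_1\le C_R\|g\|_\infty\le C_R\|\hat{g}\|_1$ coincides with the paper's. However, the Paley--Wiener argument you propose for the $L^1$ norms does not close: you would need $\|(1-\Delta)^{2}(x^{\alpha}g)\|_\infty\lesssim\|\hat{g}\|_1$, but the terms $D^{\gamma}g$ with $|\gamma|\le 4$ arising in the expansion are controlled by $\|p^{\gamma}\hat{g}\|_1$ and not by $\|\hat{g}\|_1$, so the chain circles back to moments of $\hat{g}$ rather than its $L^1$ norm. The tool that actually finishes the job is Bernstein's inequality: since $g$ has compact support of diameter $R$, the function $\hat{g}$ is band-limited, whence $\|D^{\alpha}\hat{g}\|_p\le (CR)^{|\alpha|}\|\hat{g}\|_p$ for every $1\le p\le\infty$, in particular $\|D^{\alpha}\hat{g}\|_1\le C_R\|\hat{g}\|_1$. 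The paper is equally silent about this $L^1$ contribution and only records the sup-norm chain, so this omission is shared with the published proof; but Bernstein, not Paley--Wiener applied to $x^{\alpha}g$, is the correct way to fill it.
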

\begin{proof}
The oscillatory integral considered here can be treated with stationary phase methods.
The phase has a single stationary point at $p=0$. 
We consider a function $\rho \geq 0$ which is smooth, of compact support and equal to $1$ in a neighborhood of $0$ and we decompose $f=f_1+f_0$ where $f_0 = \rho f$. 

The contribution with $f_1$ can be estimated similarly to Theorem 7.7.1 in \cite{Hormander} while the contribution with $f_0$ as in  Theorem 7.7.5 of  \cite{Hormander}. 
In order to get concrete estimates for these contributions we use a strategy similar to the one that can be found in the 
Lemma A.1 in \cite{MedaPinamonti}, see also Appendix A in \cite{BrosBuchholz}.

In particular, since $\omega(p)$ is smooth and has no stationary points on the support of $f_1$ for the contribution with $f_1$ we have by partial integration
\begin{align*}
    t^2\int e^{i\omega(p)t} f_1(p)d^3p &= 
\int e^{i\omega(p)t} 
\left(\frac{(m^2 - 2 p^2) }{p^4} f_1(p) 
-
  \frac{(m^2 + 4 p^2)}{p^3} f_1'(p) - 
     \frac{(m^2 + p^2)}{p^2} f_1''(p)
 \right) d^3 p
\end{align*}
where $f'$ denotes the derivative with respect to $|p|$. Hence, noticing that $f_1$ is supported away from $p=0$,  for a suitable constant $C$ we obtain 
\begin{align*}
|t^2\int e^{i\omega(p)t} f_1(p)d^3p|  
    &\leq C\sum_{|\alpha|\leq 2} \int |D^\alpha f_1(p)| d^3p.
\end{align*}
To analyze the contribution of $f_0$ we use spherical coordinates $p=(|p|,\Omega)$ and we
change variable of integration $|p|= \sqrt{(y+2m)y}$ and then $s=ty$ With this choice
\[
I=\int e^{i \omega(p) t} f_0(p) d^3 p =
e^{i2m t}\int e^{i y t} f(\sqrt{y(y+2m)},\Omega)\sqrt{(y+2m)}(2y +2m)
\sqrt{y} dy d\Omega
\]
denoting by $g(y,\Omega)=f_0(\sqrt{y(y+2m)},\Omega)\sqrt{(y+2m)}(2y +2m)$ and separating the leading contribution at large time we have that
\begin{align*}
 I 
&=
e^{i2m t}\int e^{i y t} 
g(0)
\sqrt{y} dy d\Omega
+
e^{i2m t}\int e^{i y t} 
\left(g(y,\Omega)-g(0)\right)
\sqrt{y} dy d\Omega
\\
&= 
\lim_{\epsilon \to 0}
\frac{e^{i2m t}}{t^{\frac{3}{2}}}\int e^{(i -\epsilon) s } g(0)
\sqrt{s} ds d\Omega + R
\\
&= 
\lim_{\epsilon \to 0}
\frac{2 \pi^{\frac{3}{2}}}{ (-i +\epsilon)^{\frac{3}{2}}}
\frac{g(0)}{t^{\frac{3}{2}}} e^{i2m t}
+ R 
\end{align*}
where we have inserted an $\epsilon$ regulator which is eventually removed, and we have computed the integral of $e^{(i-\epsilon) s}\sqrt{s}$.
The first term at the right hand side is the leading order contribution
and it is bounded by $C f_0(0)$,
the reminder $R$ decays more rapidly. 
For our purpose, we prove that the reminder decays for large time in the same way.
Integrating by part in the reminder $R$ we have
\begin{align*}
R &= 
\frac{e^{i2m t}}{t^{\frac{3}{2}}}\int e^{i s }  
\left(g(\frac{s}{t},\Omega) -g(0)\right)
\sqrt{s} ds d\Omega
= 
-\frac{e^{i2m t}}{t^{\frac{3}{2}}}\int \left(e^{i s }-1\right)  
\partial_s^2\left(\left(g(\frac{s}{t},\Omega) -g(0)\right)
\sqrt{s}\right) ds d\Omega
\end{align*}

Introducing the function 
\[
h(x) = 
x^{\frac{3}{2}} \frac{\partial^2}{\partial x^2}
\left(\left(g(x,\Omega) -g(0)\right)
\sqrt{x} \right)
\]
we have that 
\[
R = 
-\frac{e^{i2m t}}{t^{\frac{3}{2}}}\int 
\left(e^{i s } -1\right)
\frac{1}{s^{\frac{3}{2}}}
h(\frac{s}{t}) ds d\Omega
\]
$h$ is bounded while the remaining contribution in the integral is integrable.
We can thus estimate the reminder \[
|R| \leq \frac{C}{t^{\frac32}} \|h\|_\infty.
\]
Rewriting $h$ in terms of $f_0$ and $p$ we get
\[
h(p) =  m^{3/2}\frac{f_0(0)}{\sqrt{2}}
+
2 \frac{\omega(p)}{(m+\omega)^{\frac{3}{2}}}   
\left((2p^2-m^2) f_0(p)
+ (4p^2+m^2) |p|f_0'(p)
+  \omega(p)^2 p^{2} f_0''(p)
    \right)
\]
hence, in view of the fact that the support of $\rho$ and thus of $f_0$ is compact, 
\[
\|h\|_\infty \leq  C\left(|f(0)|+\sum_{|\alpha|\leq 2} \| D^\alpha f\|_\infty
\right).
\]
Combining this with the estimate for the contribution $f_1$, we get the first estimate of this Lemma.
To get the second notice that
\[
\|D^{\alpha}\hat{g}\|_\infty \leq \|x^{\alpha} g\|_1 \leq C \|g\|_\infty \leq C\|\hat{g}\|_1
\]
where we used the fact that $g$ has compact support. With this observation the first statement implies the second.
\end{proof}

\subsection{Implementability at finite temperature}
We showed that the derivation induced by $\mathcal K$
can be implemented by an unbounded selfadjoint operator in the vacuum representation by proving that $P\mathcal K Q$ is Hilbert-Schmidt. For gauge-invariant quasifree states, Lundberg \cite{Lundberg} found the following criterion, slightly corrected.
\begin{proposition}
    Let $\omega_A$ denote the gauge invariant quasifree state with 2-point function
    \be
    \omega_A(\psi(f)\psi(g)^*)=\langle f,A g\rangle
    \ee
    with $0\le A\le 1$. Let $T$ be bounded and self-adjoint and such that $\sqrt{A}T\sqrt{1-A}$ is Hilbert-Schmidt. Then there exists a densely defined selfadjoint operator $\mathcal T$ in the GNS-representation $(\mathfrak H,\pi,\Omega)$ induced by $\omega_A$ such that
    \be
    e^{it\mathcal T}\pi(\psi(f))e^{-it\mathcal T}=\pi_A(\psi(e^{itT}f))\ ,
    \ee
    and $e^{it\mathcal T}$ is in the weak closure of the CAR algebra in the representation $\pi$.
\end{proposition}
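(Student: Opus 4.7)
The plan is to realize the GNS triple $(\mathfrak{H},\pi,\Omega)$ explicitly via the Araki--Wyss doubling construction on a fermionic Fock space $\mathcal{F}_-(\mathcal{H})\otimes \mathcal{F}_-(\overline{\mathcal{H}})$, in which $\pi(\psi(f))$ takes the form of a sum of a ``diagonal'' annihilation term built from $\sqrt{1-A}\,f$ on the first factor and an ``anomalous'' creation term built from $C\sqrt{A}\,f$ on the second factor, with a parity insertion $\Gamma=(-1)^N$ on the first factor to secure the CAR. In this concrete model I would look for $\mathcal{T}$ in the form
\begin{equation*}
\mathcal{T} \;=\; d\Gamma(T)\otimes 1 \;-\; 1\otimes d\Gamma(CTC^{-1}) \;+\; \mathcal{T}_{\mathrm{an}},
\end{equation*}
where $\mathcal{T}_{\mathrm{an}}$ is a quadratic ``anomalous'' operator of creation-creation / annihilation-annihilation type whose integral kernel is proportional to $\sqrt{A}\,T\sqrt{1-A}$ (and its adjoint). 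The diagonal piece generates the natural one-particle evolution on each factor; the anomalous piece is present precisely to cancel the cross terms which arise when one computes $[\mathcal{T},\pi(\psi(f))]$ and demands the result to equal $\pi(\psi(iTf))$.

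Next I would verify that $\mathcal{T}_{\mathrm{an}}$ is a densely defined closable operator on the finite-particle domain $\mathcal{D}\subset\mathfrak{H}$. The key point is that the hypothesis that $TAT(1-A)$ be trace class is equivalent, by cyclicity of the trace and $[A,1-A]=0$, to $\sqrt{A(1-A)}\,T$ (or equivalently $\sqrt{A}\,T\sqrt{1-A}$) being Hilbert--Schmidt: indeed
\begin{equation*}
\lVert \sqrt{A}\,T\sqrt{1-A}\,\rVert_{HS}^{2}=\Tr\bigl(\sqrt{1-A}\,T A T\sqrt{1-A}\bigr)=\Tr\bigl(TAT(1-A)\bigr).
\end{equation*}
This is precisely the Shale--Stinespring-type condition on the ``off-diagonal block'' of the symplectic generator that makes a quadratic Bogoliubov generator implementable. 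Given the Hilbert--Schmidt kernel, $\mathcal{T}_{\mathrm{an}}$ is a well-posed quadratic expression and is symmetric on $\mathcal{D}$; combined with the manifestly essentially self-adjoint diagonal part (since $T$ is bounded), one obtains essential self-adjointness of $\mathcal{T}$ on $\mathcal{D}$ by a Nelson-type commutator estimate, using $N\otimes 1+1\otimes N+1$ as comparison operator.

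The intertwining relation $e^{it\mathcal{T}}\pi(\psi(f))e^{-it\mathcal{T}}=\pi(\psi(e^{itT}f))$ is then verified by differentiating both sides at $t=0$ and checking the commutator matches the ansatz, then integrating. The final claim — that $e^{it\mathcal{T}}$ lies in the weak closure $\pi(\mathrm{CAR})''$ — I would handle by approximation: write $T$ as a strong limit of finite-rank self-adjoint operators $T_n$, chosen so that $\sqrt{A}\,T_n\sqrt{1-A}\to\sqrt{A}\,T\sqrt{1-A}$ in Hilbert--Schmidt norm. For each $T_n$ the corresponding $\mathcal{T}_n$ is a polynomial in finitely many $\pi(\psi(f_i))$ and $\pi(\psi(f_i))^*$ (with $(f_i)$ a basis in $\mathrm{ran}\,T_n$) up to a scalar, so $e^{it\mathcal{T}_n}\in\pi(\mathrm{CAR})$; weak convergence $e^{it\mathcal{T}_n}\to e^{it\mathcal{T}}$ follows from strong resolvent convergence of $\mathcal{T}_n$ to $\mathcal{T}$, which in turn is guaranteed by the Hilbert--Schmidt convergence of the anomalous kernels.

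The main obstacle will be the essential self-adjointness of $\mathcal{T}$ on the finite-particle domain: although each of the summands is symmetric and well-defined there, the anomalous piece is only form-bounded (not operator-bounded) by the number operator, so a Nelson commutator theorem is needed rather than Kato--Rellich. A second delicate point is the choice of the antilinear conjugation $C$ in the Araki--Wyss picture so that $CTC^{-1}$ is the ``correct'' transpose making the diagonal part implement the free one-particle dynamics consistently with the modular structure of $\omega_A$; getting this bookkeeping right is what pins down the precise form of $\mathcal{T}_{\mathrm{an}}$.
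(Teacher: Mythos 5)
The paper does not actually supply a proof of this proposition: it cites Lundberg \cite{Lundberg} for the statement and immediately passes to verifying that the hypotheses hold in the case of interest. However, a closely related argument \emph{is} sketched a few lines below, for the implementability of the cocycle $U_t$: one purifies $\omega_A$ to the pure quasifree state $\omega_{E_A}$ on $\mathrm{CAR}(\mathcal H\oplus\overline{\mathcal H})$, applies the Shale--Stinespring criterion there, and then argues that the implementer lies in the weak closure because it commutes with $\pi(\mathrm{CAR}(\overline{\mathcal H}))$ and with the grading unitary $\Gamma = e^{i\pi Q}$, and hence, after the Klein transformation, with the commutant of $\pi(\mathrm{CAR}(\mathcal H))$. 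Your overall framework (Araki--Wyss doubling $\cong$ Powers--St{\o}rmer purification; a diagonal plus anomalous quadratic generator whose off-diagonal block is $\sqrt{A}\,T\sqrt{1-A}$; and the observation that the Hilbert--Schmidt norm of that block is $\Tr(TAT(1-A))$) is sound and is indeed the standard way to produce $\mathcal T$, so the first half of your proposal is essentially the same as the reference.

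Where you diverge — and where I would push back — is the final step, placing $e^{it\mathcal T}$ in $\pi(\mathrm{CAR}(\mathcal H))''$. You propose approximating $T$ by finite-rank operators $T_n$, claiming that each $e^{it\mathcal T_n}$ lies in $\pi(\mathrm{CAR})$ and that strong resolvent convergence follows from Hilbert--Schmidt convergence of the anomalous kernels. This is both harder and riskier than necessary. First, Hilbert--Schmidt convergence of the anomalous piece controls only the anomalous term of $\mathcal T_n - \mathcal T$; the diagonal $d\Gamma$-pieces do not converge in any norm controlled by the number operator from $T_n\to T$ strongly alone, so strong resolvent convergence requires a separate argument. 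Second, the phase ambiguity of the implementer must be fixed coherently in $n$ to even speak of convergence of unitaries. Third, and most to the point, this machinery is unnecessary: once one knows the automorphism $\alpha_{e^{itT}}$ of $\mathrm{CAR}(\mathcal H)$ extends to $\mathrm{CAR}(\mathcal H\oplus\overline{\mathcal H})$ acting trivially on the second factor, the implementer (unique up to phase in an irreducible Fock representation) necessarily commutes with $\pi(\mathrm{CAR}(\overline{\mathcal H}))$, and the Klein-transformed version of this algebra is exactly $\pi(\mathrm{CAR}(\mathcal H))'$; the double-commutant theorem then puts $e^{it\mathcal T}$ in the weak closure with no approximation at all. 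You should replace your last paragraph with this commutant argument. One smaller caveat: your claim that ``$TAT(1-A)$ trace class'' and ``$\sqrt{A}T\sqrt{1-A}$ Hilbert--Schmidt'' are equivalent ``by cyclicity of the trace'' is a little too quick in the direction you actually need — $\Tr(TAT(1-A))<\infty$ does imply $\Tr(\sqrt{1-A}TAT\sqrt{1-A})<\infty$, but cyclic permutation of $\sqrt{1-A}$ must be justified rather than asserted, since $TAT\sqrt{1-A}$ is not known a priori to be trace class or Hilbert--Schmidt; it is safer to state only the needed implication and argue it directly.
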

We show that Lundberg's criterion is satisfied by $\mathcal K$
for the quasifree state characterized by $A=(1+e^{-\beta D})^{-1}$.
As before we use the decomposition 
\be
\sqrt A \mathcal K\sqrt{1-A}=P\sqrt A \mathcal K\sqrt{1-A}P+P\sqrt A \mathcal K\sqrt{1-A}Q+Q\sqrt A \mathcal K\sqrt{1-A}P+Q\sqrt A \mathcal K\sqrt{1-A}Q\ .
\ee
$A$ commutes with $P$ and $Q$. We already know that $P\mathcal K Q$ and $Q\mathcal K P$ are Hilbert Schmidt. Moreover, $Q\sqrt A$ and $P\sqrt{1-A}$ are exponentially decaying matrix-valued functions of momentum. Since the integral kernel of $\mathcal K$  satisfies the estimate
\be
|\mathcal K(p,q)|\le h(p-q) 
\ee
with an $\mathrm L^2$-function $h$, we conclude that also the diagonal terms $P\sqrt A \mathcal K\sqrt{1-A}P$ and $Q\sqrt A \mathcal K\sqrt{1-A}Q$ are Hilbert-Schmidt.

A similar argument shows that also the automorphism induced by the cocycle $U_t\doteq e^{it(D+\mathcal K)}e^{-itD}$ can be implemented by unitaries in the weak closure. For this purpose we use the purification (see, e. \cite{PowerStormer}) of a gauge invariant quasifree state $\omega_A$ on $\mathrm{CAR}(\mathcal H)$ to a pure quasifree state $\omega_{E_A}$ on $\mathrm{CAR}(\mathcal H\oplus\overline{\mathcal H})$ with the projection
\be
E_A\doteq \left(\begin{array}{cc}
    A & \sqrt A\sqrt{1-A}  \\
   \sqrt A\sqrt{1-A}   & 1-A
\end{array}\right)\ .
\ee
The unitary 
\be
\mathcal U\doteq
\left(\begin{array}{cc}
     U&0  \\
     0&1 
\end{array}\right)
\ee
induces an automorphism of the subalgebra associated to $\mathcal H$ and acts trivially on the subalgebra associated to $\overline{\mathcal H}$. It can be implemented if $[E_A,\mathcal U]$ is Hilbert-Schmidt. This condition is satisfied if $\sqrt A(U-1)\sqrt{1-A}$ and $\sqrt{1-A} (U-1)\sqrt{A}$ are Hilbert-Schmidt. In our situation, this follows by essentially the same arguments as above. Moreover, the implementing unitary commutes with $\pi(\mathrm{CAR}(\overline{\mathcal H}))$ and also with the unitary $\Gamma\doteq e^{i\pi Q}$, where $Q$ is the charge operator. Hence, after a Klein transformation induced by $\Gamma$, it commutes with the commutant of $\pi(\mathrm{CAR}(\mathcal H))$, thus it is in the weak closure.

Actually, $\mathcal U$ admits an expansion as a power series of time ordered products of the operator valued function $t\mapsto e^{itH}Ke^{-itH}$, where $H$ is the modular Hamiltonian associated to $\omega_A$. For finite rank operators $\mathcal{K}_1,\dots,\mathcal{K}_n$ we have the formula for the truncated $n$-point function
of the operators $K_i=\sum_j \psi(f_{ij})\psi(g_{ij})^*$
\be\label{eq:formula-omega_truncated}
\omega_A^c(K_1,\dots,K_n)=(-1)^n\sum_{\sigma\in P^1_n}\Tr A_{\sigma_n1}\mathcal{K}_1A_{1\sigma_2}\dots A_{\sigma_{n-1}\sigma_n}\mathcal{K}_n
\ee
with $A_{ij}\doteq A$ for $i<j$, $A_{ij}=(A-1)$ for $i>j$ and $P^1_n$ the group of permutations $\sigma$ of $\{1,\dots,n\}$ with $\sigma_1=1$. To generalize the formula to operators $K_i$ satisfying the Lundberg condition, one may rewrite the trace for $n>1$ as
\[
\Tr\prod_1^n\sqrt{A_{\sigma_{i-1}\sigma_i}}\mathcal{K}_i\sqrt{A_{\sigma_i\sigma_{i+1}}}\]
with $\sigma_0\doteq n$ and $\sigma_{n+1}=1$.  Since for at least two values of $i$ the orders of the pairs $\sigma_{i-1}\sigma_i$ and $\sigma_{i}\sigma_{i+1}$ differ, the corresponding factors are Hilbert-Schmidt. Since the other factors are bounded operators, the full product is trace class, and the truncated $n$-point function is well defined.

For the expansion of the expectation value of $\mathcal U(t)$, $t>0$, we get
\be
\omega_A(\mathcal{U}(t))=\exp \sum_{n=1}^\infty\int_{t>t_1>\dots t_n>0}\omega_A^c(K(t_1),\dots ,K(t_n))\
\ee
with
\be
\left|\int_{t>t_1>\dots t_n>0}dt_1\dots dt_n\omega_A^c(K(t_1),\dots ,K(t_n))\right|\le \frac{1}{n}||\sqrt{A}\mathcal K\sqrt{1-A}||_{HS}^2 ||\mathcal{K}||^{n-2}t^n 
\ee
for $n>1$. The term of 1st order has to be renormalized, but the terms of higher order can be summed provided $t||K||<1$.

The integrand can be analytically extended in the time variables to the region 
\be
\{(z_1,\dots,z_n)\in\mathbb{C}|\beta>\Im z_1>\dots\Im z_n>0\}.
\ee
Once the boundedness of the function is proven, iterated application of the three-line theorem yields for $\beta>u_1>\dots u_n>0$ the estimate
\begin{equation}
\label{eq:estimate-truncated}
|\omega_A^c(K(iu_1)\dots K(iu_n))|\le (n-1)!||\sqrt{A}\mathcal K\sqrt{1-A}||_{HS}^2 ||\mathcal{K}||^{n-2}\ .    
\end{equation}
\subsection{Power expansion of the perturbed Fermi factor}\label{se:derivatives}
\begin{proposition}\label{prop:derivatives}
    Let $D\doteq D_0+\lambda K$ with $D_0$ selfadjoint and $K$ selfadjoint and bounded. Let
    \begin{equation}
        A(u)\doteq(e^{uD}+e^{(u-1)D})^{-1}
    \end{equation}
    for $u\in[0,1)$ and $A(u)=(-1)^{[u]}A(u-[u])$ for $u\in\mathbb{R}$, where $[u]$ denotes the largest integer $\le u$.

    Then
    \begin{equation}
        \frac{d^n}{d\lambda^n}A(u)=(-1)^n n!\int_{0<x_i<1}dx_1\dots dx_n A(u-\sum x_i)\prod_{i=1}^nKA(x_i)\ .
    \end{equation}
    
\end{proposition}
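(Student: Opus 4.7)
The plan is to proceed by induction on $n$, the case $n=0$ being tautological. The key algebraic ingredients are Duhamel's formula $\frac{d}{d\lambda} e^{tD} = \int_0^t e^{(t-s)D} K e^{sD}\,ds$ (valid for any $t\in\mathbb{R}$), and the antiperiodicity $A(u+1) = -A(u)$, which holds for every $u\in\mathbb{R}$ as an immediate consequence of the extension $A(u) = (-1)^{[u]} A(u-[u])$.

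For the base case $n=1$, I would write $A(u) = e^{-uD}(1+e^{-D})^{-1}$ on $[0,1)$ and differentiate each factor via Duhamel together with $\frac{d}{d\lambda} F^{-1} = -F^{-1}\dot F F^{-1}$. This produces one integral over $[0,u]$ and one over $[0,1]$; splitting the latter at $s=u$ and using the algebraic identity $e^{-(u+1-s)D}(1+e^{-D})^{-1} + e^{-(u-s)D}(1+e^{-D})^{-1} = e^{-(u-s)D}$ to collapse the two $[0,u]$ pieces, together with the explicit form of the extension on $(-1,0)$, rewrites the result as $\frac{dA(u)}{d\lambda} = -\int_0^1 A(u-x)\,K\,A(x)\,dx$.

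For the induction step, assume the formula at order $n$ and differentiate under the integral sign. The product rule yields $n+1$ terms, each containing a new integration in $x_{n+1}\in(0,1)$: one from $\frac{d}{d\lambda} A(u-\sum x_i)$, producing the factor $A(u-\sum x_i - x_{n+1})\,K\,A(x_{n+1})$ to the \emph{left} of $K A(x_1)\cdots K A(x_n)$, and one for each $j\in\{1,\dots,n\}$, producing $A(x_j-x_{n+1})\,K\,A(x_{n+1})$ in place of $A(x_j)$. A cyclic relabeling of the variables immediately recasts the first term as the desired standard integral over $(0,1)^{n+1}$. For the $j$-th term I substitute $x_j' = x_j - x_{n+1}$, whose range is $(-x_{n+1},1-x_{n+1})$; splitting this range at $0$ and performing the shift $x_j'' = x_j' + 1$ on the negative part, the two sign factors coming from $A(x_j') = -A(x_j'')$ and from the companion shift $A(u-\sum_{i\neq j} x_i - x_j' - x_{n+1}) = -A(u-\sum_{i\neq j} x_i - x_j'' - x_{n+1})$ cancel exactly. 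The two subranges then glue to $(0,1)$, and a final reshuffling slots $x_{n+1}$ into position $j+1$, yielding another copy of the standard integral. Summing the $n+1$ identical copies and combining with the inductive prefactor $(-1)^n n!$ and the extra $(-1)$ from the base case produces $(-1)^{n+1}(n+1)!$, closing the induction.

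The main obstacle is the sign bookkeeping in the induction step: one must check that the two antiperiodicity-induced minus signs — the one from the shift of $A(x_j')$ itself and the one from the concurrent shift of the argument of the leading factor $A(u-\sum\cdots)$ — cancel exactly. This cancellation is precisely the combinatorial heart of the proposition and explains why the extension $A(u+1) = -A(u)$ is what makes the tidy $(-1)^n n!$ normalization come out right.
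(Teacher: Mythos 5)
Your proof is correct and follows essentially the same approach as the paper: induction on $n$, with the base case via Duhamel's formula and the antiperiodicity $A(u+1)=-A(u)$, and the induction step via differentiating under the integral, shifting variables, and using the antiperiodicity to glue the ranges back to $(0,1)$ with the two sign flips cancelling. The only cosmetic difference is in the base case, where you differentiate the factorization $A(u)=e^{-uD}(1+e^{-D})^{-1}$ directly, while the paper writes $\frac{d}{d\lambda}A(u)=-A(u)\,\frac{d}{d\lambda}\bigl(e^{uD}+e^{(u-1)D}\bigr)\,A(u)$ and uses $A(u)e^{vuD}=A(u-uv)$; both lead to the same one-variable integral.
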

\begin{proof}
    Let $n=1$. We have
    \begin{equation}
        \frac{d}{d\lambda}A(u)=-A(u)\frac{d}{d\lambda}(e^{uD}+e^{(u-1)D})A(u)\ .
    \end{equation}
    Using the formula
    \begin{equation}
        \frac{d}{d\lambda}e^{uD}=\int_0^1 dv e^{vuD}uKe^{(1-v)uD}
    \end{equation}
    and the corresponding formula for $e^{(u-1)D}$,
    we obtain
    \begin{equation}
     \frac{d}{d\lambda}A(u)=-A(u)\int_0^1 dv(e^{vuD}uKe^{(1-v)uD}+e^{{v(u-1)D}}(u-1)Ke^{(1-v)(u-1)D})A(u) \ .  
    \end{equation}
    We now use the fact that for $v\in[0,1]$
    \begin{equation}
        A(u)e^{vuD}=A(u-uv)
    \end{equation}
    holds. By this formula all the exponentials of $D$ can be absorbed by modifying the arguments of $A$, and we obtain
    \begin{equation}
        \frac{d}{d\lambda}A(u)=-\int_0^1
dv(A(u-uv)uKA(uv)+A(u-(u-1)v)(u-1)KA(1+(u-1)v)    \end{equation}
Note that the arguments of $A$ in the first term vary between $0$ and $u$ and in the second term between $u$ and $1$.
We introduce new variables: In the first term we set $x\doteq uv$ with $dx=udv$, in the second term $x\doteq 1+(u-1)v$ with $dx=(u-1)dv$. We get
\begin{equation}
 \frac{d}{d\lambda}A(u)=-\int_0^u dx A(u-x)KA(x)+\int_u^1 dx A(u+1-x)KA(x)\ .   
\end{equation}
By the definition of $A$ for arguments outside of $[0,1)$ we have $A(u+1-x)=-A(u-x)$, hence get the simple formula
\begin{equation}
    \frac{d}{d\lambda}A(u)=-\int_0^1 dx A(u-x)KA(x)\ .
    \end{equation}
We now proceed by induction and assume the formula for the $n$th derivative. Then
\begin{equation}
    \begin{split}
        &\frac{d^{n+1}}{d\lambda^{n+1}}A(u)=(-1)^nn!\int_{0<x_i<1}dx_1\dots dx_n \frac{d}{d\lambda}\bigl(A(u-\sum x_i)\prod_{i=1}^n(KA(x_i))\bigr)\\
     = &(-1)^{n+1}n!\int_{0<x_i<1}dx_1\dots dx_n \int_0^1 dx \left(A(u-\sum x_i-x)KA(x)\prod_{i=1}^n(KA(x_i))\right.\\
     &+\left.\sum_{k=1}^n A(u-\sum x_i)K\prod_{i=1}^{k-1}(A(x_i)K)A(x_k-x)KA(x)\prod_{i=k+1}^n(KA(x_i))\right)  
    \end{split}
\end{equation}
In a last step we introduce new variables: in the first term we set $y_1\doteq x, y_i\doteq x_{i-1}, i>1$ and in the term labeled by $k$, we set $y_i\doteq x_i$, $i<k$, $y_k\doteq x_k-x$, $y_{k+1}\doteq x$ and $y_i\doteq x_{i-1}$, $i>(k+1)$. 
In the term labeled by $k$, the integration domain for $y_k$ is $[-x,1-x]$. For $y_k<0$ we replace it by $y_k+1$. This changes the sign for $A(y_k)$ and for $A(u-\sum y_k)$, thus all the $(n+1)$ terms give the same contribution, and we obtain the proposed formula or the $(n+1)$th derivative.    
\end{proof}

{\bf Data availability:} Data sharing not applicable to this article as no datasets were generated or analyzed during
the current study. 

{\bf Conflicts of interest statement:} All authors have no conflicts of interest.

\end{document}